\documentclass[format=acmsmall, review=false]{acmart}
\usepackage{acm-ec-26}
\usepackage{booktabs} 
\usepackage[ruled]{algorithm2e} 

\SetAlFnt{\small}
\SetAlCapFnt{\small}
\SetAlCapNameFnt{\small}
\SetAlCapHSkip{0pt}
\IncMargin{-\parindent}

\setcitestyle{authoryear}

\newtheorem{theorem}{\sc Theorem}[section]  
\newtheorem*{theorem*}{\sc Theorem}
\newtheorem{example} [theorem]{\sc Example}  
\newtheorem{definition}  [theorem]{\sc Definition}
\newtheorem{lemma}[theorem]{\sc Lemma}  
\newtheorem{claim}[theorem]{\sc Claim}

\newtheorem{proposition}[theorem]{\sc Proposition}
\newtheorem{remark}{\sc Remark}
\newtheorem{assumption}{\sc Assumption}

\newcommand{\Hos}{H}
\newcommand{\Doc}{S}
\newcommand{\Com}{K}
\newcommand{\Match}{\mathcal{M}}

\usepackage{amsmath}
\usepackage{amsthm}
\usepackage{graphicx}
\usepackage{comment}
\usepackage{xcolor}
\usepackage{url}
\usepackage{tikz,pgfplots}
\usetikzlibrary{shapes.geometric, arrows}
\usepackage{thm-restate}
\usepackage{natbib}

\newcommand{\Xomit}[1]{}

\newcommand{\randbud}[1]{\mathcal{B}_{#1}}
\newcommand{\z}{{\bf z}}
\newcommand{\y}{{\bf y}}
\newcommand{\x}{{\bf x}}
\newcommand{\prices}{{\bf p}}

\SetKwInput{KwData}{Input}
\SetKwInput{KwResult}{Output}

\def\final{0}  
\def\iflong{\iffalse}
\ifnum\final=0  
\newcommand{\thanh}[1]{{\color{blue}[{\small Thanh: \bf #1}]\marginpar{\color{red}*}}}
\newcommand{\ysl}[1]{{\color{brown}[{\small Young-San: \bf #1}]\marginpar{\color{red}*}}}
\newcommand{\haoyu}[1]{{\color{olive}[{\small Haoyu: \bf #1}]\marginpar{\color{red}*}}}
\else 
\newcommand{\thanh}[1]{}
\newcommand{\ysl}[1]{}
\newcommand{\haoyu}[1]{}
\fi  

\title[Matching with Committee Preferences]{Matching with Committee Preferences}

\author{Haoyu Song, Th\`anh Nguyen , Young-san Lin}

\begin{abstract}
We study a many-to-one matching model inspired by school choice, where schools evaluate applicants using multiple rankings rather than a single priority order. We model each school’s evaluation with social choice criteria to reflect the school’s internal ranking process. In particular, we define acceptable choices as candidates ranked above a top percentile of the accepted cohort by a sufficient number of evaluators. Stability is then defined in terms of acceptability: accepted candidates must receive strong support, while rejected candidates receive at most weak support. Since exact acceptability and stability may not exist, we construct approximately stable outcomes using a new equilibrium concept that combines matching with a Lindahl equilibrium over ordinal preferences, providing a flexible, equilibrium-based framework for committee-based matching markets.
\end{abstract}

\begin{document}

\begin{titlepage}

\maketitle

\vspace{1cm}
\setcounter{tocdepth}{2} 
\tableofcontents

\end{titlepage}


\section{Introduction}

In many-to-one matching markets, such as school choice and residency matching, institutions may report a single ranking of applicants, but it often conceals underlying disagreements. Decisions reflect multiple evaluations, based on distinct criteria and the differing judgments of committee members. In university admissions, applicants are assessed on exam scores, research potential, recommendation letters, and diversity considerations. Hospitals assembling residency cohorts weigh interview performance, clinical evaluations, staffing needs, and institutional priorities. Hiring decisions similarly integrate assessments that reflect varying views on technical ability, cultural fit, and compliance requirements.

A standard response to multiple rankings is to collapse them into a single \emph{weak} order and resolve remaining indifferences through tie-breaking rules \citep{abdulkadiroglu2009strategy, erdil2008matter}, or to address conflicts by imposing reserves for different applicant types \citep{abdulkadiroglu2003school,hafalir2013effective}.
 Both approaches effectively impose an external resolution of ranking disagreements. In contrast, we internalize the multiplicity of rankings directly into each school’s choice, defining acceptance and rejection through social choice–theoretic criteria. This approach aligns closely with classical matching theory: the Gale–Shapley framework \citep{gale_shapley_1962} and its extension via choice functions \citep{hatfield_milgrom_2005} evaluate candidates relative to the \emph{admitted cohort}. Acceptance requires a candidate to have sufficient support to displace an admitted peer, while rejection follows from the absence of such support. By framing choices this way, our method naturally accommodates multi-dimensional rankings while preserving the core intuition of cohort-based stability.

However, extending this logic to committee-based preferences is not straightforward. In the classical Gale–Shapley setting, rejection is justified by comparing an unmatched candidate to the worst admitted candidate. With multiple committee rankings, such a comparison is neither natural nor unique: different rankings may identify different “worst” candidates. Rejection therefore involves two decisions—choosing a benchmark within each ranking and determining how much overall support a candidate needs across rankings to displace an admitted peer.

We introduce a parameterized $(\vec{\alpha},\beta)$ framework to formalize how committees aggregate individual judgments. 
Consider a school $h$ with capacity $c_h$ admitting candidates $C^*$ from the applicant pool $S$, where non-wastefulness requires $|C^*| = \min\{c_h, |S|\}$. 
When applicants exceed capacity, a committee determines preferences: each member $k \in K_h$ provides an ordinal ranking $\succ_k$, and the parameters $(\vec{\alpha},\beta)$ translate these rankings into a collective decision that balances support across the committee. 
For each member $k$, $\alpha_k \in [1, c_h]$ sets the benchmark for evaluating candidates relative to the admitted cohort: $\alpha_k = 1$ compares to the top-ranked member of $C^*$, $\alpha_k = c_h$ to the lowest, and intermediate values correspond to positions in between. 
Member $k$ \emph{supports} a candidate if the candidate ranks at least as high as this benchmark, and a candidate’s overall support is the number of members who support them.\footnote{Weighted support or other scoring rules can be incorporated by modifying the framework; see Remark~\ref{remark:1}.}

Given parameters $(\vec{\alpha},\beta)$, a set of admitted candidates $C^*$ at a single school $h$ is \emph{acceptable} if every admitted candidate has support at least $\beta$, while every rejected candidate has support at most $\beta$—that is, each admitted candidate has weakly higher support than any rejected candidate when using the accepted cohort as the reference. 
This notion extends naturally to multiple schools: a matching is \emph{stable} if, at each school, the support of every admitted student is at least as high as that of any student assigned to a less-preferred school.

A key distinction of our framework is that, because each committee member uses the accepted cohort as the reference for benchmarking, the acceptable choice is \emph{endogenous} even for a single school: a student’s acceptance depends on the benchmark for each ranking, which in turn depends on the set of accepted students. Endogenous benchmarks capture interactions across committee members: acceptance standards depend on who else is accepted rather than on fixed thresholds, reflecting the iterative, deliberative nature of committee decisions.\footnote{Related ideas appear in economic theory, including choice with endogenous reference points, reference-dependent preferences \cite{koszegi2006model, masatlioglu2005reference}, and salience-driven evaluations \cite{bordalo2013salience}.}

Importantly, this endogeneity allows the framework to capture a broad spectrum of choice concepts, connecting classical social choice notions such as the Condorcet winning set with standard matching-theoretic properties like responsiveness and path independence. The generality, however, comes at a cost: for a single institution, finding an acceptable set reduces to a multiwinner problem aggregating committee preferences, which can be empty for certain parameters. Furthermore, even when single-institution choices are well-defined, they do not necessarily satisfy substitutes, so a stable matching may fail to exist at the market level.

Despite impossibility results, our main results provide constructive guarantees: we can produce solutions that are \emph{nearly} acceptable and stable. 
We show that an acceptable choice exists if each $\alpha_k$ is perturbed by roughly the size of the committee, and achieving near stability may similarly require adjusting the school’s capacity by a comparable magnitude.  

This is feasible in practice because, in many applications, the number of committee members is small relative to the number of available positions, making the necessary perturbations minimal in percentile terms. 
For example, with 10 committee members and 1000 positions, adjusting each $\alpha_k$ by the committee size alters the benchmark percentile by around 1\%.  Such minor adjustments preserve the original committee benchmarks while ensuring that the resulting matching is nearly stable.

\subsection*{Related Work}

Our approach differs in that it respects schools’ internal rankings in their actual selection decisions. In standard models, schools’ priorities are often represented as \emph{weak orders}—allowing ties—within the deferred acceptance framework. Ties naturally arise when schools evaluate applicants on multiple criteria, and a common solution is to resolve them using a uniform lottery \citep{abdulkadiroglu2009strategy}. Although strategy-proof, such lotteries treat all tied applicants equally, overlooking subtle differences in schools’ internal evaluations and potentially reducing fairness or efficiency. Alternative tie-breaking methods \citep{erdil2008matter, ashlagi2019assigning} similarly do not account for these internal rankings. Another line of work addresses ties through reserves or quotas \citep{abdulkadiroglu2003school, hafalir2013effective, ehlers2014school, sonmez2022affirmative}, which avoid tie-breaking but require \emph{ex ante} capacity allocations based on prior information, which can be difficult to justify.


Our framework sidesteps these issues by directly incorporating schools’ multiple criteria into the selection process. Prior research along these lines generally falls into two strands.  The first highlights \emph{impossibility and computational hardness}: multi-modal preferences may admit no stable matching, and determining whether a stable matching exists can be NP-hard \citep{chen2018stable, chen2021fractional, wen2022position}. The second provides \emph{positive algorithmic or fairness results} under restrictive assumptions; for example, \cite{boehmer2025proportional} links matching to multiwinner social choice with dichotomous preferences, while \cite{school_japan} applies a weaker M-fairness notion. By contrast, we allow general rankings and recover stability in an approximate sense, yielding solutions that remain meaningful even when exact stability fails.

For single-school choice, our approach relates closely to the large literature on multiwinner voting, which generally distinguishes between ranking-based and approval-based preferences. Ranking-based rules, such as the Chamberlin–Courant \citep{chamberlin1983representative} and Monroe \cite{monroe1995fully} rules, leverage full preference orders to maximize voter satisfaction. Approval-based rules, including Proportional Approval Voting (PAV) and rules satisfying extended justified representation (EJR) \citep{aziz2017justified}, allow voters to indicate which candidates they find acceptable, emphasizing proportionality and fairness. Ranking rules preserve rich ordinal information, while approval rules provide clear thresholds for representation, highlighting a trade-off between preference granularity and representation.

Our notion of acceptability extends Condorcet-winning sets \citep{elkind2015condorcet}  by combining the strengths of both approaches. Unlike classical Condorcet sets, which compare rejected candidates only to each voter’s top accepted candidate, our method compares each rejected candidate against every voter’s top percentile of candidates, which serves as that voter’s approval set. Furthermore, our concept ensures that accepted candidates receive strong approval, while rejected candidates receive at most weak approval. Prior work, including \cite{elkind2015condorcet,CharikarLassotaRamakrishnanVettaWang2025,voting_us_26}, guarantees only weak approval for rejected candidates, making our approach a stricter and more informative generalization.

From a technical perspective, our approach departs from classical methods in matching theory. We adopt an \emph{equilibrium relaxation} followed by \emph{rounding} \citep{matching2018, nguyen2023complementarities}: we first use an equilibrium concept to capture stability and acceptability in a fractional solution, and then round this solution to an integral allocation that nearly preserves these properties. A key distinction from prior work \citep{matching2018} and followed up works \citep{allocation2021, nguyen2021stability} is that our setting combines social choice  with stability. This requires a new form of \emph{equilibrium relaxation}, which incorporates additional interactions and constraints related to public goods, rather than focusing solely on private goods as in previous frameworks.

In particular, for the single-school choice problem, our work builds on the Lindahl equilibrium with ordinal preferences (LEO) \citep{nguyen-song} and its extensions in social choice \citep{voting_us_26}.
We adopt a similar equilibrium relaxation but employ a different and more delicate rounding procedure—based on iterative rounding \citep{kiraly2012degree}—to guarantee stronger structural properties than in prior work.
Specifically, our approach ensures not only that rejected candidates receive weak support, but also that accepted candidates receive sufficiently strong support, a requirement that is not imposed in these earlier papers.

For multiple schools, we extend this approach to stable matching by introducing a new equilibrium concept that combines LEO with stability in a single framework.
By treating each school–student pair as a public good, the equilibrium captures preferences on both sides of the market.
We call this concept the Matching Equilibrium with Ordinal Preferences (MEO), which provides a unified way to analyze multi-winner allocation across multiple institutions.

\section{Model}

Our model applies to a wide range of applications, including school choice and resident matching, among others. For expositional clarity, we adopt the terminology of school choice  throughout the paper.

Let $\Hos$ denote the set of $m$ schools and $\Doc$ denote the set of $n$ students (we sometimes use the term ``candidates” or ``applicants" interchangeably). Each student $i \in \Doc$ has a strict preference ordering $\succ_i$ over $\Hos \cup \emptyset$ , and each school $h \in \Hos$ has a capacity $c_h$ and a selection committee $\Com_h$, where each member $k \in \Com_h$ has a strict preference ordering $\succ_k$ over $\Doc\cup \emptyset$. Denote $K = \bigsqcup_h K_h$, the disjoint union of the sets $K_h$. Here we assume that students strictly prefer attending a school to remaining unmatched, and that schools strictly prefer filling a seat to leaving capacity unfilled. That is, the outside option $\emptyset$ is the least preferred option in each ranking.\footnote{This assumption can be relaxed by introducing ``dummy'' students or ``dummy'' schools representing these outside options.}


We use $\Match$ to denote the set of feasible matching between students and schools, where each student is only matched/admitted to one school and a school $h$ is matched to a set of at most $c_h$ students. Given a feasible matching $M\in \Match$ and a school $h$, we denote the set of students admitted by a school $h$ as $M_h$, because of feasibility, we have $|M_h|\leq c_h$. Given a student $i$, we denote the school to which $i$ is admitted as $M_i$. $H_i=\emptyset$ means that student $i$ is unmatched. We abuse the notation and say that $(i,h)\in M$ when $i\in M_h$ and $h = M_i$.

\subsection{Acceptable Choice Set}

We begin by formulating the selection problem from the perspective of a single school. 
Unlike the classical matching literature, decisions are made by a committee rather than a single evaluator. Committee members evaluate candidates along different criteria, leading to heterogeneous and potentially conflicting rankings. When multiple candidates must be chosen, the problem is not to aggregate these rankings into a single order, but to determine a set of candidates that is collectively defensible. This is inherently a social choice and multi-winner problem: disagreement is unavoidable, yet the outcome must withstand challenges from alternative candidates under the committee’s diverse preferences. Our notion of acceptable choice set is designed to capture this requirement.

Fix a school $h$ with capacity $c_h$ and selection committee $K_h$. Each committee member $k \in K_h$ is associated with an integral \textit{rank parameter} $\alpha_k\in \{1,2,...,c_h\}$ and we use $\vec{\alpha}$ to denote the corresponding \textit{rank vector} of the school.

Given a set of students $C^* \subseteq S$,  we use $C^*_{\alpha_k}$ to denote the $\alpha_k$-th favorite student in $C^*$ and set $C^*_{\alpha_k} = \emptyset$ if $|C^*| < \alpha_k$. We will call $C^*_{\alpha_k}$ the \textit{$\alpha_k$-rank student} of $C^*$ according to the committee member  $k\in K_h$.


Now, we say that a committee member $k$ \emph{$\alpha_k$-approves} a student $i$
with respect to $\vec{\alpha}$ a $C^*$ if
$i \succeq_k C^*_{\alpha_k}.$ When clear from the context, we will drop $\alpha_k$ and $C^*$ and say $k$ approves a student $i$.

Next, the degree of support for $i$ at school $h$ is measured by the number  of committee members who 
approves $i$ with respect to $\vec{\alpha}$ and $C^*$:
\[
\mathrm{Support}^{\Vec{\alpha}}_h(i,C^*) := \bigl| \{ k \in \Com_h : i\succeq_k C^*_{\alpha_k} \} \bigr|.
\]

\begin{remark}\label{remark:1}

In our definition of support, we assume that each committee member evaluates candidates in a binary manner—approving those above her rank threshold—and that support is the number of approving members. These assumptions can be relaxed. Member importance can be incorporated by assigning weights and defining support as the total weight of approving members. More generally, monotonic step-wise scoring rules can be accommodated by introducing multiple copies of a member with different thresholds and weights.\footnote{For example, a percentile-based scoring rule can assign 2 points to candidates in the top 25\%, 1 point to those in the 25–50\% range, and 0 points otherwise; this can be implemented by introducing two copies of the member with thresholds at the 25th and 50th percentiles.} For ease of presentation, we focus on this class of support measures.
\end{remark}

Using the support functions, we define our main solution concept for the single-school choice problem, called \emph{acceptable choice set}.

\begin{definition}[Acceptable choice set] \label{def:acs}
    Fix a school $h$ with capacity $c_h$,  approval thresholds $\vec{\alpha}$, $\beta$, and a set of candidates $C \subseteq S$. A set $C^* \subseteq C$ is a \emph{acceptable choice set} with respect to  the set of applicants $C$ if the following holds: 
    \begin{enumerate} 
    \item Non-wastefulness: $|C^*| = \min(c_h,|C|)$; 
    \item Individual Rationality: $ \mathrm{Support}^{\vec{\alpha}}_h(i, C^*)
    \;\ge\; \beta   \text{ for every } i \in C^*$; 
    \item No blocking: $\mathrm{Support}^{\vec{\alpha}}_h(j, C^*)\le \beta  \text{ for every } j \in C \setminus C^*. $
    \end{enumerate}
\end{definition}

The non-wastefulness condition requires a school to fill its available positions by selecting as many candidates as possible, up to its capacity. The individual rationality and no-blocking conditions use a threshold parameter $\beta$ to determine which candidates are accepted or rejected. Specifically, candidates with support strictly above $\beta$ are accepted, those with support strictly below $\beta$ are rejected, and candidates with support exactly equal to $\beta$ lie on the boundary. In some cases, tie-breaking among boundary candidates is necessary, as a complete separation between accepted and rejected candidates may not be possible. We illustrate this  with the following example.

\begin{example} 
Consider two rankings over three candidates: 
\[
a \succ_1 b \succ_1 c \quad (\alpha_1 = 1), \qquad a \succ_2 c \succ_2 b \quad (\alpha_2 = 1),
\] 
and suppose the school has capacity 2. That is, we must select two candidates from \(\{a,b,c\}\).

The set \(\{b,c\}\) is not acceptable, as candidate 1 has strictly higher support. The acceptable choice sets are \(\{a,b\}\) or \(\{a,c\}\), with \(\beta = 0\). This example highlights that  tie-breaking is required: due to the capacity constraint, candidates with the same level of support can be both accepted and rejected. 
\end{example}

The next examples illustrate that the acceptable choice set framework can capture well-known concepts in social choice, such as the Condorcet winner or winning set. In these examples, the no-blocking condition based on $\beta$ provides a quantitative explanation for rejected candidates: a candidate is rejected if they do not receive sufficient support. In general, $\beta$ should be chosen as small as possible, and our results establish an asymptotically optimal bound.

\begin{example}[Condorcet winner and Condorcet winning set]\label{ex:3}
Consider a school~$h$ with capacity~$c_h$ and a set~$K$ of committee members. 
Suppose that $\alpha_k = 1$ for all $k \in K$. 
For a given parameter~$\beta$, an \emph{acceptable choice} satisfies two conditions:
\begin{enumerate}
\item The school selects $c_h$ candidates such that no rejected candidate is preferred to the best candidate in the selected set by more than $\beta$ committee members. 
When $\beta = \tfrac{1}{2}|K|$ and $c_h = 1$, this requirement coincides with the definition of a Condorcet winner. 
For general $c_h$, it corresponds to a Condorcet winning set, which need not exist when $c_h \le 3$.
\item In addition, every selected candidate must receive at least $\beta$ support relative to the selected set: each accepted candidate is ranked highest among the selected candidates by at least $\beta$ committee members.
\end{enumerate}
\end{example}

\begin{example}[$(t,\gamma)$-undominated set]Given a committee~$K$, \citet{voting_us_26} study a generalization of the Condorcet winning set by requiring that, for each unselected candidate, the fraction of committee members who prefer that candidate to the $t$-th most preferred selected candidate is at most~$\gamma$. This condition corresponds exactly to the no-blocking requirement of an acceptable choice set with parameters $\alpha_k = t$ for all $k \in K$ and $\beta = \gamma |K|$.

Additionally, such an acceptable choice set requires that every selected candidate ranks among the top $t$ of the selected cohort for at least $\beta = \gamma |K|$ committee members.
\end{example}

Thus, our notion of acceptability strengthens the concept of a $(t,\gamma)$-undominated set by additionally requiring that accepted candidates receive sufficiently high support, rather than focusing solely on limiting the support of rejected candidates. 
Nevertheless, even without this additional condition, the lower bound from Theorem 3  of \citet{voting_us_26} applies directly to our setting.

\begin{proposition}[\citep{voting_us_26}]\label{prop:lowerbound}
Given a committee~$K$, a number of candidates to be selected $c_h$, and $\alpha_k = t \le c_h$ for all $k \in K$, there exist instances in which no acceptable choice set exists whenever
\[
\beta < |K| \cdot \frac{t+1}{c_h+1}.
\] 
\end{proposition}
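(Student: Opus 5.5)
The plan is to reduce the statement to the no-blocking half of Definition~\ref{def:acs} and then reuse the cyclic lower-bound construction of \citet{voting_us_26}. If no set $C^*$ of size $c_h$ satisfies the no-blocking condition alone, then no acceptable choice set exists, because individual rationality only adds requirements. With $\alpha_k = t$ for all $k$, no-blocking says: for every rejected $j$, at most $\beta$ members rank $j$ weakly above their $t$-th favorite member of $C^*$. Since $j\notin C^*$, "weakly above" is the same as "strictly above". So this is exactly $(t,\gamma)$-undominatedness with $\gamma=\beta/|K|$, as noted in the example preceding the proposition. The task therefore reduces to exhibiting instances in which every $c_h$-subset has a rejected candidate approved by more than $\beta$ members whenever $\beta<|K|\frac{t+1}{c_h+1}$.

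\textbf{Construction.} Take $N=c_h+1$ candidates arranged on a cycle and $|K|=c_h+1$ members, with $|K|$ replicated proportionally if needed. Member $k$ ranks the candidates in the cyclic order starting at candidate $k$: $k \succ k+1 \succ \dots \succ k+c_h$, indices taken mod $c_h+1$. Any feasible $C^*$ omits exactly one candidate $j$. For member $k$, the $t$-th favorite element of $C^*$ sits at position $t$ or $t+1$ in $k$'s full list. It is at position $t+1$ exactly when $j$ lies among $k$'s top $t$, and otherwise at position $t$.

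\textbf{Counting.} Member $k$ prefers $j$ to its $t$-th favorite of $C^*$ if and only if $j$ is among $k$'s top $t+1$ positions. If $j$ is in the top $t$, the benchmark is at position $t+1$ and $j$ beats it. If $j$ is at position $t+1$, the benchmark is at position $t$ and $j$ loses. So the approving members are those with $j$ among their top $t$, and by cyclic symmetry there are exactly $t$ of them, giving support $t = |K|\frac{t}{c_h+1}$.

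That is weaker than the claimed $\frac{t+1}{c_h+1}$, so this basic cyclic instance is not sharp. The fix is to use more candidates so that rejected candidates are numerous. Take $N$ candidates in a cycle with one member per rotation, and $N$ large relative to $c_h$. By averaging over rejected candidates, some rejected $j$ is among some member's top $t$ of $C^*\cup\{j\}$ often enough.

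\textbf{Averaging step.} Each member has $t$ benchmark slots. Sum over all members of the number of rejected candidates ranked above that member's $t$-th favorite in $C^*$. This is where the $t+1$ over $c_h+1$ ratio should come out: the $c_h$ chosen candidates split each cyclic order into gaps. A member approves every rejected candidate appearing before its $t$-th chosen candidate, which means the rejected candidates lying in the first $t$ gaps following that member's starting point. Summing over starting points, each gap is counted by about $(t+1)\times$(gap length) members, or $t\times$ plus the members inside the gap. The average over rejected $j$ is then at least roughly $N\frac{t+1}{c_h+1}$ in the limit $N\to\infty$.

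\textbf{Main obstacle.} The hard part is this gap-counting bound, and in particular making it exact or strict so that the conclusion holds for every $\beta$ below the bound rather than only asymptotically. I would follow the cited argument of Theorem~3 in \citet{voting_us_26} and take $N$ a suitable multiple of $c_h+1$. Since their theorem already concerns exactly this no-blocking condition, I would ultimately invoke it directly. The one point to verify explicitly is that their "prefer" relation matches our $\succeq_k$ on rejected candidates, which holds because of strictness.
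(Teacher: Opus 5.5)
Your final step is exactly the paper's justification, which consists only of this identification and citation. You pass to the no-blocking condition, note that for $j\notin C^*$ the relation $j\succeq_k C^*_t$ is strict, so the condition is exactly $(t,\beta/|K|)$-undominatedness, and invoke Theorem~3 of \citet{voting_us_26}.

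The self-contained route you sketch before that cannot work, and its averaging claim is false.
\begin{itemize}
\item In the rotation family with $N$ candidates and one member per rotation, a rejected $j$ is approved exactly by the members whose starting point lies in the cyclic arc $(c,j]$, where $c$ is the $t$-th selected candidate before $j$.
\item So if $i$ is the first selected candidate after $j$, the support of $j$ is at most $A_i-1$. Here $A_i$ is the length of $(c,i]$, i.e.\ of $t$ consecutive gap-plus-selected blocks ending at $i$.
\item Spacing $C^*$ evenly makes every $A_i$ about $tN/c_h$. Then every rejected candidate has support below $tN/c_h$, and about $(t-\tfrac12)N/c_h$ on average.
\item Since $\frac{t}{c_h}<\frac{t+1}{c_h+1}$ whenever $t<c_h$, no choice of $N$ rescues a no-blocking-only argument in this family.
\end{itemize}
Whatever instance underlies the cited theorem must have more structure, and your proof, like the paper's, rests entirely on the citation.

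The exactness you name as the main obstacle is also unattainable. Read literally for a fixed committee, the statement is false: for $t=c_h=1$, $|K|=3$, $\beta=5/2$, the singleton of member~1's favourite is acceptable in every instance. So the proposition has to be read with the committee size as part of the instance, i.e.\ as a statement about $\gamma=\beta/|K|$.

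What makes a self-contained proof hard is discarding individual rationality. Keeping it gives an elementary proof:
\begin{itemize}
\item Each member weakly approves exactly $t$ elements of $C^*$, so $\sum_{i\in C^*}\mathrm{Support}(i,C^*)=t|K|$ in every instance. Hence IR fails whenever $\beta>|K|t/c_h$.
\item In your long cycle with $|K|=N$, the supports of the selected candidates are exactly the $A_i$, so $\min_i A_i\le\lfloor tN/c_h\rfloor$.
\item Also $\max_i A_i\ge\lceil tN/c_h\rceil$, and the maximum is attained at some $i$ whose predecessor is rejected, since $i-1\in C^*$ implies $A_{i-1}\ge A_i$. That predecessor has support $A_i-1$.
\item Hence an acceptable set can exist only if $\lceil tN/c_h\rceil-1\le\beta\le\lfloor tN/c_h\rfloor$.
\item For $\beta=\gamma N$ with $\gamma<\frac{t+1}{c_h+1}$, take $N$ large when $\gamma\neq t/c_h$. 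When $\gamma=t/c_h$ (possible only if $t<c_h$), take $N$ with $c_h\nmid tN$.
\end{itemize}
This proves the proposition as stated, for acceptable sets, with no citation at all. What Theorem~3 of \citet{voting_us_26} adds is the much stronger bound for the no-blocking condition alone, which is the reading the paper has in mind.
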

This result implies that we cannot guarantee the existence of an acceptable set in which every unselected candidate receives less support than the $t$-th best accepted candidate by more than a $\frac{t+1}{c_h+1}$ fraction of the committee.

\Xomit{
We actually have a stronger result. 
\begin{proposition}
There exists a school $h$ with a capacity $c_h$, a committee $K$ and $\alpha_k=t\leq c_h$ for all $k\in K$ s.t. 
\begin{enumerate}
\item for any $\beta<|K|\cdot \frac{t+1}{c_h+1}$, there does not exist any set of students satisfying no blocking; 
\item for any $\beta\geq 0$, there does not exist any $\beta$-acceptable choice set;
\end{enumerate}
\end{proposition}
}

\subsection{Stable Matching}
We extend the single-organization choice set framework to a matching environment.

First, we demonstrate that varying $\vec{\alpha}$ and $\beta$ allows the acceptable choice-set framework to recover many standard choice functions in the matching literature.


\begin{example}[Responsive Choice] 
This is the classical choice function in the framework of \cite{gale_shapley_1962}. Each school $h$ has a single ranking over applicants, and given a set of applicants $C$, the school selects the top $c_h$ candidates in $C$. 

Suppose $|C| > c_h$ and the school $h$ has a single ranking $\succ_h$. Set $\alpha = c_h$ so that the benchmark candidate is the least-preferred among the accepted ones, and let $ \beta =0$. Because there is only one ranking, the support for each candidate is either $1$ or $0$. Therefore, a $\beta$-acceptable choice set must include all candidates with support $1$. In this case, the only possible $\beta$-acceptable choice set coincides exactly with the responsive choice.
\end{example}

\begin{example}[Path-Independent Choice Function]
Path independence defines a well-known, general class of choice functions that is sufficient to guarantee stability
\citep{path_independence_81, choice_and_matching}.   
One characterization of a path-independent choice function is via decomposition into a finite set of committee rankings $\{\succ_k\}_{k \in K}$ over the students. Given a set of applicants $C \subseteq S$, the choice from $C$ is
\(
r(C) = \bigcup_{k \in K} \{\text{Top}_k(C)\},
\)
where $\text{Top}_k(C)$ denotes the most preferred candidate in $C$ according to ranking $\succ_k$.

The individual rationality and no blocking condition (without the non-wastefulness) of acceptable choice set can capture path independence by setting the benchmark candidate for each ranking to be the top choice, i.e., $\alpha_k = 1$ for all $k \in K$, and choosing $\beta=1/2$ (any $\beta \in (0,1)$ works).  

Under this specification, any rejected candidate has support from no committee member, meaning they are ranked below the top candidate in every ranking. Conversely, each accepted candidate has support from at least one committee member, i.e., they are the top choice in at least one ranking. This construction precisely yields the choice $r(C)$ defined above.
\end{example}

However, due to the generality of our framework, many “non-substitutable” choice functions can be captured. The following example illustrates this.
\begin{example}
Consider 5 committee members with rankings over candidates $a, b, c, d$, denoted by $\succ_k$ for member $k$:
$$
 d \succ_{1,2} a \succ_{1,2} b \succ_{1,2} c ;  \;\; 
 d \succ_{3,4} b \succ_{3,4} a \succ_{3,4} c ;  \;\; 
c \succ_5 d \succ_5 a \succ_5 b.
$$

Suppose we need to choose 2 candidates, with thresholds $\beta = 1$ and $\alpha = 1$ for all rankings. The support of a candidate is determined by the top-ranked candidate among the selected set in each ranking. 

\begin{enumerate}
    \item For the candidate set $\{a, b, c\}$:  It is easy to check that the only acceptable set is $\{a, b\}$. (Under this set, the support counts for $a,b$ and $c$ are $3$, $2$, and $1$, respectively.) 
    \item For the candidate set $\{a, b, c, d\}$:  
    The only acceptable set is $\{d, c\}$. This is because candidate $d$ beats both $a$ and $b$ in all rankings, so $d$ must be included. Once $d$ is included, the support for $a$ and $b$ drops to 0, while $c$ has support 1 because it beats $d$ in member 5's ranking.
\end{enumerate}

This example illustrates that the choice function is not substitutable: without $d$, candidate $c$ is rejected, but when $d$ is present, $c$ becomes selected. There is a complementarity between $c$ and $d$.
\end{example}

\Xomit{
\begin{definition}[$\beta$ stable matching]
Given parameters $\{\beta_h\}_{h\in H}$ and $\{\alpha_k\}_{k\in K}$, a feasible matching $M$ is \emph{stable} if, for every school $h$, 
 the set of students currently matched to $h$, $M_h$, is \emph{$\beta_h$-acceptable} with respect to the  set of applicants $M_h \cup \{\, i : h \succ_i M_i \,\}$.

\end{definition}
}

To define stability, we start with the notion of blocking.

\begin{definition}[Blocking Pair]
Given parameters $\{\beta_h\}_{h \in H}$ and $\{\alpha_k\}_{k \in K}$, and a feasible matching $M$, a student-school pair $(j,h)$ is a \emph{blocking pair} if
\begin{enumerate}
    \item the student strictly prefers the school to her current assignment, that is $h \succ_j M_j$, and
    \item the school either has unfilled capacity, $|M_h| < c_h$, or finds the student sufficiently acceptable, i.e.,
    \[
    \mathrm{Support}^{\vec{\alpha}}_h(j, M_h) > \beta_h .
    \]
\end{enumerate}
\end{definition}

\begin{definition}[Stable Matching]
Given parameters $\{\beta_h\}_{h \in H}$ and $\{\alpha_k\}_{k \in K}$, a feasible matching $M$ is \emph{stable} if it satisfies both of the following conditions:
\begin{enumerate}
    \item \emph{Individual rationality}: every student $i$ assigned to school $h$ is acceptable to that school, i.e.,
    \[
    \mathrm{Support}^{\vec{\alpha}}_h(i, M_h) \ge \beta_h 
    \quad \text{for all } i \in M_h;
    \]
    \item \emph{No blocking}: the matching admits no blocking pair.
\end{enumerate}
\end{definition}
A stable matching guarantees that each school admits students with high support, while any student preferring the school but not admitted has weakly lower support.

As in the standard literature, an acceptable choice set for a single school can be interpreted as a special case of a stable matching. Specifically, it corresponds to a two-school market in which one school represents the actual institution and the other serves as an outside option that all students rank below the first.  Thus, our framework provides a general model encompassing both social choice and stable matching. The impossibility result in Proposition~\ref{prop:lowerbound} therefore implies the nonexistence of an acceptable choice set—and, equivalently, of a stable matching in this market—unless further assumptions are imposed on the parameters $\vec{\alpha}$ and $\beta$.

The impossibility result is driven by environments in which the number of rankings (or committee members) is large relative to the school’s capacity. This contrasts with typical school-choice applications, where the number of rankings or evaluation criteria is small compared to capacity—for example, on the order of tens of criteria versus thousands of available seats.

Motivated by this distinction, we focus on environments in which the number of rankings or evaluation criteria is relatively small compared to capacity. For example, in college admissions, the number of criteria used to evaluate applicants is typically on the order of ten, whereas institutional capacity is often in the thousands.



\section{Approximate Acceptable Choice Set}
We begin with a single-school problem and develop an algorithm to construct an approximately acceptable choice set for a single school. This will serve as a building block for our results in the next section on multiple schools.

The main result of this section is the following. 
\begin{theorem}\label{thm:main_social_choice}
Consider an instance of the single-school matching problem $(c_h, \vec{\alpha})$ for a school $h$ with a committee $K_h$, where each $k \in K_h$ has strict preferences over a set of students $S$.  
Then, there exists a subset of candidates $C^* \subseteq S$ that is acceptable with respect to the parameters $\{\alpha'_k\}_{k \in K_h}, \beta$  such that
$$\beta\leq  \Bigl\lceil \sum_{k\in K_h}\frac{\alpha_k}{c_h} \Bigr\rceil \;\; \text{ and }  \;\; |\alpha_k-\alpha'_k|\leq 2 \beta \quad \forall k \in K_h.$$
\end{theorem}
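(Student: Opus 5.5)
We construct $C^*$ in two stages: first a fractional equilibrium, then an iterative rounding of it. The paper's roadmap points to this route via Lindahl equilibrium with ordinal preferences (LEO) and iterative rounding.

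\textbf{Fractional stage.} Let $n_h=\min(c_h,|S|)$; if $|S|\le c_h$ we take $C^*=S$ and there is nothing to prove. Give each member $k\in K_h$ a budget of $\alpha_k$ units of ``approval mass.'' Each candidate $i$ carries a price $p_i\ge 0$, and $\sum_i p_i$ is normalized so that every candidate in the support can be paid for. We seek a fractional selection $x\in[0,1]^S$ with $\sum_i x_i=c_h$, together with prices, such that three conditions hold:
\begin{enumerate}
\item each member $k$ ``buys'' her top-ranked candidates in order, until she has accumulated $\alpha_k$ units of $x$-mass;
\item each selected candidate's price is covered by the members who buy it;
\item unselected candidates are not demanded.
\end{enumerate}
Existence follows by a fixed-point or LEO-type argument, which we would cite from \citep{nguyen-song,voting_us_26}. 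Summing the demands gives total demand $\sum_k\alpha_k$ spread over $c_h$ units of supply. Hence the typical fully selected candidate is bought by about $\sum_k\alpha_k/c_h$ members. This quantity sets $\beta=\lceil\sum_k\alpha_k/c_h\rceil$. In the fractional solution, every candidate with $x_i>0$ has weighted support at least $\beta$, and every candidate with $x_i<1$ has support at most $\beta$. The rejected side uses the usual LEO argument: if a candidate $j$ had more than $\beta$ supporters, it would be overdemanded relative to its price.

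\textbf{Rounding stage.} Form the polytope defined by $\sum_i x_i=c_h$ and, for each $k$, the constraint that the $x$-mass of $k$'s approved prefix $P_k$ (the candidates above her fractional $\alpha_k$-benchmark) equals $\alpha_k$. Each candidate appears in the prefix of at most its number of supporters, and we run Király–Lau–Singh style iterative rounding \citep{kiraly2012degree}. At each extreme point, a counting argument on tight constraints (a laminar family per member, with at most $|K_h|+1$ constraints overall) shows one of two things. Either some variable is integral, or some member's constraint touches few fractional variables; in the latter case we drop that constraint. The intended outcome is an integral $C^*$ with $|C^*|=c_h$, whose prefix count for each member $k$ deviates from $\alpha_k$ by at most $2\beta$. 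We then set $\alpha'_k$ equal to the realized count. With this choice, member $k$'s $\alpha'_k$-benchmark in $C^*$ coincides with her fractional benchmark, so support is preserved exactly. Candidates fully selected in $x$ keep support $\ge\beta$, rejected ones keep support $\le\beta$, and rounded candidates inherit the fractional bounds.

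\textbf{Main obstacle.} The delicate step is the deviation bound of $2\beta$ rather than $|K_h|$. A variable's degree in the constraint family, meaning the number of prefixes containing it, equals its support. For rounded fractional candidates this degree is $\le\beta$ on the rejection side, and the matching lower bound is used on the acceptance side. We must show that a constraint is dropped only once it has at most about $2\beta$ fractional variables, that further rounding perturbs it by less than that, and that the extreme-point count forces such a constraint to exist. We would first attempt the token argument: assign each fractional variable $2$ tokens and redistribute them along degrees $\le\beta$. A secondary subtlety is tie-handling, so that the support of a candidate exactly at the benchmark is counted consistently after rounding.
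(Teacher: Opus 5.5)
Your two-stage architecture (fractional equilibrium, then degree-bounded iterative rounding) is the paper's. However, the fractional stage, which is where the bound on $\beta$ comes from, is not actually argued, and what you assert there is false. The ``equilibrium'' you describe has one price per candidate, and members take their top $\alpha_k$ units of mass regardless of prices, so nothing ties prices to support. The count ``total demand $\sum_k\alpha_k$ over $c_h$ units of supply'' only controls the $\z$-weighted \emph{average} support. It does not give the uniform separation that acceptability requires: weak support $\ge\beta$ for every $i$ with $z_i>0$, and strong support $\le\beta$ for every $i$ with $z_i<1$.

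Worse, $\beta$ cannot be fixed at $\lceil\sum_k\alpha_k/c_h\rceil$. Take one member with $\alpha_1=1$, $c_h=2$ and $a\succ_1 b\succ_1 c$. The mass strictly above $i_{1,\z}^{\vec{\alpha}}$ is $<1$ and $z_{i_{1,\z}^{\vec{\alpha}}}\le 1$, so some positive mass lies strictly below the benchmark. That candidate has weak support $0<1=\lceil 1/2\rceil$; here only $\beta=0$ works. The threshold has to be read off the equilibrium. Citing \citep{nguyen-song,voting_us_26} gives you existence of a LEO, but not these support properties.

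The missing idea is the paper's price-to-support translation (Claims~\ref{clm:prices_LEO}--\ref{clm:LEO_approx}):
\begin{itemize}
\item With personalized prices $p_{k,i}$ and budgets drawn from $\mathbf{U}[1-\varepsilon,1]$, the condition $\alpha_k y_{k,i}\le z_i$ (tight unless $p_{k,i}=0$) forces $k$'s $\alpha_k$-rank student to be exactly her favorite candidate priced at most $1-\varepsilon$.
\item Hence every strong approver of $i$ pays $p_{k,i}>1-\varepsilon$. A member who does not weakly approve a candidate with $z_i>0$ never demands it, so $p_{k,i}=0$.
\item Member $k$ pays at most $\alpha_k$, so $p^*:=\min_{z_i>0}\sum_k p_{k,i}\le\sum_k\alpha_k/c_h$.
\item Revenue maximization gives $\sum_k p_{k,i}\le p^*$ whenever $z_i<1$.
\end{itemize}
Consequently, weak support is at least $\lceil p^*\rceil$ on the support, and strong support is at most $\beta^*:=\lfloor p^*/(1-\varepsilon)\rfloor$ off the fully selected candidates. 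A small enough $\varepsilon$ gives both $\beta^*\le\lceil p^*\rceil$ and $\beta^*\le\lceil\sum_k\alpha_k/c_h\rceil$.

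In the rounding stage you have the right tool, but the prefix bookkeeping must be split as in Lemmas~\ref{lemma:round} and~\ref{lemma:iterround}. The degree bound $\le\beta$ holds only for the \emph{strict} prefixes $U_k^+(\z,\vec{\alpha})$. It comes purely from $z_i<1$ (no blocking); individual rationality plays no role in it. A fractional candidate can be the benchmark of many members, so its weak support is not bounded by $\beta$. The LP constraint must also be $\lfloor\sum_{i\in U_k^+}z_i\rfloor\le\sum_{i\in U_k^+}x_i\le\lceil\sum_{i\in U_k^+}z_i\rceil$, an interval inside $[\alpha_k-1,\alpha_k]$. It cannot be ``$=\alpha_k$'', which the current $\z$ need not satisfy.

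By contrast, $\alpha'_k$ must be the realized count of the \emph{weak} prefix $U_k(\z,\vec{\alpha})$. With the strict prefix, an accepted candidate that is some member's benchmark would lose approvals it needs for individual rationality. The two prefixes differ by one candidate. So the Kir\'aly--Lau--Singh theorem \citep{kiraly2012degree}, applied to a base of the uniform matroid of rank $c_h$ with each fractional element in at most $\beta$ constraints (additive violation $\le 2\beta-1$), yields $|\alpha'_k-\alpha_k|\le 2\beta$ directly. No new token argument is needed; your count of ``$|K_h|+1$ constraints'' would only give a $|K_h|$-type bound.

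Finally, the benchmark need not coincide after rounding: if $i_{k,\z}^{\vec{\alpha}}\notin C^*$, it moves up. What holds, and suffices, is two one-sided comparisons. Each $i\in C^*$ has integral support at least its fractional weak support, and each $i\notin C^*$ has integral support at most its fractional strong support.
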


\begin{remark}
To interpret this result, it is helpful to view $\alpha_k$ in relative rather than absolute terms. 
Specifically, each committee member $k$ can be thought of as approving a candidate by comparing her to the top $\alpha_k / c_h$ fraction of the ranking. 

Consider the following example. 
Suppose the committee size is $|K_h| = 10$ and the capacity is $c_h = 1000$. 
If each committee member compares candidates to the top $20\%$ of the ranking, ($\alpha_k / c_h = 0.20$, so $\alpha_k = 200$). 
Our bound implies that $|\alpha'_k - \alpha_k| \le 2 |K_h| \cdot \alpha_k / c_h = 4$. 
In percentile terms, this corresponds to a deviation of at most $4/1000 = 0.4\%$.

Notice, the bound on $|\alpha'_k - \alpha_k|$ is at most twice the committee size. 
In percentile terms, this corresponds to a deviation of at most $2|K_h|/c_h$, which is typically small when capacity is large relative to the committee.
However, this bound can be substantially tighter when the percentile level $\alpha_k / c_h$ are small. 
In particular, if $\alpha_k / c_h < 1/(2|K_h|)$, the deviation is at most 1.

A second remark concerns $\beta$. 
The upper bound on $\beta$ guarantees that no rejected candidate can be supported by more than 
$\sum_{k \in K_h} \alpha_k / c_h$ committee members. 
Interpreting $\alpha_k / c_h$ in percentile terms, this means that if all committee members compare candidates to the top $p$ fraction of their rankings, then no rejected candidate can receive support from more than a $p$ fraction of the committee. 

This establishes a tight relationship between the comparison threshold in committee preferences and the maximal support that a rejected candidate can obtain. 
Such bounds are known to be asymptotically tight in voting environments (see \cite{voting_us_26}, and Proposition~\ref{prop:lowerbound} above), and here we extend them to a more general setting in which committee members may use heterogeneous percentile thresholds.
\footnote{To clarify this bound, notice that Proposition~\ref{prop:lowerbound} shows that if $\alpha_k = \alpha$, then for
\(
\beta = |K| \frac{\alpha+1}{c_h+1} - \epsilon,
\)
there exist instances where no acceptable choice set exists. In contrast, Theorem~\ref{thm:main_social_choice} guarantees that, for any instance, an (approximate) acceptable choice set exists for some
\(
\beta \le \Big\lceil |K| \frac{\alpha}{c_h} \Big\rceil.
\)
When $\alpha$ and $c_h$ scale at the same rate and are large, our bound approaches the impossibility threshold, showing that it is asymptotically tight. Together, these results provide a nearly sharp characterization of the range of feasible $\beta$.}

\end{remark}

\subsection*{Overview}
As illustrated earlier, the notion of an acceptable choice both generalizes and strengthens several classical social choice problems, although an acceptable choice need not always exist. We therefore introduce the concept of an \emph{acceptable fractional set}, which mirrors the defining properties of an acceptable set but allows for fractional solutions.

We then show that whenever an acceptable fractional set exists, it can be converted into an integral solution via an iterative rounding algorithm. The resulting outcome is an acceptable set after only small adjustments to the ranking parameters $\vec{\alpha}$ and the capacity $c_h$. The key feature of the algorithm is that it preserves a strong separation between accepted and rejected candidates: accepted candidates enjoy strong support, while rejected candidates have weak support. This yields acceptability guarantees that are strictly stronger than those obtained by prior approaches.

Finally, to establish the existence of an acceptable fractional set, we adopt an equilibrium-based approach. Specifically, we adapt the Lindahl equilibrium with ordinal preferences introduced by \cite{nguyen-song}, and extend it to accommodate heterogeneous agent weights $\alpha_k$.

\subsection*{Acceptable Fractional Set}

We start by extending the notions of $\alpha_k$-rank student and support to the fractional setting, which coincide with the old definitions when the underlying allocation is integral.


\begin{definition}[$\alpha_k$-rank student]\label{def:boundary_student}
Let $S$ be a set of $n$ students, $h$ be a school with capacity $c_h \in \mathbb{N}$, and 
$\vec{\alpha} \in [0,c_h]^{|K_h|}$ be a vector of ranking thresholds.  
Fix a committee member $k \in K_h$ with preference ordering $\succ_k$ over $S$.  
Given a fractional assignment $\x \in [0,1]^n$, the \emph{$\alpha_k$-rank student} $i_{k,\x}^{\vec{\alpha}}$ for $k$ is the most-preferred student who reaches the $\alpha_k$ threshold in cumulative assignment among students ranked at least as high.  
If the total assigned mass is less than the school capacity, we set $i_{k,\x}^{\vec{\alpha}} = \emptyset$. Formally,
\[
i_{k,\x}^{\vec{\alpha}} :=
\begin{cases}
\emptyset, & \text{if } \sum_{i\in S} x_i < c_h, \\
\max_{\succ_k}\left\{ i \in S \;\middle|\; 
\sum_{i' \succeq_k i} x_{i'} \ge \alpha_k \right\}, & \text{otherwise.}
\end{cases}
\]
\end{definition}

\begin{definition}[Fractional Approval]
Let $S$ be a set of students, $h$ be a school, and $\vec{\alpha} \in [0,c_h]^{|K_h|}$. Fix $k \in K_h$ with preference $\succ_k$ and a fractional assignment $\x \in [0,1]^n$.

Committee member $k$ \emph{weakly $\alpha_k$-approves} student $i$ if $i \succeq_k i_{k,\x}^{\vec{\alpha}}$, and \emph{strongly  $\alpha_k$-approves} $i$ if $i \succ_k i_{k,\x}^{\vec{\alpha}}$.  
The \emph{weak (strong) support} for $i$, denoted $\mathrm{w\text{-}Support}_h^{\vec{\alpha}}(i,\x)$ (resp., $\mathrm{s\text{-}Support}_h^{\vec{\alpha}}(i,\x)$), is the number of committee members who weakly (strongly) $\alpha_k$-approve $i$.
\end{definition}

\begin{figure}[htbp]
    \centering
    \includegraphics[width=0.45\linewidth]{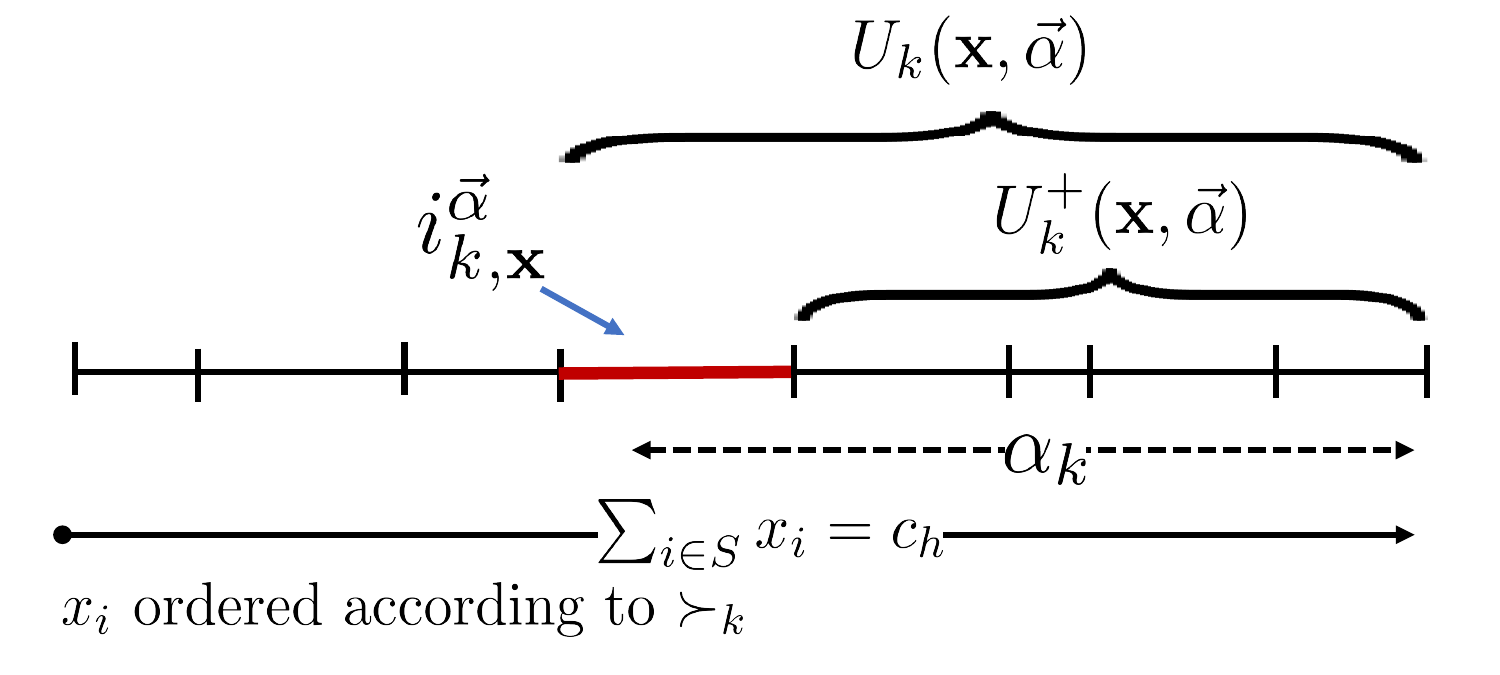}
    \caption{$\alpha_k$-ranked student}
    \label{fig:01}
\end{figure}
Figure~\ref{fig:01} provides a visualization of an $\alpha_k$-ranked student and the support measures $\mathrm{w\text{-}Support}_h^{\vec{\alpha}}(i,\x)$ and $\mathrm{s\text{-}Support}_h^{\vec{\alpha}}(i,\x)$. Assume that $\x$ is a fractional vector, define
\begin{equation}\label{eq:D}
\begin{aligned}
U_{k}(\x, \vec{\alpha})
&:= \{\, i \in S : i \succeq_k i_{k,\x}^{\vec{\alpha}} \,\}, \\ 
U^+_{k}(\x, \vec{\alpha})
&:= \{\, i \in S : i \succ_k i_{k,\x}^{\vec{\alpha}} \,\},
\end{aligned}
\end{equation}
i.e., the set of students that are weakly (and strongly) preferred by $k$ over the $\alpha_k$-rank student.
We have
\begin{equation}\label{eq:D1}
\begin{aligned}
\mathrm{w\text{-}Support}_h^{\vec{\alpha}}(i,\x) &= \bigl|\{k\in K_h \mid i\in U_k(\x, \vec{\alpha})\}\bigr|, \\
\mathrm{s\text{-}Support}_h^{\vec{\alpha}}(i,\x) &= \bigl|\{k\in K_h \mid i\in U_k^{+}(\x, \vec{\alpha})\}\bigr|.
\end{aligned}
\end{equation}
Moreover, by the definition of the $\alpha_k$-rank student, $i_{k,\x}^{\vec{\alpha}}$, we have 
\begin{equation}\label{eq:D2}
\sum_{i \in U_{k}(\x, \vec{\alpha})} x_i \ge \alpha_k  \ge \sum_{i \in U^+_{k}(\x, \vec{\alpha})} x_i \ge \alpha_k-1.    
\end{equation}
The first two inequalities of \eqref{eq:D2} hold because, by definition, 
\(i_{k,\x}^{\vec{\alpha}}\) is the most preferred candidate such that the total mass of candidates ranked at least as high reaches \(\alpha_k\).
The last inequality of \eqref{eq:D2} follows from the fact that \(x_{i_{k,\x}^{\vec{\alpha}}} \le 1\).

With these notions of supports, we define fractional acceptable set.


\begin{definition}[Acceptable fractional set]\label{def:frac_stable}
Let $S$ be a set of $n$ students and $h$ be a school with capacity $c_h \in \mathbb{N}, c_h\le n$, with parameters $\vec{\alpha}$ and  $\beta$. A vector $\x \in [0,1]^n$ is an \emph{acceptable fractional set} if the following conditions hold:
\begin{enumerate}
\item (Non-wastefulness) $\lVert \x \rVert_1 = c_h$; \label{frac-condition1}
\item (Individual rationality) for every student $i$ with $x_i > 0$,
\(
\mathrm{w\text{-}Support}_h^{\vec{\alpha}}(i,\x) \ge \beta; \;\;\; 
\) \label{frac-condition2} 
\item (No blocking) for every student $i$ with $x_i < 1$,
\(
\mathrm{s\text{-}Support}_h^{\vec{\alpha}}(i,\x) \le \beta.
\) \label{frac-condition3}
\end{enumerate}
\end{definition}

\Xomit{
With the new notion of support, we can also define fractional acceptable set.

\begin{definition}[acceptable Fractional set]\label{def:frac_stable}
Let a set of $n$ students $S$, a school $h$ with a capacity $c_h\in \mathbb{N}$ and a rank vector $\Vec{\alpha}\in [0,c_h]^{|K_h|}$, and a committee member $k\in K_h$  be given. Then, given $\beta\in[0,1]$, a vector $\z\in [0,1]^n$ is an acceptable fractional set if the following conditions hold: 
\begin{enumerate}
\item \textbf{Non-Wasterfulness:} $|\z|_1 = c_h$; 
\item \textbf{Individual Rationality:} for each $i$ with $z_i>0$, $\mathrm{w\text{-}Support}_h^{\vec{\alpha}}(i,\z)\geq\beta$; 
\item \textbf{No Blocking:} for each $i$ with $z_i<1$, $\mathrm{s\text{-}Support}_h^{\vec{\alpha}}(i,\z)\leq \beta$. 
\end{enumerate}
\end{definition}
}

\subsection*{Proof of Theorem~\ref{thm:main_social_choice}: Rounding Fractional Sets}

We prove Theorem~\ref{thm:main_social_choice} by rounding an acceptable fractional set.

The following two lemmas provide the key steps for constructing an acceptable integral set from a fractional one. All proofs are deferred to Appendix~\ref{app:rounding}.

\begin{lemma}\label{lemma:round}
Let $\z \in [0,1]^n$ be an acceptable fractional set with respect to $\vec{\alpha} \in [0,c_h]^{|K_h|}$ and $\beta$. Suppose we round $\z$ to an integral vector $\z' \in \{0,1\}^n$ such that $\lVert \z' \rVert_1 = \lVert \z \rVert_1 = c_h$ and $z'_i = z_i$ whenever $z_i$ is integral ($0$ or $1$).  
For each $k$, let  
\[
\alpha'_k := \sum_{i \in U_k(\z, \vec{\alpha})} z'_i.
\]  
Then the set $C^* := \{i : z'_i = 1\}$ is an acceptable set with respect to $\vec{\alpha}'$.
\end{lemma}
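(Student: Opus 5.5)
The plan is to show that, for each committee member $k$, the $\alpha'_k$-rank student of $C^*$ under $\succ_k$ is precisely the fractional boundary student $i_k := i_{k,\z}^{\vec{\alpha}}$. Once this holds, the integral approval sets $\{i : i\succeq_k C^*_{\alpha'_k}\}$ coincide with $U_k(\z,\vec{\alpha})$. Then $\mathrm{Support}^{\vec{\alpha}'}_h(i,C^*) = \mathrm{w\text{-}Support}_h^{\vec{\alpha}}(i,\z)$ for every $i$. Individual rationality and no blocking will then transfer directly from the fractional conditions. Non-wastefulness is immediate, since $|C^*| = \lVert \z'\rVert_1 = c_h$.

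\textbf{Step 1 (identifying the benchmark).} Since $\lVert \z\rVert_1 = c_h$, the fractional boundary $i_k\neq\emptyset$ and satisfies $\sum_{i\in U_k} z_i \ge \alpha_k$. We need $C^*_{\alpha'_k} = i_k$, i.e.\ $i_k\in C^*$ and $|C^*\cap U_k| = \alpha'_k$. The latter is the definition of $\alpha'_k$. For the former, note $z_{i_k} > 0$: if $z_{i_k}=0$, then the cumulative mass up to $i_k$ equals the mass strictly above it, so a more preferred student would already reach $\alpha_k$, contradicting maximality (assuming $\alpha_k>0$). Hence $z_{i_k}>0$. If $z_{i_k}=1$, then $z'_{i_k}=1$ because integral entries are preserved. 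If $0<z_{i_k}<1$, the rounding may send it to $0$, and I expect this to be the main obstacle.

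\textbf{Handling the obstacle.} When $z'_{i_k}=0$, $C^*_{\alpha'_k}$ is the lowest member of $C^*\cap U_k^+$, say $j$, and the integral approval set shrinks to $\{i\succeq_k j\}\subseteq U_k$. The students lost are those between $j$ and $i_k$; all of them have $z'=0$ (including $i_k$). For rejected students, support can only decrease, so no blocking is preserved: the integral support is at most $\mathrm{w\text{-}Support}$. For rejected $i\ne i_k$ we get $\mathrm{w\text{-}Support}=\mathrm{s\text{-}Support}\le\beta$, since $z_i<1$ and ($k$ weakly approves $i$ strictly above or below $i_k$) $\Leftrightarrow$ ($k$ strongly approves $i$). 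For $i_k$ itself, $k$ no longer approves, so the count is at most $\mathrm{s\text{-}Support}(i_k,\z)\le\beta$. For accepted students $i\in C^*$, any $k$ with $i\in U_k$ has $i\succeq_k j$ automatically, because $j$ is the lowest element of $C^*\cap U_k$ (or $i=i_k$, impossible here since $z'_{i_k}=0$). Hence the support of accepted students is unchanged and equals $\mathrm{w\text{-}Support}(i,\z)$. This is at least $\beta$ because $z'_i=1$ implies $z_i>0$, again since integral zeros are preserved.

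\textbf{Step 2 (general verification).} Combining the cases: for $i\in C^*$, $z_i>0$, so by Definition~\ref{def:frac_stable}(2) the support is at least $\mathrm{w\text{-}Support}\ge\beta$. For $i\notin C^*$, $z_i<1$, and the integral support is at most the number of $k$ with $i\in U_k$ and $i\neq i_k$ (if $i_k\notin C^*$), which equals $\mathrm{s\text{-}Support}(i,\z)\le\beta$. One should double-check the degenerate case $\alpha'_k=0$ (possible if $U_k$ contains no rounded-up student), where $C^*_{0}$ needs a convention; I would treat it as approving nobody, consistent with the above bounds.
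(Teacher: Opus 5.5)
Your proof is correct and follows essentially the same route as the paper. The paper also argues member by member: for accepted students, fractional weak approval implies integral $\alpha'_k$-approval (a counting argument over $C^*\cap U_k(\z,\vec{\alpha})$); for rejected students, integral approval implies fractional strong approval, via the lowest-ranked student of $C^*\cap U_k(\z,\vec{\alpha})$, which is exactly your benchmark $j$ (or $i_{k,\z}^{\vec{\alpha}}$ when it survives rounding), and your explicit flag of the degenerate case $\alpha'_k=0$ is a point the paper's proof passes over silently.
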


Lemma \ref{lemma:round} shows that if there exists a rounding algorithm satisfying certain properties, an acceptable fractional set can be rounded to an integral one with adjusted parameters $\alpha$. The next Lemma shows the existence of such a rounding algorithm and quantifies the magnitude of the change of parameters. 

\begin{lemma}\label{lemma:iterround}
Let $\z \in [0,1]^n$ be an acceptable fractional set with respect to $\vec{\alpha} \in [0,c_h]^{|K_h|}$ and $\beta$. 
Then, there exists an algorithm rounding $\z$ to $\z' \in \{0,1\}^n$ as in Lemma~\ref{lemma:round} such that
\[
|\alpha'_k - \alpha_k| \le 2 \beta  \quad \forall k \in K_h.
\]
\end{lemma}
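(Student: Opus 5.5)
The plan is to set up the rounding as an iterative-rounding problem on a polytope, following the degree-bounded technique of \citet{kiraly2012degree}.

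\textbf{Setup.} Fix the acceptable fractional set $\z$. Let $F=\{i: 0<z_i<1\}$ be the fractional students and write $U_k:=U_k(\z,\vec{\alpha})$. Since $z'_i=z_i$ on integral coordinates,
\[
\alpha'_k-\sum_{i\in U_k} z_i=\sum_{i\in U_k\cap F}(z'_i-z_i).
\]
By \eqref{eq:D2}, $\alpha_k\le \sum_{i\in U_k}z_i\le \alpha_k+1$. It therefore suffices to control the rounding error on each set $U_k\cap F$, up to one extra unit of slack.

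\textbf{Key structural observation.} Take any $i\in F$. Because $z_i>0$, individual rationality gives $\mathrm{w\text{-}Support}(i,\z)\ge\beta$. Because $z_i<1$, no blocking gives $\mathrm{s\text{-}Support}(i,\z)\le\beta$. The quantity that matters is how many sets $U_k$ contain $i$, which is exactly the weak support. Weak and strong support differ only at members $k$ for which $i=i_{k,\z}$.

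I would use the following bound: each fractional $i$ lies in at most $\beta$ sets $U_k^+$, and lies in $U_k\setminus U_k^+$ only when it is $k$'s boundary student. To keep the count small, I would run the rounding on the sets $U_k^+\cap F$, each of which has every element of degree at most $\beta$. The boundary student of each $k$ can shift $\alpha'_k$ by at most $1$, and this is absorbed in the slack.

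\textbf{Iterative rounding.} Consider the polytope
\[
\{\y\in[0,1]^F:\ \textstyle\sum_{F}y_i=\sum_F z_i,\ \ \sum_{U_k^+\cap F}y_i=\sum_{U_k^+\cap F}z_i\ \ \forall k\},
\]
which contains $\z|_F$. Move to a vertex, fix any coordinates that become integral, and recurse. At a vertex, the number of fractional variables is at most the number of tight linearly independent constraints. The standard counting argument (tokens: each fractional variable contributes $y_i$ and $1-y_i$ and has degree at most $\beta+1$ including the cardinality constraint) shows there is always some constraint $k$ with few fractional variables, specifically with $\sum_{i\in U_k^+\cap F}\min(y_i,1-y_i)$ small. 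Drop that constraint; its eventual error is then bounded by roughly its degree, at most $\beta$.

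Never dropping the cardinality constraint keeps $\lVert\z'\rVert_1=c_h$. A sum of $\le 2\beta$-type errors, namely the $\beta$ from the dropped constraint plus the boundary slack plus the $\le 1$ gap in \eqref{eq:D2}, gives $|\alpha'_k-\alpha_k|\le 2\beta$. This uses $\beta\ge1$ when there are fractional entries, which holds since weak support is at least $\beta$ and a fractional student must be supported by someone.

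\textbf{Main obstacle.} The hard part is the token-counting step. It must guarantee that a droppable constraint exists with error at most about $\beta$ rather than $2\beta$ or $\beta+1$, and that the boundary-student unit together with the \eqref{eq:D2} gap still fits within $2\beta$. I would first try the Bansal/Király argument: if every constraint had more than $2\beta$ fractional variables, then the number of constraints times $2\beta$ would exceed the sum of degrees, which is at most $\beta|F|$. That would contradict the requirement that the number of fractional variables be at most the number of constraints, and so it shows that some constraint has at most $2\beta$ fractional members and can be dropped with error below $2\beta$. I would then check whether the additive $+1$ terms cancel in sign. The left side of \eqref{eq:D2} gives $\alpha'_k\ge\alpha_k-(\text{error})$, and the boundary term pushes the other way, so the bound stays at $2\beta$.
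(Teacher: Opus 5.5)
Your plan is the paper's. You keep $\sum_i x_i=c_h$ exact and constrain only the mass on the strong upper sets $U^+_k:=U^+_k(\z,\vec{\alpha})$, so each fractional student lies in at most $\beta$ of them by no blocking. You round iteratively as in Kir\'aly--Lau--Singh and pay one extra unit for the boundary student $i_k:=i^{\vec{\alpha}}_{k,\z}$. You sketch a Beck--Fiala count with equality constraints where the paper uses $\lfloor\cdot\rfloor/\lceil\cdot\rceil$ bounds and cites the additive $2\beta-1$ guarantee, but that difference is cosmetic.

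The gap is in the final bookkeeping, which is the step you flagged. Your claim that a dropped constraint errs by at most $\beta$ is not supported by the argument you give. $\beta$ bounds how many sets contain a fractional student, not how many fractional students a set contains, and your own count (threshold $2\beta$) only certifies a constraint with at most $2\beta$ fractional members. The tally $\beta+1+1\le 2\beta$ is also false at $\beta=1$. Finally, individual rationality ($\mathrm{w\text{-}Support}\ge\beta$) gives no lower bound on $\beta$ itself.

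Your sign observation is right but stops one unit short. Set $s_k:=\sum_{i\in U^+_k}z_i\in[\alpha_k-1,\alpha_k]$ and $e_k:=\sum_{i\in U^+_k}(z'_i-z_i)$. Then $\alpha'_k-\alpha_k=(s_k-\alpha_k)+z'_{i_k}+e_k$, and the first two terms together lie in $[-1,1]$. So $|e_k|\le 2\beta$ only gives $|\alpha'_k-\alpha_k|\le 2\beta+1$.

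What closes the gap is strictness plus integrality:
\begin{itemize}
\item Let $\Delta:=\lfloor\beta\rfloor$, the actual degree bound. A constraint dropped with $t\le 2\Delta$ fractional members has current sum in $(0,t)$, so $|e_k|<2\Delta$.
\item For integral $\alpha_k$ (as in the model; the paper's step $a_k=\alpha_k-1$, $b_k=\alpha_k$ also assumes this), $\alpha'_k-\alpha_k$ is then an integer in $(-2\Delta-1,2\Delta+1)$. Hence $|\alpha'_k-\alpha_k|\le 2\Delta\le 2\beta$.
\item If $\Delta=0$, every $U^+_k$ is integral. Since $\alpha_k-1\le s_k<\alpha_k$ (strict by maximality of $i_k$), this forces $s_k=\alpha_k-1$ and $z_{i_k}=1$, so $\alpha'_k=\alpha_k$.
\end{itemize}
This is exactly what the paper extracts from the $2\beta-1$ guarantee, namely $\sum_{i\in U^+_k}z'_i\in[\alpha_k-2\beta,\alpha_k+2\beta-1]$ before adding the boundary unit.

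Incidentally, your first instinct can be rescued: threshold $\Delta$ already suffices. Suppose the cardinality row and $c\ge1$ committee rows, each with at least $\Delta+1$ fractional members, were independent and pinned down $c+1$ fractional variables. Double counting gives $c\le\Delta$. Then each such row covers all fractional variables and coincides with the cardinality row, a contradiction. (A lone fractional variable is impossible because the total is integral.) This yields $|e_k|<\Delta$ and even $|\alpha'_k-\alpha_k|\le\beta$.
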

In the rounding procedure, we formulate a linear program with two types of constraints. 
The first enforces feasibility by requiring that the total mass is $c_h$. 
The second preserves acceptability by ensuring that, for each committee member $k$, the total mass assigned to $U^+_k(\z,\vec{\alpha})$ is maintained. The key observation is that, because $\z$ is an acceptable fractional set, any variable $z_i<1$ has strong support at most $\beta$. 
Consequently, such a variable belongs to at most $\beta$ sets $U^+_k(\z,\vec{\alpha})$. 
This bounded-degree property allows us to apply the iterative rounding technique of \citet{kiraly2012degree}, yielding an integral solution with only small additive violations.  The full proof is provided in the appendix.

To complete the proof of Theorem~\ref{thm:main_social_choice}, it remains to show the existence of an acceptable fractional set with the desired parameters. 
This result is summarized in the following lemma.

\begin{lemma}\label{thm:1stable_fractional}
Let $h$ be a school with capacity $c_h$, committee members $K_h$, a ranking vector 
$\vec{\alpha}\in [0,c_h]^{|K_h|}$, and set of applicants $S$. 
There exists $\beta^*\le  \lceil \frac{\sum_{k\in K_h}\alpha_k}{c_h} \rceil $ and an acceptable fractional set $\z$ with respect to $\vec{\alpha}, \beta^*$. 
\end{lemma}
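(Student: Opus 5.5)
The plan is to obtain $\z$ as a fixed point of a set-valued map on the polytope $P := \{\y \in [0,1]^n : \lVert \y \rVert_1 = c_h\}$, in the spirit of the Lindahl equilibrium with ordinal preferences of \citet{nguyen-song} but with heterogeneous thresholds $\alpha_k$. The natural reading of $\alpha_k$ is as member $k$'s budget: $k$ buys the prefix $U_k(\x,\vec{\alpha})$ of her ranking carrying mass about $\alpha_k$. The fixed point will directly give properties (1)--(3) of Definition~\ref{def:frac_stable} for some $\beta^*$. The bound on $\beta^*$ will come from a separate averaging argument.

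\textbf{Step 1 (the map).} For $\x \in P$, let $\beta(\x)$ be the smallest integer $b \ge 0$ such that at most $c_h$ students satisfy $\mathrm{s\text{-}Support}_h^{\vec{\alpha}}(i,\x) > b$. Let $F(\x)$ be the set of $\y \in P$ such that:
\begin{itemize}
\item $y_i = 1$ whenever $\mathrm{s\text{-}Support}_h^{\vec{\alpha}}(i,\x) > \beta(\x)$;
\item $y_i = 0$ whenever $\mathrm{w\text{-}Support}_h^{\vec{\alpha}}(i,\x) < \beta(\x)$.
\end{itemize}
Nonemptiness follows from the minimality of $\beta(\x)$. Indeed, more than $c_h$ students have strong support $\ge \beta(\x)$, and hence weak support $\ge \beta(\x)$. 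So there is enough room to place mass $c_h$ while fully including the at most $c_h$ students whose strong support exceeds $\beta(\x)$. Convexity of $F(\x)$ is immediate, since it is $P$ intersected with coordinate constraints.

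\textbf{Step 2 (closed graph and Kakutani).} The supports are discontinuous in $\x$, so I would take the closure of the graph of $F$ and check that limit points still satisfy the defining constraints. The key facts concern the prefixes $U_k(\x)$ and $U_k^+(\x)$. $U_k(\x)$ is the shortest prefix of $\succ_k$ with $\x$-mass at least $\alpha_k$, while $U_k^+(\x)$ has mass strictly less than $\alpha_k$, consistent with \eqref{eq:D2}. Take $\x^t \to \x$ and pass to a subsequence on which all prefixes and $\beta(\x^t)$ are constant. Then:
\begin{itemize}
\item the limiting prefix $\lim U_k(\x^t)$ has $\x$-mass at least $\alpha_k$, so it contains $U_k(\x)$;
\item the limiting prefix $\lim U_k^+(\x^t)$ has $\x$-mass at most $\alpha_k$, and the relation to $U_k^+(\x)$ is only "up to the boundary student".
\end{itemize}
Consequently, weak support is upper semicontinuous and strong support is lower semicontinuous up to exactly this one-student slack. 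The gap between weak and strong approval in Definition~\ref{def:frac_stable} is designed to absorb this slack. Kakutani's theorem then yields $\z \in F(\z)$, possibly for the closed version of $F$. Setting $\beta^* := \beta(\z)$, the constraints defining $F$ give non-wastefulness, individual rationality, and no blocking for $\z$.

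\textbf{Step 3 (bounding $\beta^*$, the main obstacle).} The averaging identity is
\[
\sum_i z_i\,\mathrm{s\text{-}Support}_h^{\vec{\alpha}}(i,\z) = \sum_k \sum_{i \in U_k^+(\z)} z_i < \sum_k \alpha_k .
\]
Since the total mass is $c_h$, some student with $z_i > 0$ has strong support below $\sum_k \alpha_k / c_h$. The difficulty is converting this into $\beta^* \le \lceil \sum_k \alpha_k / c_h \rceil$. Individual rationality controls only weak support, and the minimal-$b$ definition of $\beta(\x)$ counts students regardless of their mass, so it does not interact with the averaging identity.

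What I would try first is to redefine $\beta(\x)$ in a mass-weighted way. Concretely, choose $\beta(\x)$ so that the students with strong support above it have total $\x$-mass below $c_h$. Then students with support exactly $\beta(\x)$ are forced to carry mass at the fixed point, and the averaging identity would give $\beta^* \le \lceil \sum_k \alpha_k / c_h \rceil$. If this breaks nonemptiness or closedness of $F$, the fallback is the price-based formulation of \citet{nguyen-song}. In that formulation each member $k$ has budget $\alpha_k$ and spends it on her top-ranked students at personalized prices. The profit condition then identifies $\beta^*$ with the common "total price" of marginal students, and budget balance yields $c_h\beta^* \le \sum_k \alpha_k$ after rounding up.
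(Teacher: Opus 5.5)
There is a genuine gap, and it sits exactly where the content of the lemma lies: nothing in Steps~1--2 bounds $\beta^*$. Your count-based $\beta(\x)$ has no relation to $\sum_k\alpha_k$. The averaging identity controls only a mass-weighted average of strong support, whereas no blocking needs a bound on the \emph{maximum} strong support over non-full students, many of whom may carry no mass.

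Your mass-weighted redefinition would make the bound automatic. If $\beta(\x)\ge 1$, minimality puts all mass on students with strong support at least $\beta(\x)$, so $c_h\beta(\x)\le\sum_k\alpha_k$. But it breaks nonemptiness of $F$. Students with zero $\x$-mass whom every member ranks above her $\alpha_k$-rank student have strong support $|K_h|>\beta(\x)$. If there are more than $c_h$ of them, they cannot all be set to $1$.

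The missing idea is a single scalar that simultaneously lower-bounds the weak support of every positively assigned student, upper-bounds the strong support of every non-full student, and is tied to $\sum_k\alpha_k$. In the paper this scalar is the minimal total price $p^*$ over the support of $\z$ in a LEO with budgets uniform on $[1-\varepsilon,1]$. Condition~\eqref{LEO-condition3} plus budget feasibility caps revenue at $\sum_k\alpha_k$, so $p^*\le\sum_k\alpha_k/c_h$. Revenue maximization gives every non-full student total price at most $p^*$. Since $k$'s $\alpha_k$-rank student is exactly her favorite student of price at most $1-\varepsilon$ (Claim~\ref{lemma:boundary_student}), strong approval forces $p_{k,i}>1-\varepsilon$. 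Conversely, a member who does not weakly approve a positively assigned student prices her at $0$. Then $\beta^*=\lfloor p^*/(1-\varepsilon)\rfloor$ with small $\varepsilon$ works. Your fallback points toward this route but supplies none of these steps, so it is not yet a proof.

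Step~2 is also incorrect as written. Your own first bullet, $U_k(\x)\subseteq\lim U_k(\x^t)$, says weak support is \emph{lower}, not upper, semicontinuous. Suppose $i$ is $k$'s $\alpha_k$-rank student along $\x^t$ and the mass strictly above $i$ tends to exactly $\alpha_k$. Then in the limit $k$'s $\alpha_k$-rank student lies strictly above $i$, so $i$ loses $k$'s weak approval even though $z_i>0$. Individual rationality is therefore not preserved under limits. (Using strong support in both constraints of $F$ would fix this particular point, since a positively assigned student in the limiting prefix $\lim U_k^+(\x^t)$ remains in $U_k(\z)$.)

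More seriously, $\beta(\cdot)$ and the prefixes jump, so the closure of the graph of $F$ has values that are unions of polytopes from different limiting regimes, and these need not be convex. Kakutani then does not apply, and convexifying mixes the constraints of different thresholds. This is exactly why the paper runs its fixed point on prices with random budgets, where demands are continuous in prices.
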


\subsection*{Proof of Lemma~\ref{thm:1stable_fractional}}
Finally, we establish existence of an acceptable fractional set.

The proof uses  a generalized Lindahl equilibrium with ordinal preferences, an adaptation of $\alpha$-LEO \citep{voting_us_26,nguyen-song}. For completeness, we introduce the concept.  

A Lindahl equilibrium consists of agents (committee members), public goods (applicants), personalized prices for each agent–alternative pair, and a common outcome. At equilibrium, each agent's optimal allocation given her personalized prices coincides with the common outcome, which also maximizes aggregate revenue.

Extending Lindahl equilibria to ordinal preferences requires adjustments. Following \cite{nguyen-song}, the Lindahl equilibrium with ordinal preferences (LEO) uses continuous budgets and probabilistic outcomes.  In particular, under a fixed budget $b \in \mathbb{R}_+$, a committee member $k$'s demand, which is her most-preferred affordable applicant,  can be discontinuous in the prices. To handle this, each committee member $k$ is assigned a random budget according to a distribution $\randbud{}$ supported on $[0,1]$, and the corresponding random demand under a personalized price vector $\prices_k$ is
\[
\mathcal{D}_k(\prices_k, \randbud{}) := \Big\{ \max_{\succ_k} \{ i \in S \cup \{\emptyset\} : p_{k,i} \le b \} \;\big|\; b \sim \randbud{} \Big\}.
\]
This demand defines a lottery over the choices, where each outcome corresponds to a realization of the budget, and under each realization, the agent selects her most-preferred affordable option. A key property of random demand is that it is continuous in the prices. Using this property, \cite{nguyen-song} established the existence of a LEO, defined as follows (adapted to our setting).


Consider an economy in which students are treated as ``public goods''. There are two types of agents: committee members, who express preferences over students, and a central agent for the school  who decides the final allocation of students while taking the committee members’ preferences into account.

Each committee member $k$ is associated  with a personalized price vector $\prices_k \in [0,1]^{n+1}$ over students, with the price of the empty allocation set to zero: $\prices_{k,\emptyset} = 0$.

The central agent representing school $h$, denoted also by $h$, selects a fractional allocation of students that maximizes the total revenue according to the personalized prices of the committee members, subject to the school's capacity $c_h$:
\[
\mathcal{D}_h(\{\prices_k\}_{k\in K_h}) = \arg\max_{\bold{x} \in [0,1]^n} \sum_{i\in S} \Big(\sum_{k\in K_h} p_{k,i}\Big) \cdot x_i \quad \text{s.t.} \quad \sum_{i \in S} x_i = c_h.
\]

Combining the committee members’ demands with the central agent’s allocation, we can formally define a LEO, which coordinates individual preferences and the school’s capacity constraints.

\begin{definition}\label{def:LEO}
Let $(c_h, \vec{\alpha} := \{\alpha_k\}_{k\in K_h})$ be an instance of the single-school problem, and let a random budget $\randbud{}$ distribution supported on $[0,1]$ be given. A  Lindahl equilibrium with ordinal preferences (LEO) is a tuple $(\y, \z, \prices)$ consisting of:
\begin{itemize}
    \item a personalized price vector $\prices_k \in [0,1]^n$ for each committee member $k \in K_h$, with $p_{k,\emptyset} = 0$,
    \item an individual demand $\y_k \in [0,1]^{n+1}$ over students for each committee member $k \in K_h$, and
    \item a central allocation $\z \in [0,1]^n$ over students.
\end{itemize}
The tuple $(\y, \z, \prices)$ satisfies:
\begin{enumerate}
    \item $y_{k,i} = \Pr(\mathcal{D}_k(\prices_k, \randbud{}) = i)$ for each committee member $k \in K_h$ and student $i \in S \cup \{\emptyset\}$; \label{LEO-condition1}
    \item $\z \in \mathcal{D}_h(\{\prices_k\}_{k \in K_h})$ for the school $h$; \label{LEO-condition2}
    \item $\alpha_k \cdot y_{k,i} \le z_i$ for each committee member $k \in K_h$ and student $i \in S$, with strict inequality only if $p_{k,i} = 0$. \label{LEO-condition3}
\end{enumerate}
\end{definition}
Compared with the traditional Lindahl equilibrium, which enforces equality, 
condition~\eqref{LEO-condition3} requires only a weak inequality between each committee member’s demand and the centrally chosen allocation, scaled by \(\alpha_k\).  Strict inequality is permitted only when the corresponding price is zero. 
This relaxation, together with the continuity of random demand, enables a fixed-point argument to establish the existence of a LEO (see Appendix for the proof).

\begin{theorem}[\cite{nguyen-song}]\label{thm:LEO_existence}
Let a random budget $\randbud{}$ supported on $[0,1]$, a school $h$ with capacity $c_h$, and a ranking vector $\vec{\alpha}$ be given. If the cumulative distribution function of $\randbud{}$ is continuous, then a LEO exists.
\end{theorem}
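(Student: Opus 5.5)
The plan is a Kakutani fixed-point argument on a compact convex set of prices and allocations, using continuity of the random demand. Let $P = [0,1]^{|K_h|\times n}$ be the price space and $Z=\{\z\in[0,1]^n : \sum_i z_i = c_h\}$ the allocation polytope. Both are nonempty, compact and convex. Because the CDF $F$ of $\randbud{}$ is continuous, the map $\prices_k \mapsto \y_k(\prices_k)$ with $y_{k,i}=\Pr(\mathcal D_k(\prices_k,\randbud{})=i)$ is continuous. The reason is that $y_{k,i}$ is the probability that the budget lies in $[p_{k,i}, \min_{i'\succ_k i} p_{k,i'})$, which equals $F$ evaluated at a continuous function of prices. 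An atom at a single point would break this, and that is exactly where the hypothesis on $F$ is used. I would therefore define $\y$ as a function of $\prices$ rather than as a fixed-point variable.

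I then set up a correspondence $\Phi$ on $P\times Z$ with two components. The first is the school's best response $\mathcal D_h(\prices)$, which is the argmax of a linear function over the polytope $Z$. It is nonempty, convex and upper hemicontinuous by Berge's theorem. The second is a price update that raises $p_{k,i}$ when member $k$'s scaled demand exceeds supply, $\alpha_k y_{k,i}(\prices_k) > z_i$, and lowers it otherwise. Concretely, I would take
\[
p'_{k,i} = \min\bigl\{1,\max\{0,\, p_{k,i} + \alpha_k y_{k,i}(\prices_k) - z_i\}\bigr\},
\]
which is continuous. Kakutani then gives a fixed point $(\prices,\z)$ with $\z\in\mathcal D_h(\prices)$, so condition~\eqref{LEO-condition2} holds. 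Condition~\eqref{LEO-condition1} holds by construction of $\y$.

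The fixed point yields condition~\eqref{LEO-condition3} through the projection. If $0<p_{k,i}<1$, the fixed point forces $\alpha_k y_{k,i}=z_i$. If $p_{k,i}=0$, it gives $\alpha_k y_{k,i}\le z_i$, and strict inequality is allowed there. The case to rule out is $p_{k,i}=1$ with $\alpha_k y_{k,i}>z_i$. This is the main obstacle. I would handle it by noting that a price of $1$ can be afforded only if the budget equals the supremum of the support, an event that has probability $0$ because $F$ is continuous. So $p_{k,i}=1$ gives $y_{k,i}=0$, contradicting $\alpha_k y_{k,i}>z_i\ge 0$.

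A subtler point is that the update must not push prices to a degenerate profile in which the school's revenue maximization is unrelated to excess demand. If the naive projection fails there, my fallback is to restrict prices to $[0,1-\epsilon]$ and take $\epsilon\to 0$ along a convergent subsequence. Compactness and continuity of $\y$ preserve conditions~\eqref{LEO-condition1} and \eqref{LEO-condition2} in the limit, and \eqref{LEO-condition3} passes to the limit as well, since the complementary slackness "$\alpha_k y_{k,i}<z_i \Rightarrow p_{k,i}=0$" is closed under limits once the $\epsilon$-versions only permit slack at zero or boundary prices, and boundary prices near $1$ carry vanishing demand.
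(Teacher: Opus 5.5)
Your proposal is correct and follows essentially the same route as the paper: a Kakutani fixed point combining the school's revenue-maximizing best response with the clipped price update $p_{k,i}+\alpha_k y_{k,i}-z_i$, where the boundary case $p_{k,i}=1$ is ruled out because a continuous budget distribution puts zero mass on $b=1$. You also spell out the continuity of random demand and the Berge-type argument that the paper defers to prior work, and the $\epsilon$-restriction fallback in your last paragraph is unnecessary, since your main argument already goes through.
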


Using LEO, we construct an acceptable fractional lottery with the desired parameters. 
This construction is summarized in the following result, which directly proves Lemma~\ref{thm:1stable_fractional}.

\begin{theorem}\label{thm:exist}
Let $h$ be a school with capacity $c_h$, committee members $K_h$, a ranking vector $\vec{\alpha} \in [0,c_h]^{|K_h|}$, and a set of applicants $S$. Then there exists $\varepsilon \in (0,1)$ and $\beta^*\leq \lceil \sum_{k\in K_h}\frac{\alpha_k}{c_h} \rceil$ such that when $\randbud{}$ is a uniform distribution on $[1-\varepsilon, 1]$, the common allocation $\z$ of the corresponding LEO $(\y, \z, \prices)$ is an acceptable fractional set with respect to $(\vec{\alpha},\beta^*)$.
\end{theorem}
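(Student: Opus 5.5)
Let $\mathcal{B}$ be uniform on $[1-\varepsilon,1]$, with $\varepsilon$ fixed below, and let $(\y,\z,\prices)$ be the LEO given by Theorem~\ref{thm:LEO_existence}. Non-wastefulness is immediate, since $\z \in \mathcal{D}_h$ forces $\lVert\z\rVert_1=c_h$. The other two conditions should come from linking prices to the rank students. Define the total price $P_i := \sum_{k\in K_h} p_{k,i}$. Optimality of $\z$ for the central agent's LP (a fractional knapsack with unit caps) gives a threshold $\lambda$ such that $z_i=1$ whenever $P_i>\lambda$, $z_i=0$ whenever $P_i<\lambda$, and fractional $z_i$ only when $P_i=\lambda$. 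So it suffices to show two things: $x_i<1$ implies $\mathrm{s\text{-}Support}(i,\z)\le\beta^*$, and $x_i>0$ implies $\mathrm{w\text{-}Support}(i,\z)\ge\beta^*$, where $\beta^*$ is extracted from $\lambda$, roughly $\beta^*=\lceil\lambda\rceil$ up to $O(\varepsilon)$ corrections.

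\textbf{Step 1 (prices vs.\ rank students).} Fix $k$. Member $k$'s demand puts positive mass only on students priced in $[1-\varepsilon,1]$ (or cheaper ones she prefers). By condition~\eqref{LEO-condition3}, every student $i$ with $p_{k,i}>0$ has $z_i=\alpha_k y_{k,i}$. Consider the set $A_k$ of students $k$ strictly prefers to her purchased items. Each must be unaffordable with positive probability, hence priced $>1-\varepsilon$. A student with price above $1$ is never bought, so I will argue via condition~\eqref{LEO-condition3} and the revenue structure that essentially prices are at most $1$.

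The intended structure is as follows. Member $k$'s demand lies in $U_k(\z,\vec\alpha)$, and $\sum_i z_i\ge \alpha_k\sum_i y_{k,i}=\alpha_k$ over the students she buys with positive price. Students strictly above $i_{k,\z}^{\vec\alpha}$ get price $\ge 1-\varepsilon$, and students strictly below get price $0$. For the latter: if $p_{k,i}>0$ for such an $i$, then $y_{k,i}>0$ would force $i$ to be affordable while a better student is affordable too, a contradiction.

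\textbf{Step 2 (support bounds).} From Step 1, $P_i\approx$ (number of $k$ with $i\in U_k^+$) plus contributions in $[0,1]$ from members for whom $i$ is the rank student. Therefore $\mathrm{s\text{-}Support}(i)\cdot(1-\varepsilon)\le P_i\le \mathrm{w\text{-}Support}(i)$. With $x_i<1$ we have $P_i\le\lambda$, which gives strong support $\le \lambda/(1-\varepsilon)$. With $x_i>0$ we have $P_i\ge\lambda$, which gives weak support $\ge\lambda$. Choosing $\varepsilon$ small relative to the gap to the next integer and taking $\beta^*=\lceil\lambda\rceil$ or $\lfloor\cdot\rfloor$ appropriately handles integrality, since supports are integers.

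\textbf{Step 3 (bounding $\lambda$).} This is done by counting revenue. Summing prices over the chosen allocation, $\sum_i P_i z_i=\sum_k\sum_i p_{k,i}z_i$. For each $k$, condition~\eqref{LEO-condition3} with equality on positive prices gives $\sum_i p_{k,i}z_i=\alpha_k\sum_i p_{k,i}y_{k,i}\le\alpha_k$, because demanded prices are $\le$ budget $\le1$. Hence $\sum_i P_i z_i\le\sum_k\alpha_k$. Every student with $z_i>0$ has $P_i\ge\lambda$, and $\lVert\z\rVert_1=c_h$, so $\lambda c_h\le\sum_k\alpha_k$, i.e.\ $\lambda\le\sum_k\alpha_k/c_h$, giving $\beta^*\le\lceil\sum_k\alpha_k/c_h\rceil$.

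The main obstacle is Step 1: making rigorous that prices above the rank student are at least $1-\varepsilon$ and below it are $0$, including the boundary student. The delicate case is reconciling the $\varepsilon$-loss in strong support with integrality of $\beta^*$. I would handle it by choosing $\varepsilon<1/(|K_h|+1)$, so that $\lambda/(1-\varepsilon)<\lfloor\lambda\rfloor+1$ whenever $\lambda$ is not close to an integer from below. Otherwise I would set $\beta^*=\lceil\lambda\rceil$ and use that strong support is an integer $\le\lambda/(1-\varepsilon)<\lceil\lambda\rceil+1$.
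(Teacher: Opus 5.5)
Your architecture is the paper's. Your threshold $\lambda$ is the paper's $p^*=\min\{\sum_k p_{k,i}: z_i>0\}$, since the exchange argument gives $\sum_k p_{k,i}\le p^*$ whenever $z_i<1$. Step 3 is exactly the paper's revenue bound, and Step 2 is its translation of prices into supports. The gap is Step 1, which you flag as the main obstacle and leave unproven. It is precisely Claim~\ref{lemma:boundary_student}: $i_{k,\z}^{\vec\alpha}=i^*_k:=\max_{\succ_k}\{i: p_{k,i}\le 1-\varepsilon\}$.

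Your sketch covers one direction. Because $i^*_k$ is affordable at every budget in $[1-\varepsilon,1]$, $k$'s demand is supported on $\{i\succeq_k i^*_k\}$. Hence $\sum_{i\succeq_k i^*_k} z_i\ge\alpha_k\sum_{i\succeq_k i^*_k}y_{k,i}=\alpha_k$, so $i_{k,\z}^{\vec\alpha}\succeq_k i^*_k$, and every strong supporter has $p_{k,i}>1-\varepsilon$.

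The other direction is only asserted ("member $k$'s demand lies in $U_k(\z,\vec\alpha)$"). Your justification for the price-zero property ("a better student is affordable too") presupposes that the rank student costs at most $1-\varepsilon$, which is what must be shown. The missing argument runs as follows. Suppose $i_{k,\z}^{\vec\alpha}\succ_k i^*_k$. Then every $i\succ_k i^*_k$ has $p_{k,i}>1-\varepsilon>0$, so condition~(3) of the LEO definition binds, giving $z_i=\alpha_k y_{k,i}$ on all of $U_k(\z,\vec\alpha)$. Moreover $y_{k,i^*_k}>0$, since $i^*_k$ is bought for every budget in $[1-\varepsilon,\min_{i\succ_k i^*_k}p_{k,i})$. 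Hence $\sum_{i\succeq_k i_{k,\z}^{\vec\alpha}}z_i\le\alpha_k(1-y_{k,i^*_k})<\alpha_k$. This contradicts the definition of the rank student, which exists because $\lVert\z\rVert_1=c_h\ge\alpha_k$.

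Your blanket claim that students strictly below $i_{k,\z}^{\vec\alpha}$ have price $0$ is false. If $z_i=0$, then $\alpha_k y_{k,i}=0=z_i$ holds with equality, and condition~(3) leaves $p_{k,i}$ free. The claim is true, and needed, only for $z_i>0$. There, once $i^*_k=i_{k,\z}^{\vec\alpha}$ is known, $y_{k,i}=0<z_i/\alpha_k$ forces $p_{k,i}=0$, which yields $\sum_k p_{k,i}\le\mathrm{w\text{-}Support}_h^{\vec\alpha}(i,\z)$. Note also that you write $x_i$ where you mean $z_i$.

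Your integrality handling needs trimming. Choosing $\varepsilon$ relative to the gap between $\lambda$ and an integer is circular, since $\lambda$ depends on the equilibrium and hence on $\varepsilon$. Your fallback $\beta^*=\lceil\lambda\rceil$ works uniformly, so drop the case split. Since $\lambda\le\sum_k\alpha_k/c_h\le|K_h|$, choosing $\varepsilon<1/(|K_h|+1)$ gives $\lambda/(1-\varepsilon)<\lambda+1$, so the integer strong support is at most $\lceil\lambda\rceil$. This is an equally valid alternative to the paper's $\beta^*=\lfloor\beta/(1-\varepsilon)\rfloor$.
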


We outline the main ideas of the proof below; the complete proof appears in the Appendix~\ref{app:LEO-accept}.
The non-wastefulness of $\z$ follows directly from the definition of LEO.
Next, since each agent receives an allocation whose price does not exceed her budget, we can bound each agent’s expected payment. By condition~\eqref{LEO-condition3} of LEO, the total expected revenue collected by the centralized agent under $\z$ is at most $\sum_k \alpha_k$. This, in turn, allows us to bound the total price of each candidate that is selected with positive probability under $\z$ (Claim~\ref{clm:prices_LEO}).
Finally, observe that for each committee member $k$, the $\alpha_k$-ranked candidate is exactly the most preferred candidate whose price is at most $1-\varepsilon$ (Claim~\ref{lemma:boundary_student}). This yields an lower  bound on the price of any candidate strictly preferred to the $\alpha_k$-ranked candidate and, consequently, allows us to bound the number of agents who find another candidate more attractive (Claim~\ref{clm:LEO_approx}).

\section{Multiple Schools: A Matching Problem}

In this section, we extend the analysis to setting consists of multiple schools. 
Our main result is the following.
\begin{theorem}[Existence of Approximately Stable Matching]\label{thm:mainmatch}
Let $(\Hos, S, K = \bigcup_{h \in \Hos} K_h)$ be a matching market, where each school $h$ has capacity $c_h$ and each committee member $k \in K_h$ has parameter $\alpha_k$.   Then there exist adjusted capacities $\{c'_h\}_{h \in \Hos}$, adjusted parameters $\{\alpha'_k\}_{k \in K}$, and thresholds $\{\beta_h\}_{h \in \Hos}$ such that
\[
|c'_h - c_h| \le 2|K_h| + 1, \quad
|\alpha_k - \alpha'_k| \le  2|K_h| + 2, \quad \text{and} \quad
\beta_h \le \Biggl\lceil \sum_{k \in K_h} \frac{\alpha_k}{c_h} \Biggr\rceil,
\quad \text{for all } h \in \Hos, k \in K_h.
\]
Under these adjustments, there exists a matching $M$ that is stable with respect to $(\vec{c}', \vec{\alpha}', \vec{\beta})$.
\end{theorem}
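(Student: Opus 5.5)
The plan mirrors the single-school argument: build a fractional stable object from an equilibrium, then round it with bounded additive violations. First I would define a \emph{fractional stable matching}. This is a fractional assignment $\x\in[0,1]^{S\times H}$ with $\sum_h x_{i,h}\le 1$ and $\sum_i x_{i,h}\le c_h$ that satisfies three conditions for every school $h$, using the fractional supports of Definition~\ref{def:boundary_student} computed on $\x_h=(x_{i,h})_i$:
\begin{itemize}
\item \emph{individual rationality}: $\mathrm{w\text{-}Support}_h^{\vec\alpha}(i,\x_h)\ge\beta_h$ whenever $x_{i,h}>0$;
\item \emph{no blocking}: for every $j$ with $\sum_{h'\succeq_j h}x_{j,h'}<1$ (so $j$ still has mass at schools worse than $h$, or unmatched), either $\sum_i x_{i,h}<c_h$ or $\mathrm{s\text{-}Support}_h^{\vec\alpha}(j,\x_h)\le\beta_h$;
\item \emph{non-wastefulness}: if $h$ is not full, every student who would prefer $h$ is fully served at $h$ or better.
\end{itemize}

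Second, I would prove existence of such an $\x$ with $\beta_h\le\lceil\sum_{k\in K_h}\alpha_k/c_h\rceil$ via the MEO. Each pair $(i,h)$ is a public good. Committee members $k\in K_h$ have personalized prices $p_{k,i}$ and random demands with budget uniform on $[1-\varepsilon,1]$, exactly as in Definition~\ref{def:LEO}. Each school's central agent maximizes revenue $\sum_i(\sum_k p_{k,i})x_{i,h}$ subject to capacity $c_h$. Students then act as in the fractional Gale--Shapley/choice-function framework of \citet{matching2018}: each student demands her most preferred school among those offering her positive ``acceptance'' mass.

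I would set up a Kakutani fixed point on (prices, offers, allocations). The approach is to combine the continuity of random demand used in Theorem~\ref{thm:LEO_existence} with the standard fractional-stability fixed point, where schools' fractional choice is upper hemicontinuous and convex-valued. At the fixed point I would copy the three claims behind Theorem~\ref{thm:exist} school by school. Revenue at $h$ is at most $\sum_{k\in K_h}\alpha_k$, so positively-selected students have total price at most $\sum_k\alpha_k/c_h$. The $\alpha_k$-rank student is the best one priced at most $1-\varepsilon$. From these, a student with strong support above $\beta_h$ would have total price exceeding that of selected students, so she would be fully chosen unless she is held by a better school. This gives the bound on $\beta_h$.

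Third, I would round. I would take the bipartite assignment polytope restricted to the support of $\x$ and fix all integral variables. The constraints are the degree constraints $\sum_h x_{i,h}=\sum_h x^*_{i,h}$ for each student, $\sum_i x_{i,h}=\sum_i x^*_{i,h}$ for each school, and for each $k\in K_h$ the constraint $\sum_{i\in U^+_k(\x_h)}x_{i,h}=$ its current value. A fractional variable $x_{i,h}<1$ lies in at most $\beta_h\le|K_h|$ of the sets $U^+_k$, because students with $x_{i,h}$ fractional are both IR and non-blocking at $h$. On top of this bipartite structure, each variable therefore appears in at most $|K_h|+2$ tight constraints.

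I would then apply the iterative relaxation of \citet{kiraly2012degree}: at an extreme point I drop a committee constraint or school capacity constraint once few fractional variables remain in it. This yields an integral matching that keeps student constraints exactly and violates each school capacity by at most $2|K_h|+1$ and each $U^+_k$ mass by at most about $2|K_h|+1$. Setting $c'_h=|M_h|$ and $\alpha'_k=\sum_{i\in U_k(\x_h)}\mathbf 1[(i,h)\in M]$, the argument of Lemma~\ref{lemma:round} carries over. Accepted students remain weakly approved by the same members, and rejected students who prefer $h$ were not strongly approved, hence remain outside the adjusted benchmark. Combined with \eqref{eq:D2}, this gives $|\alpha_k-\alpha'_k|\le2|K_h|+2$.

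The main obstacle is the existence step: defining the MEO so that school-side Lindahl pricing and student-side preferences interact continuously. Student demand in matching is discontinuous, so I would smooth it the same way as committee demand, by giving students random acceptance cutoffs, and then pass to the limit in $\varepsilon$. A secondary difficulty is ensuring that rounding never moves a student to a worse school, which is exactly what the student-side equality constraints are for.
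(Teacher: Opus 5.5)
\textbf{Gap in the existence step.} Your architecture matches the paper's: equilibrium, then fractional stable matching, then iterative rounding, with $c'_h=|M_h|$ and $\alpha'_k$ read off $U_k$. The problem is the existence step. You flag it yourself as the main obstacle and leave it open, and the mechanism you do describe does not produce what the later steps need.

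Suppose each school's allocation $\z_h$ maximizes $\sum_i(\sum_{k\in K_h}p_{k,i})z_{h,i}$ subject only to capacity. Then nothing prevents several schools from putting full mass on the same high-price student. So the object to which the LEO arguments apply is not a fractional matching. Those arguments are: revenue at most $\sum_k\alpha_k$; the $\alpha_k$-rank student is the best student priced at most $1-\varepsilon$; IR via prices; and a not-fully-selected student has total price at most $p^*$.

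Suppose instead that the matching is whatever students take from the schools' offers. Then it differs from $\z_h$, while all of those facts are statements about $\z_h$. A school whose top-priced students go elsewhere is simply left under-filled.

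Your sentence ``she would be fully chosen unless she is held by a better school'' is exactly the coordination property that has to be built into the equilibrium. Neither ``random acceptance cutoffs'' nor ``passing to the limit in $\varepsilon$'' supplies it. The limit is also risky in its own right, because support counts are discontinuous in the allocation. This is why the paper works at a \emph{fixed} small $\varepsilon$ and $\delta$ and absorbs the error through floors and ceilings.

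\textbf{The missing idea: the MEO coupling.} In the paper, each pair $(h,i)$ is a public good for \emph{both} sides. Students get personalized prices $q_{i,h}$ and random demands $\x_i$, and committee prices are floored at $\delta>0$. School $h$ maximizes $\sum_i u_{h,i}w_i$ subject to capacity, with the multiplicative utility $u_{h,i}=q_{i,h}\sum_{k\in K_h}p_{k,i}$. Equilibrium requires $x_{i,h}\le z_{h,i}$, with strict inequality only if $q_{i,h}=0$.

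Since $|S|\ge\sum_h c_h$, some student $i'$ has unassigned mass, so $u_{h,i'}>0$. Any excess $z_{h,i}>x_{i,h}$ forces $u_{h,i}=0$ and is therefore suboptimal. This gives $\z=\x$, a genuine fractional matching (Lemma~\ref{lemma:equal_x_z}). Next, $h\succ_j h_{j,\z}$ implies $q_{j,h}>1-\varepsilon$ (Lemma~\ref{lemma:boundary_school_price}). Let $K_{h,j}$ be the set of members strongly approving $j$. Blocking is then excluded by $(1-\varepsilon)^2|K_{h,j}|\le u_{h,j}\le u^*_h\le\sum_k\alpha_k/c_h+\delta|K_h|$. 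IR follows because non-approving members' prices sit at $\delta$, so $|K_A|+\delta|K_{NA}|\ge u_{h,i}\ge u^*_h$.

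A device closer to your plan could plausibly also work. For instance, constrain school $h$'s revenue maximization by the mass $1-\sum_{h'\succ_i h}z_{h',i}$ that student $i$ has not committed to schools she prefers, with prices at least $\delta$ so that schools fill. Some such device has to be specified and verified, though, and your proposal does neither.

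\textbf{Two smaller points in your rounding.} First, the claim that a fractional $z_{h,i}$ lies in at most $\beta_h$ of the sets $U^+_k$ is false here. No-blocking at $h$ only covers students who prefer $h$ to their boundary school, and a student whose boundary school is $h$ may be strongly approved by all of $K_h$. The trivial bound $|K_h|$ is all the theorem needs, so this does not break the result. Second, impose the student equality constraints only for fully assigned students, since otherwise their right-hand side is fractional.
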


Thus, similar to Theorem~\ref{thm:main_social_choice}, we can perturb each school’s parameters by at most twice the size of its committee to guarantee the existence of a stable matching. As illustrated above, when the committee size is small relative to the school’s capacity, these adjustments are minimal: only a small number of additional seats may be needed, and the thresholds for each committee member are altered by a negligible percentage.

The high-level idea parallels the notion of an acceptable set from the previous section. We first compute a (fractional) matching using an equilibrium concept, and then round it to an integral solution. Section~\ref{sec:meo} introduces the equilibrium concept and establishes its existence (Theorem~\ref{thm:meo_existence}). Section~\ref{sec:frac} shows that this equilibrium corresponds to a fractional stable matching (Theorem~\ref{thm:matching_exists}), and Section~\ref{sec:round} describes how to convert the fractional matching into an integral stable matching (Lemmas~\ref{lemma:round_matching} and \ref{lemma:round_matching_violation}).

The main challenge in the multi-school setting is that preferences must be respected on both sides.  
Unlike the single-school case, where each student could be treated as a public good for committee members, here students have preferences over schools, and each school’s committee members have preferences over students.  
To over come this, we model each school–student pair $(h,i)$ as a public good, allowing both sides to express preferences within a unified equilibrium framework.   We further introduce the school as an agent coordinating the matching.   Its utility aggregates the prices of both committee members and students for each pair $(h,i)$, capturing the combined evaluation of the school and the student.  
This yields the equilibrium concept we call \emph{MEO} (Matching Equilibrium with Ordinal Preferences).

Before proceeding to the technical details, we make the following simplifying assumption.

\begin{assumption} 
We assume that the total number of students is at least the sum of the total capacities of schools, i.e.,
\(
n = |S| \ge \sum_{h \in H} c_h.
\)
\end{assumption}
This assumption is without loss of generality. As is standard in the literature, we can introduce dummy students so that, for each school $h$, the relative ranking of the original students in $S$ remains unchanged, while all dummy students are ranked below the least-preferred original student and ordered arbitrarily among themselves. The presence of dummy students does not affect the existence or properties of stable matchings, and hence stability is preserved with or without them.

\subsection{MEO: Matching Equilibrium with Ordinal Preferences}\label{sec:meo}
Given a set of students $S$, a set of schools $\Hos$, and for each school $h \in \Hos$ a committee $K_h$, we consider an economy with three types of actors: a central agent representing each school $h$, the committee members in $K=\bigcup_{h\in H} K_h$—each of whom has preferences over the students in $S$—and students, who have preferences over the set of schools $\Hos$.

As in the single-school setting, each committee member $k \in K_h$ is endowed with a personalized price vector 
$\prices_k \in [\delta,1]^{n+1}$ over students, with $p_{k,\emptyset} = \delta$.\footnote{The parameter $\delta$ is introduced for technical reasons, which will become clear in the next subsection.}
Given a random budget according to a distribution $\randbud{}$, supported on $[0,1]$ and identical across committee members, agent $k$ induces a random demand 
$\mathcal{D}_k(\prices_k,\randbud{})$ according to her preference ordering $\succ_k$.

A key difference from the single-school model is that students are no longer treated as passive ``products,'' but instead act as decision-makers.
Accordingly, each student $i$ is associated with a personalized price vector 
$\boldsymbol{q}_i \in [0,1]^{m+1}$ over schools, with $q_{i,\emptyset} = 0$.
Student $i$ is also endowed with the budget drawn from $\randbud{}$ and induces a random demand 
$\mathcal{D}_i(\boldsymbol{q}_i,\randbud{})$ based on her preference ordering $\succ_i$.\footnote{Agents may in general have different budget distributions; for our purposes, it suffices to use a common distribution.}

Another difference is that the central agent $h$ no longer maximizes total revenue.
Instead, $h$ is endowed with a utility $u_{h,i}$ for each student $i$, defined as the product of the committee members’ cumulative prices for $i$ and student $i$’s price for $h$, namely,
\[
u_{h,i}
\;:=\;
q_{i,h} \cdot \sum_{k \in K_h} p_{k,i}.
\]
The central agent $h$ then chooses an allocation of ``seats'' to students so as to maximize total utility, subject to the school’s capacity constraint:
\[
\mathcal{D}_h\big(\{\prices_k\}_{k\in K_h}, \{\boldsymbol{q}_i\}_{i\in S}\big)
\;: = \;
\arg\max_{w \in [0,1]^{S}}
\sum_{i\in S} u_{h,i}
\, w_i
\quad
\text{s.t.}
\quad
\sum_{i\in S} w_i = c_h .
\]
The intuition behind $u_{h,i}$ is that prices assigned by students and committee members serve as proxies for how highly they rank a given pairing.
Accordingly, a hospital benefits from a match with student $i$ only when both the committee members and the student assign a high value to that pairing.
The multiplicative structure of $u_{h,i}$ also plays a crucial role: as we show later, it ensures that—even though each school $h$ acts independently—at equilibrium the total assignment of any student $i$ across all schools in $\Hos$ is at most one.

\begin{definition}[Matching Equilibrium with Ordinal Preferences (MEO)]\label{def:MEO}
Let $(\Hos,S,K=\bigcup_{h\in\Hos}K_h)$ be a matching market with capacities
$\{c_h\}_{h\in\Hos}$ and parameters $\vec{\alpha}=\{\alpha_k\}_{k\in K}$.
Fix a random budget distribution $\randbud{}$ supported on $[0,1]$ and $\delta\in(0,1]$.

A \emph{matching equilibrium with ordinal preferences} (MEO) is a tuple
$(\x,\y,\z,\prices,\bold{q})$ where
$\x_i\in[0,1]^{m+1}$, $\y_k\in[0,1]^{n+1}$, $\z_h\in[0,1]^{n+1}$,
$\bold{q}_i\in[0,1]^{m+1}$ with $q_{i,\emptyset}=0$,
and $\prices_k\in[\delta,1]^{n+1}$ with $p_{k,\emptyset}=\delta$,
such that:
\begin{enumerate}
    \item $x_{i,h}=\Pr(\mathcal{D}_i(\bold{q}_i,\randbud{})=h)$
    for all $i\in S$, $h\in\Hos\cup\{\emptyset\}$; \label{meo-condition1}
    \item $y_{k,i}=\Pr(\mathcal{D}_k(\prices_k,\randbud{})=i)$
    for all $k\in K_h$, $i\in S\cup\{\emptyset\}$; \label{meo-condition2}
    \item $\z_h\in\mathcal{D}_h(\{\prices_k\}_{k\in K_h},\{\bold{q}_i\}_{i\in S})$
    for all $h\in\Hos$; \label{meo-condition3}
    \item $x_{i,h}\le z_{h,i}$, with strict inequality only if $q_{i,h}=0$;\label{meo-condition4}
    \item $\alpha_k y_{k,i}\le z_{h,i}$, with strict inequality only if $p_{k,i}=\delta$. \label{meo-condition5}
\end{enumerate}
\end{definition}

Here, $\x$, $\y$, and $\z$ denote the demands of students, committee members, and schools, respectively, with conditions~(\ref{meo-condition1})--(\ref{meo-condition3}) simply restating that each agent induces a random demand and each school allocates seats to maximize total utility.

Conditions~(\ref{meo-condition4}) and (\ref{meo-condition5}), inspired by competitive equilibrium, coordinate student and committee member demands with the schools’ central allocations.  
Here, $x_{i,h}$ is the fractional seat demanded by student $i$ at school $h$, $y_{k,i}$ the fractional seat demanded by committee member $k$ for student $i$, and $z_{h,i}$ the fractional seat allocated to $i$ by school $h$.  
Condition~(\ref{meo-condition4}) ensures that a school’s allocation meets or exceeds the student’s demand, with strict excess only if the student’s price is zero, while condition~(\ref{meo-condition5}) similarly coordinates the allocation with the committee member’s demand, scaled by~$\alpha_k$.

Next, we show that a MEO exists.  
The proof follows similarly to that of Theorem~\ref{thm:LEO_existence} and is therefore deferred to the appendix.  

\begin{theorem}\label{thm:meo_existence}
Let $(\Hos, S,K, \{c_h\}_{h\in \Hos}, \{\alpha_k\}_{k\in K})$ be an instance of the matching problem.  
Let $\delta>0$ and let $\randbud{}$ be a random budget supported on $[0,1]$.  
If the cumulative distribution function of $\randbud{}$ is continuous, then a matching equilibrium with ordinal preferences (MEO) exists.
\end{theorem}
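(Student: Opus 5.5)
We will prove existence of an MEO by a fixed-point argument, following the proof of Theorem~\ref{thm:LEO_existence}. The plan is to build a continuous price-adjustment map on a compact convex set of prices, extend it with the schools' (set-valued) allocation correspondence, and apply Kakutani's theorem. At a fixed point, conditions (\ref{meo-condition4}) and (\ref{meo-condition5}) will hold.

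\textbf{Step 1: the domain and the demand maps.} The domain is
\[
P=\prod_{k\in K}[\delta,1]^{n}\times\prod_{i\in S}[0,1]^{m}\times\prod_{h\in\Hos}Z_h,
\qquad Z_h=\Bigl\{w\in[0,1]^S:\textstyle\sum_i w_i=c_h\Bigr\}.
\]
Each $Z_h$ is nonempty because $c_h\le n$, by the Assumption. Since the CDF of $\randbud{}$ is continuous, the random demands are continuous in prices:
\[
y_{k,i}=\Pr\bigl(p_{k,i}\le b<\min\{p_{k,i'}: i'\succ_k i\}\bigr),
\]
which is a difference of CDF values at price coordinates, and the same holds for $x_{i,h}$. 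Hence $(\prices,\bold{q})\mapsto(\x,\y)$ is continuous. The school correspondence $\mathcal{D}_h$ maximizes a linear objective whose coefficients $u_{h,i}=q_{i,h}\sum_k p_{k,i}$ are continuous in prices over the fixed polytope $Z_h$. By Berge's maximum theorem it is upper hemicontinuous with nonempty, convex, compact values.

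\textbf{Step 2: the price-update map.} Given $(\prices,\bold{q},\z)$, we update prices as follows:
\[
p'_{k,i}=\Pi_{[\delta,1]}\bigl(p_{k,i}+\alpha_k y_{k,i}-z_{h,i}\bigr),\qquad
q'_{i,h}=\Pi_{[0,1]}\bigl(q_{i,h}+x_{i,h}-z_{h,i}\bigr).
\]
Here $k\in K_h$, and $\Pi$ denotes truncation to the interval. The combined correspondence sends $(\prices,\bold{q},\z)$ to $\{(\prices',\bold{q}')\}\times\prod_h\mathcal{D}_h(\prices,\bold{q})$. It is upper hemicontinuous with convex compact values, so Kakutani gives a fixed point.

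\textbf{Step 3: reading off the equilibrium conditions.} At a fixed point, $\z_h\in\mathcal{D}_h$, which is condition (\ref{meo-condition3}), and conditions (\ref{meo-condition1})--(\ref{meo-condition2}) hold by construction. The price fixed point gives the two complementarity conditions:
\begin{itemize}
\item If $\alpha_k y_{k,i}>z_{h,i}$, then $p_{k,i}=1$.
\item If $\alpha_k y_{k,i}<z_{h,i}$, then $p_{k,i}=\delta$.
\item The analogous statements hold for $q_{i,h}$, with boundary values $1$ and $0$.
\end{itemize}
The strict-inequality clauses of (\ref{meo-condition4}) and (\ref{meo-condition5}) therefore follow once we exclude the upper case, that is, once we show that demand never strictly exceeds allocation.

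\textbf{Step 4: ruling out excess demand (main obstacle).} First choose the budget so it does not reach the top price: the support of $\randbud{}$ lies in $[0,1]$, and if it includes $1$, shrink the price box to $[\delta,\bar p]$ with $\bar p$ above the support's essential range. This mirrors how the LEO proof handles this step. Then a price of $1$ gives zero demand, because $\Pr(b\ge 1)=0$ by continuity of the CDF. So $p_{k,i}=1$ forces $y_{k,i}=0\le z_{h,i}$, and likewise $q_{i,h}=1$ forces $x_{i,h}=0$.

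The delicate point, and the place I expect the real work, is that a student's demand and the school allocations must be consistent across the several schools. This is where the multiplicative utility $u_{h,i}$ and the lower bound $\delta>0$ are used. Because $\delta>0$, the sum $\sum_k p_{k,i}$ is strictly positive, so $u_{h,i}=0$ exactly when $q_{i,h}=0$. I will argue that whenever $x_{i,h}<z_{h,i}$ the price $q_{i,h}$ drops to $0$. That makes the seat worthless to $h$, so an optimizing $h$ places mass on such students only if its capacity cannot otherwise be filled. I will check that this is compatible with the fixed point rather than contradicting it, since (\ref{meo-condition4}) permits strict slack in exactly this case.

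If the simple truncation dynamics do not close this argument, the fallback is a different update. Normalize the prices and use the revenue-balance argument of \cite{nguyen-song}: summing $\alpha_k y_{k,i}-z_{h,i}$ weighted by prices, together with $h$'s optimality, excludes positive excess at any coordinate whose price is strictly inside the interval.
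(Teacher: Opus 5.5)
Your proposal is correct and is essentially the paper's proof. Both apply Kakutani to the joint price/allocation space with the truncated updates $p_{k,i}+\alpha_k y_{k,i}-z_{h,i}$ (clipped to $[\delta,1]$) and $q_{i,h}+x_{i,h}-z_{h,i}$ (clipped to $[0,1]$), and both exclude excess demand because a price of $1$ is affordable only on the null event $\{b=1\}$. The box-shrinking remark and the cross-school discussion and fallback in Step 4 are unnecessary. The slack clause of condition (\ref{meo-condition4}) follows from the fixed point exactly as on the committee side, and $\sum_h z_{h,i}\le 1$ is the separate Lemma~\ref{lemma:equal_x_z}, not part of existence.
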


\subsection{Stable Fractional Matchings from MEO}\label{sec:frac}
In this section, we show that the central allocations of a MEO correspond to a stable fractional matching, which we define shortly.

Before establishing stability, we first show that the central allocations in a MEO form a \emph{valid} fractional matching, in the sense that no student is assigned more than one unit of total capacity. More precisely:

\begin{definition}[Fractional Matching]
Let $S$ and $\Hos$ be the sets of students and schools, with each $h \in \Hos$ having capacity $c_h$.  
A vector $\z = \{\z_h\}_{h \in \Hos}$ with $\z_h \in [0,1]^{|S|}$ is a \emph{fractional matching} if
\[
\sum_{i \in S} z_{h,i} \le c_h \ \forall h \in \Hos, 
\quad \text{and} \quad
\sum_{h \in \Hos} z_{h,i} \le 1 \ \forall i \in S.
\]
\end{definition}

\begin{lemma}\label{lemma:equal_x_z}
Let $(\Hos, S, K=\bigcup_{h\in \Hos} K_h)$ with capacities $\{c_h\}_{h\in \Hos}$ and ranking parameters $\vec{\alpha} := \{\alpha_k\}_{k \in K}$ be a matching instance.  
Let $(\x, \y, \z, \prices, \bold{q})$ be a MEO under a random budget distribution $\randbud{}$ supported on $[0,1]$ and $\delta>0$.  
If $|S| \ge \sum_{h \in \Hos} c_h$, then
\(
z_{h,i} = x_{i,h} \quad \text{for all } i \in S,\, h \in \Hos.
\)  
Consequently, $\z := \{\z_h\}_{h \in \Hos}$ constitutes a fractional matching.
\end{lemma}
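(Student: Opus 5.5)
The plan is to show that a strict inequality $x_{i,h} < z_{h,i}$ in condition~(\ref{meo-condition4}) of Definition~\ref{def:MEO} cannot occur, by contradiction and a global counting argument that uses $|S| \ge \sum_h c_h$. Once we have $z_{h,i}=x_{i,h}$, the fractional-matching property is immediate. Indeed, $\sum_{h} z_{h,i} = \sum_h x_{i,h} \le 1$ because $\x_i$ is a probability distribution over $\Hos\cup\{\emptyset\}$. Also $\sum_i z_{h,i} = c_h$ by the constraint in $\mathcal{D}_h$.

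\textbf{Step 1 (role of $\delta$ and structure of $\z_h$).} Since $p_{k,i}\ge \delta>0$ for every $k$, we get $\sum_{k\in K_h} p_{k,i} >0$. Hence $u_{h,i}>0$ if and only if $q_{i,h}>0$. The problem defining $\mathcal{D}_h$ is a fractional knapsack with unit weights, so an optimal $\z_h$ puts mass on a zero-utility student only if every positive-utility student already receives $z_{h,j}=1$. Let $P_h:=\{j : q_{j,h}>0\}$. If some $i$ has $z_{h,i}>0$ and $u_{h,i}=0$, then $|P_h|<c_h$ and $z_{h,j}=1$ for all $j\in P_h$.

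\textbf{Step 2 (consequence of a strict inequality).} Suppose $x_{i,h}<z_{h,i}$ for some pair. By condition~(\ref{meo-condition4}) this forces $q_{i,h}=0$, so $u_{h,i}=0$ and $z_{h,i}>0$. Step~1 then gives $|P_h|<c_h$ and $z_{h,j}=1$ on $P_h$. The remaining mass of $\z_h$ on $S\setminus P_h$ is exactly $c_h-|P_h|$. By condition~(\ref{meo-condition4}) and the strict inequality at $i$,
\[
\sum_{j\notin P_h} x_{j,h} < c_h-|P_h| .
\]
Now take any $j\notin P_h$. Since $q_{j,h}=0$, school $h$ is affordable for every budget realization. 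So $j$'s random demand is always $h$ or a school she prefers, never $\emptyset$, and $\sum_{h'\in \Hos} x_{j,h'}=1$.

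\textbf{Step 3 (counting).} Sum the demands of the students in $S\setminus P_h$. Demand at schools $h'\neq h$ is bounded by $\sum_j x_{j,h'} \le \sum_j z_{h',j} = c_{h'}$. Demand at $h$ is strictly less than $c_h-|P_h|$ by Step~2. Therefore
\[
n-|P_h| \;=\; \sum_{j\notin P_h}\sum_{h'} x_{j,h'} \;<\; \sum_{h'\ne h} c_{h'} + c_h - |P_h|,
\]
that is, $n<\sum_{h'} c_{h'}$. This contradicts the assumption, so $z_{h,i}=x_{i,h}$ for all pairs.

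The main obstacle is Step~2. Its content is that a zero student price $q_{i,h}=0$ makes the student fully demand-saturated, since she never chooses $\emptyset$. Combining this with the greedy structure of $\z_h$ is what lets the strict slack propagate into the global contradiction. I would double-check the knapsack claim carefully, including ties among students with zero utility.
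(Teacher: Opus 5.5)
Your proof is correct. It uses the same four ingredients as the paper's proof, assembled in the opposite order:
\begin{itemize}
\item a zero student price $q_{i,h}=0$ makes $h$ always affordable, so $i$ never demands $\emptyset$;
\item $\delta>0$ gives $u_{h,i}>0$ iff $q_{i,h}>0$ (both proofs tacitly need $K_h\neq\emptyset$ here);
\item exchange optimality of $\z_h$;
\item counting against $|S|\ge\sum_h c_h$ via $x_{j,h'}\le z_{h',j}$.
\end{itemize}

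The paper counts first. The strict slack at $(i^*,h^*)$ gives $\sum_i\sum_h x_{i,h}<\sum_h c_h\le n$, hence a student $i'$ with $x_{i',\emptyset}>0$. This forces $q_{i',h^*}>0$, so $u_{h^*,i'}>0$, and a single exchange of mass from $i^*$ to $i'$ contradicts optimality.

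You exchange first, deriving that every $j\in P_h$ is saturated at $h$ and that $|P_h|<c_h$. You then count only over the zero-price students $S\setminus P_h$, each of whom is fully demanded.

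The paper's version is shorter, but it leaves two points implicit. First, the existence of $i'$ needs the strict slack, not merely $|S|\ge\sum_h c_h$. Second, the exchange needs $z_{h^*,i'}<1$, which holds because $q_{i',h^*}>0$ forces $z_{h^*,i'}=x_{i',h^*}\le 1-x_{i',\emptyset}<1$. Your version makes the counting fully explicit and never has to verify room at a particular receiving student.

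The tie issue you flag is harmless: the exchange in Step~1 only moves mass from a zero-utility student to a positive-utility one, so ties among zero-utility students play no role.
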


\begin{proof}
Suppose, for contradiction, that there exists a student $i^*$ and school $h^*$ with $z_{h^*,i^*} > x_{i^*,h^*}$, which implies $z_{h^*,i^*} > 0$.  
By the MEO definition, this requires $q_{i^*,h^*} = 0$, hence $u_{h^*,i^*} = 0$.  

Since $|S| \ge \sum_{h \in \Hos} c_h$, there exists a student $i'$ not fully assigned to a school, so $x_{i',\emptyset} > 0$.  
Because $\x_{i'}$ represents student $i'$’s random demand and $\emptyset$ is the least-preferred option, $x_{i',\emptyset} > 0$ implies $q_{i',h^*} > 0$; otherwise, $i'$ would consume $h^*$ instead of $\emptyset$.  

Moreover, $p_{k,i'} \ge \delta > 0$ for all $k \in K_{h^*}$, so $u_{h^*,i'} > 0$.  
Thus, the central agent $h^*$ can strictly increase total utility by reallocating mass from $i^*$ to $i'$, contradicting the optimality of $\z_{h^*}$.
\end{proof}

Next, we define the notion of stability for fractional matchings. For notational convenience, we introduce the concept of a student’s \emph{boundary school} with respect to a given fractional matching.

\begin{definition}[Boundary School]
Let $\Hos$ be a set of schools, $i$ a student, and $\z = \{\z_h\}_{h \in \Hos}$ a fractional matching.  
The boundary school of student $i$ is the least-preferred school of $i$ that receives a strictly positive assignment, i.e.,
\[
h_{i,\z} :=
\begin{cases}
\min_{\succ_i} \{h \in \Hos : z_{h,i} > 0\}, & \text{if } \sum_{h \in \Hos} z_{h,i} = 1,\\
\emptyset, & \text{otherwise.}
\end{cases}
\]
\end{definition}

With this notation in place, a stable fractional matching is defined as follows.

\begin{definition}[Stable Fractional Matching] 
Let $\Hos$ be a set of schools and $S$ a set of students.  
A fractional matching $\z = \{\z_h\}_{h \in \Hos}$ with $|\z_h|_1 = c_h$ for each $h$ is \emph{stable} with respect to parameters $\{\beta_h\}_{h \in \Hos}$ and $\{\alpha_k\}_{k \in K}$ if it satisfies both of the following conditions:

\begin{enumerate}
    \item \textbf{Individual rationality:}  
    Every student $i$ assigned positively to school $h$ receives sufficient weak support:
    \[
    \text{w-}\mathrm{Support}^{\vec{\alpha}}_h(i, \z) \ge \beta_h. 
    \]

    \item \textbf{No blocking:}  
    For any student $i$ not fully assigned to school $h$ but assigned positively to a less-preferred school, i.e., 
    $z_{i,h} < 1$ and $h \succ_i h_{i,\z}$, the strong support from $h$ is small:
    \[
    \text{s-}\mathrm{Support}^{\vec{\alpha}}_h(i, \z) \le \beta_h.
    \]
\end{enumerate}
\end{definition}

\Xomit{
\begin{definition}[Fractional Blocking Pair]\label{def:fractional_blocking} Let a set of schools $\Hos$, a set of students $S$ and a fractional matching $\z:\{\z_h\}_{h\in \Hos}$ be given. Then, with parameters $\{\beta_h\}_{h \in H}$ and $\{\alpha_k\}_{k \in K}$, a student--school pair $(j,h)$ is a \emph{blocking pair} with respect to $\z$ if
\begin{enumerate}
    \item the student is not fully assigned to  $h$ i.e. $z_{h,i}<1$, and \label{fractional-blocking-condition1}
    \item the student strictly prefers the school to her boundary school, that is $h \succ_j h_{j,\z}$, \label{fractional-blocking-condition2} and
    \item the school either has unfilled capacity, $|\z_h|_1 < c_h$, or finds the student having a sufficient amount of \textit{strong} support with respect to $\z$, i.e.,
    \[
    \text{s-}\mathrm{Support}^{\vec{\alpha}}_h(j,\z) > \beta_h .
    \] \label{fractional-blocking-condition3}
\end{enumerate}

\end{definition}

As seen in Definition \ref{def:fractional_blocking}, condition (\ref{fractional-blocking-condition1}) gives the pre-requisite for a blocking pair: the pair has to be \textit{not} fully matched in the first place. Then, as in condition (\ref{fractional-blocking-condition2}), the student decides whether to deviate with her boundary school as a benchmark. Lastly, the school decides whether to deviate after checking whether the student has received a sufficient amount of support within the school, as seen in condition (\ref{fractional-blocking-condition3}).
}

We next show that, given a fixed ranking vector $\vec{\alpha}$, a MEO with a random budget distribution $\randbud{}$ being the uniform distribution on $[1-\varepsilon,1]$ can be used to construct a stable fractional matching with the desired $\beta$-factors.   

\begin{theorem}\label{thm:matching_exists}
Let $(\Hos, S, \{c_h\}_{h\in \Hos}, \vec{\alpha} := \{\alpha_k\}_{k\in K})$ be a matching instance with $\sum_{h\in \Hos} c_h \le |S|$.  
Then there exist $\varepsilon, \delta > 0$ such that the fractional matching $\z := \{\z_h\}_{h\in \Hos}$ derived from the corresponding MEO is stable with respect to $\vec{\alpha}$ and parameters $\{\beta_h\}_{h\in \Hos}$ satisfying
\[
\beta_h \le \Big\lceil \frac{\sum_{k \in K_h} \alpha_k}{c_h} \Big\rceil \quad \text{for all } h \in \Hos.
\]
\end{theorem}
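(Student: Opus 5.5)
We take a MEO from Theorem~\ref{thm:meo_existence}, with $\randbud{}$ uniform on $[1-\varepsilon,1]$ and $\varepsilon,\delta$ small, and show that its central allocation $\z$ is a stable fractional matching. By Lemma~\ref{lemma:equal_x_z}, $z_{h,i}=x_{i,h}$, so $\z$ is a fractional matching. Non-wastefulness $\lVert \z_h\rVert_1=c_h$ holds by the definition of $\mathcal{D}_h$. We then set
\[
\beta_h:=\min_{i:\, z_{h,i}>0}\text{w-}\mathrm{Support}^{\vec\alpha}_h(i,\z),
\]
which makes individual rationality automatic. The task reduces to two bounds: this minimum is at most $\lceil \sum_k\alpha_k/c_h\rceil$, and every student $i$ with $h\succ_i h_{i,\z}$ and $z_{h,i}<1$ has strong support at most $\beta_h$.

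\textbf{Step 1 (student side).} If $h\succ_i h_{i,\z}$, the student's random demand reaches her boundary school or $\emptyset$ with positive probability. Since budgets lie in $[1-\varepsilon,1]$, this forces $q_{i,h}>1-\varepsilon$, essentially at its maximum. Conversely, when $z_{h,i}>0$, the student demands $h$ with positive probability, so $q_{i,h}\le 1$.

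\textbf{Step 2 (committee side).} We mirror Claim~\ref{lemma:boundary_student}. Using condition~(\ref{meo-condition5}) and $\lVert \z_h\rVert_1=c_h$, we show that the $\alpha_k$-rank student of $k$ with respect to $\z_h$ is the $\succ_k$-best student whose price is at most $1-\varepsilon$. Every student strictly preferred to it therefore has $p_{k,i}>1-\varepsilon$, while any student $k$ does not weakly approve has price $\delta$ (or at least not above the budget support). This gives, for each $i$,
\[
\sum_{k\in K_h}p_{k,i}\ \ge\ (1-\varepsilon)\,\text{s-}\mathrm{Support}_h(i,\z)
\quad\text{and}\quad
\sum_{k\in K_h}p_{k,i}\ \le\ \text{w-}\mathrm{Support}_h(i,\z)+|K_h|\delta .
\]

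\textbf{Step 3 (revenue bound).} Each committee member pays at most her budget, so $\sum_i p_{k,i}y_{k,i}\le 1$. By condition~(\ref{meo-condition5}) with equality wherever $p_{k,i}>\delta$, we get $\sum_i p_{k,i}z_{h,i}\le \alpha_k+O(\delta n)$. Summing over $k$ yields $\sum_i z_{h,i}\sum_k p_{k,i}\le \sum_k\alpha_k+O(\delta)$, so some $i$ with $z_{h,i}>0$ has total committee price at most $\sum_k\alpha_k/c_h+O(\delta)$. This is the analogue of Claim~\ref{clm:prices_LEO}.

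\textbf{Step 4 (optimality of $h$).} Because $h$ maximizes $\sum_i u_{h,i}w_i$ under an equality capacity constraint, every $i$ with $z_{h,i}>0$ has $u_{h,i}\ge u_{h,j}$ for every $j$ with $z_{h,j}<1$. Now take a blocking candidate $j$. Since $q_{j,h}\ge 1-\varepsilon$ and $q_{i,h}\le 1$, the inequality $u_{h,i}\ge u_{h,j}$ gives
\[
\sum_k p_{k,i}\ \ge\ (1-\varepsilon)\sum_k p_{k,j}\ \ge\ (1-\varepsilon)^2\,\text{s-}\mathrm{Support}_h(j).
\]
Applied to the low-price $i$ from Step 3, this bounds $\text{s-}\mathrm{Support}_h(j)$ by $\sum_k\alpha_k/c_h+O(\varepsilon+\delta)$. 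Integrality then gives the bound $\lceil\sum_k\alpha_k/c_h\rceil$, after choosing $\varepsilon,\delta$ small relative to the gap to the next integer.

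To tie $\beta_h$ to the minimum weak support, we combine these inequalities. For any accepted $i$, $\text{w-}\mathrm{Support}(i)\ge \sum_k p_{k,i}-|K_h|\delta\ge (1-\varepsilon)^2\,\text{s-}\mathrm{Support}(j)-O(\delta)$. Integrality again gives $\text{s-}\mathrm{Support}(j)\le \beta_h$, and Step 3 combined with the same inequality gives $\beta_h\le\lceil\sum_k\alpha_k/c_h\rceil$.

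\textbf{Main obstacle.} The hardest part is Step 2 in the presence of the $\delta$ floor and the student-price factor. We must verify that the characterization of the $\alpha_k$-rank student survives when condition~(\ref{meo-condition5}) allows slack only at price $\delta$ rather than $0$. We must also check that a student with $q_{i,h}=0$, hence $u_{h,i}=0$, cannot be positively allocated; this is excluded by the argument of Lemma~\ref{lemma:equal_x_z}. Finally, the choice of $\varepsilon,\delta$ has to be made uniformly, so that all the additive errors stay below the integrality gap.
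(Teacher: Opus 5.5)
\paragraph{The gap: the choice of $\beta_h$.} Your Steps 1--4 track the paper's argument closely: the boundary-school and $\alpha_k$-rank characterizations, the revenue bound, and the comparison $u_{h,i}\ge u_{h,j}$. The consequence $\text{w-}\mathrm{Support}_h(i,\z)\ge \text{s-}\mathrm{Support}_h(j,\z)$, for every accepted $i$ and every potential blocker $j$, is correct.

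The gap is the choice $\beta_h:=\min_{i:\,z_{h,i}>0}\text{w-}\mathrm{Support}_h(i,\z)$, together with your closing claim that this is at most $\lceil\sum_k\alpha_k/c_h\rceil$. The inequality you invoke, $\text{w-}\mathrm{Support}_h(i,\z)\ge\sum_k p_{k,i}-|K_h|\delta$, bounds weak support from below. Nothing in Steps 1--4 bounds it from above. The reason is that a student who is exactly member $k$'s $\alpha_k$-rank student counts toward weak support, while $k$ may pay only $\delta$ for it, since condition~(\ref{meo-condition5}) is slack there.

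The claim is in fact false. Take one school with $c_h=2$, students $A,f_1,f_2$, and $22$ members with $\alpha_k=1$:
\begin{itemize}
\item $8$ members rank $A$ first;
\item $5$ members have $f_2\succ f_1\succ A$;
\item $5$ members have $f_1\succ f_2\succ A$;
\item $2$ members have $f_1\succ A\succ f_2$;
\item $2$ members have $f_2\succ A\succ f_1$.
\end{itemize}
Let $\z_h=(1,1-\eta,\eta)$ with $0<\eta<1/2$. Set committee prices as follows:
\begin{itemize}
\item $1-\varepsilon\eta$ on $f_2$ and $1-\varepsilon(1-\eta)$ on $f_1$ wherever they lie strictly above the rank student;
\item $1-\varepsilon$ on $A$ for the first group;
\item a small price on $f_1$ for the $f_2\succ f_1\succ A$ group, tuned so that $u_{h,f_1}=u_{h,f_2}$;
\item $\delta$ everywhere else.
\end{itemize}
Set student prices $q_{A,h}=1-\varepsilon$, $q_{f_1,h}=1-\varepsilon(1-\eta)$, and $q_{f_2,h}=1-\varepsilon\eta$. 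One checks that this is an MEO for all small $\varepsilon,\delta$, with $u_{h,A}\approx 8>u_{h,f_1}=u_{h,f_2}\approx 7$. Every accepted student has weak support $12$, whereas $\lceil\sum_k\alpha_k/c_h\rceil=11$. So your $\beta_h$ violates the required bound.

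\paragraph{The fix.} Decouple $\beta_h$ from weak support, as the paper does. Put $u_h^*:=\min\{u_{h,i}:z_{h,i}>0\}$ and $\beta_h:=\lfloor u_h^*/(1-\varepsilon)^2\rfloor$. Then your own steps finish the proof.
\begin{itemize}
\item No blocking: Step 4 gives $\text{s-}\mathrm{Support}_h(j,\z)\le u_{h,j}/(1-\varepsilon)^2\le u_h^*/(1-\varepsilon)^2$ for every potential blocker $j$.
\item The bound on $\beta_h$: Step 3, read through $u_{h,i}\le\sum_k p_{k,i}$, gives $u_h^*\le\sum_k\alpha_k/c_h+O(\delta)$. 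Hence $\beta_h\le\lceil\sum_k\alpha_k/c_h\rceil$ for small $\varepsilon,\delta$.
\item Individual rationality: it follows from $\text{w-}\mathrm{Support}_h(i,\z)\ge u_h^*-\delta|K_h|$ and integrality.
\end{itemize}
In the example above this gives $\beta_h=7$, which equals the strong support of $f_1$ and $f_2$. Alternatively, take $\beta_h$ to be the largest strong support among potential blockers.
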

The proof follows the same roadmap as in the single-school case, with one important difference: we reason in terms of utility levels  since each school now maximizes utility instead of revenue (see Appendix~\ref{app:thm:matching_exists}). 

The key idea is to relate a student’s incentive to deviate to a school to the level of support she receives from that school’s committee. 
First, we show that if a student~$i$ could find a better  school~$h$ compared with  one of the schools to which she is assigned with positive probability, then the corresponding price~$q_{i,h}$ must be at least $1-\varepsilon$. 
However, the fact that student~$i$ is not fully assigned to~$h$ in equilibrium implies that admitting her would not generate sufficiently high utility for the school. 
In particular, the utility
\(
u_{h,i} = q_{i,h} \cdot \sum_{k \in K_h} p_{k,i}
\)
cannot be large. 
This, in turn, implies that the total price $\sum_{k \in K_h} p_{k,i}$, must be small. 
Translating prices back into support yields an upper bound on student~$i$’s support at~$h$, which rules out profitable deviations and establishes stability.

\subsection{Rounding}\label{sec:round}

Similar to the single-school case, we give two lemmas used for constructing an integral stable matching from a fractional one.

As a counterpart to Lemma \ref{lemma:round}, the following Lemma gives the prerequisites for a rounding procedure in order to convert a fractional stable matching to an integral one.  See Appendix~\ref{app:roundmatch} for the proof.

\begin{lemma}\label{lemma:round_matching}
Let $\{\z_h\}_{h \in \Hos}$ be a stable fractional matching with respect to $\{\alpha_k\}_{k \in K}$ and $\{\beta_h\}_{h \in \Hos}$.  
Suppose we round it to binary vectors $\{\z'_h\}_{h \in \Hos}$ such that 
\begin{itemize}
    \item $z'_{h,i} = z_{h,i}$ if $z_{h,i} \in \{0,1\}$, and
    \item $\sum_{h \in \Hos} z'_{h,i} = 1$ if $\sum_{h \in \Hos} z_{h,i} = 1$.
\end{itemize}
Let $M$ be the matching with $M_h = \{i \in S : z'_{h,i} = 1\}$, $c'_h = |M_h|$, and $\alpha'_k = \sum_{i \in D_k(\z_h,\vec{\alpha})} z'_i$.
Then $M$ is stable with respect to $\{c'_h\}_{h \in \Hos}$, $\{\alpha'_k\}_{k \in K}$, and $\{\beta_h\}_{h \in \Hos}$.
\end{lemma}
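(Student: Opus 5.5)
The plan is to verify the two stability conditions for $M$ directly. The key tool is a single inclusion: for each school $h$ and each $k\in K_h$, the $\alpha'_k$-rank student of $M_h$ is exactly the $\alpha_k$-rank student $i^*_k := i^{\vec\alpha}_{k,\z_h}$ of the fractional allocation. Here $D_k(\z_h,\vec\alpha)$ is read as $U_k(\z_h,\vec\alpha)$ from \eqref{eq:D}.

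\textbf{Step 1: the benchmark is preserved.} Since $\|\z_h\|_1=c_h$, the student $i^*_k$ is well defined. I will show $z_{h,i^*_k}>0$; otherwise $i^*_k$ would not be the most preferred student at which the cumulative mass reaches $\alpha_k$. Because rounding preserves zeros, $z'_{h,i}=0$ for every $i$ with $z_{h,i}=0$. Hence the students of $M_h$ ranked weakly above $i^*_k$ by $k$ number exactly $\alpha'_k=\sum_{i\in U_k}z'_{h,i}$.

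If $z'_{h,i^*_k}=1$, then $i^*_k$ is the $\alpha'_k$-th favorite in $M_h$. Otherwise the $\alpha'_k$-th favorite is some student in $U^+_k$ strictly above $i^*_k$, and this is the delicate case. I would handle it by redefining $\alpha'_k$ via $\sum_{i\in U^+_k} z'_{h,i}$ plus one, or by arguing that this case changes the benchmark only in a harmless direction. The cleanest route is to note that the benchmark $C^*_{\alpha'_k}$ is always weakly preferred to $i^*_k$ and lies in $U_k$. This already gives both directions we need when support is measured weak versus strong, as follows.

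\textbf{Step 2: individual rationality.} Take $i\in M_h$. Then $z'_{h,i}=1$, which forces $z_{h,i}>0$. By fractional individual rationality, $i$ lies in $U_k(\z_h)$ for at least $\beta_h$ members $k$. For each such $k$, the student $i$ is among the $\alpha'_k$ students of $M_h$ counted in $U_k$, so $i\succeq_k (M_h)_{\alpha'_k}$. Hence $\mathrm{Support}_h(i,M_h)\ge\beta_h$.

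\textbf{Step 3: no blocking.} Take a pair $(j,h)$ with $h\succ_j M_j$. I will check both blocking conditions.

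First, the capacity condition holds automatically, since $c'_h=|M_h|$ means there is no unfilled capacity.

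Second, I need $z_{h,j}<1$ and $h\succ_j h_{j,\z}$, so that fractional no-blocking applies.
\begin{itemize}
\item If $z_{h,j}=1$, then $z'_{h,j}=1$, so $M_j=h$, which is a contradiction.
\item If $j$ is fully assigned fractionally, then $j$ is matched integrally at some school with $z_{M_j,j}>0$. Thus $h\succ_j M_j\succeq_j h_{j,\z}$.
\item If $j$ is not fully assigned, then $h_{j,\z}=\emptyset$ and $h\succ_j\emptyset$ trivially.
\end{itemize}
So $j$ has strong support at most $\beta_h$, meaning it lies in $U^+_k$ for at most $\beta_h$ members $k$.

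For every other $k$, I claim $j$ does not strictly beat the benchmark of $M_h$. There are two cases.
\begin{itemize}
\item If $j\notin U_k$, then $j\prec_k i^*_k\preceq_k (M_h)_{\alpha'_k}$.
\item If $j=i^*_k$ and $j\notin M_h$, then $j$ is below every student of $M_h$ in $U_k$, including the $\alpha'_k$-th one.
\end{itemize}
Hence at most $\beta_h$ members approve $j$, and no pair is blocking.

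\textbf{Main obstacle.} The hard part is the tie case $j=i^*_k$ with $j\in M_h$ at a different school. Concretely, $j$ equals the benchmark and therefore gains approval under the non-strict $\succeq_k$. But $j\in M_h$ means $M_j=h$, so such a $j$ cannot form a blocking pair with $h$. The remaining care is confirming $\alpha'_k\ge 1$ whenever it is used.
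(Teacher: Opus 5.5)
Your proof is correct and follows the paper's route: Step 2 is the paper's Claim~\ref{cor:support_inside}, and Step 3 is Claim~\ref{cor:support_outside} combined with the same case analysis giving $h\succ_j h_{j,\z}$. You also make explicit the checks $z_{h,j}<1$ and that the unfilled-capacity clause cannot fire because $c'_h=|M_h|$, both of which the paper leaves implicit. Drop the opening assertion that the $\alpha'_k$-rank student of $M_h$ equals $i^*_k$, since it fails when $z'_{h,i^*_k}=0$, and state directly the fact you actually use: $(M_h)_{\alpha'_k}$ is the $\succ_k$-lowest element of $M_h\cap U_k(\z_h,\vec{\alpha})$, which requires $\alpha'_k\ge 1$ as you note.
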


This lemma shows that a fractional stable matching can be rounded to an integral matching while preserving stability as follows.  
First, any entries that are already integral remain unchanged.  
Second, if a student is fully assigned to a school in the fractional solution, she remains fully assigned in the integral solution.  

The resulting integral matching may differ from the fractional one in two parameters:  
(1) the total capacity at each school, $c'_h$, and  
(2) the upper-set totals for each committee member, $\alpha'_k := \sum_{i \in D_k(\z_h, \vec{\alpha})} z'_i$, where $D_k(\z_h, \vec{\alpha})$ is the set of students weakly or strongly preferred by $k$ over her $\alpha_k$-rank student (see Figure~\ref{fig:01}).

We now show that there exists a rounding procedure satisfying the conditions given in Lemma \ref{lemma:round_matching} at the expense of only a small change of parameters. 
\begin{lemma}\label{lemma:round_matching_violation}
Let $\{\z_h\}_{h\in \Hos}$ be a stable matching respect to $\{\alpha_k \}_{k\in K}$ and $\{\beta_h\}_{h\in \Hos}$. 
Then, there exists an algorithm rounding $\{\z_h\}_{h\in \Hos}$ to $\{\z'_h\}_{h\in \Hos}$ as in Lemma~\ref{lemma:round_matching} such that
\[
|c'_h-c_h|\leq 2|K_h|+1\quad \forall h \in \Hos  
\]
\[
|\alpha'_k - \alpha_k| \le 2|K_h|+2  \quad \forall k \in K_h.
\]
\end{lemma}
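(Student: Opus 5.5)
We will use an iterative rounding argument in the style of \citet{kiraly2012degree}, applied to a bipartite fractional matching polytope with extra side constraints, one per committee member. The constraints we want to preserve approximately are the school capacities and, for each $k\in K_h$, the mass placed on the upper set $D_k(\z_h,\vec{\alpha})$. Fix the stable fractional matching $\{\z_h\}$. First we freeze every integral entry: pairs with $z_{h,i}\in\{0,1\}$ are fixed and removed. Let $E$ be the set of remaining fractional pairs $(h,i)$.

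On $E$ we consider the following polytope in variables $w_{h,i}\in[0,1]$:
\begin{itemize}
\item a student constraint $\sum_h w_{h,i}=\sum_h z_{h,i}$ for each student $i$ with $\sum_h z_{h,i}=1$, and $\le 1$ otherwise;
\item a capacity constraint $\sum_i w_{h,i}=\sum_i z_{h,i}$ for each school $h$;
\item for each $k\in K_h$, a constraint $\sum_{i\in U^+_k(\z_h,\vec\alpha)} w_{h,i} = \sum_{i\in U^+_k} z_{h,i}$.
\end{itemize}
The last family uses the strong upper set as in Lemma~\ref{lemma:iterround}, plus possibly the boundary student. $\z$ is a feasible point.

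The degree bound comes from stability. If $0<z_{h,i}<1$ and $i$ is not fully assigned to $h$, then either $h\succ_i h_{i,\z}$, in which case no blocking gives $\text{s-Support}_h(i,\z)\le\beta_h\le|K_h|$, or $h$ is the boundary school or worse. In the latter case we still need a bound. Since $i$ lies in at most all $|K_h|$ sets $U^+_k$, we simply use the trivial bound $|K_h|$, which is what the stated violation $2|K_h|+O(1)$ suggests.

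We then run the iterative relaxation. At each step we take an extreme point, fix integral coordinates, and if no coordinate is integral, drop a constraint. For the capacity and the committee constraints of school $h$, every fractional variable $w_{h,i}$ appears in its capacity constraint plus at most $|K_h|$ committee constraints. This gives a column load of at most $|K_h|+1$ among the school-side constraints, together with one student constraint. The standard counting argument (rank equals the number of fractional variables, tight constraints are linearly independent, and tokens are distributed per variable) shows that some school-side constraint has small support. Concretely, some constraint has at most $2(|K_h|+1)$ fractional variables, so it can be dropped with additive violation at most $2|K_h|+1$ or $2|K_h|+2$.

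The student constraints are never dropped. They form a laminar (disjoint) family on the student side, so with the bipartite structure the argument of Király–Lau–Singh applies. Outside the dropped constraints the final integral point satisfies everything, so the conditions of Lemma~\ref{lemma:round_matching} hold: integral entries are unchanged, and fully assigned students remain fully assigned. This yields $|c'_h-c_h|\le 2|K_h|+1$. The bound $|\alpha'_k-\alpha_k|\le 2|K_h|+2$ follows by combining the rounding error on $U^+_k$ with the at most unit gap between $\alpha_k$ and the $U^+_k$ mass in \eqref{eq:D2}, and the possibly rounded boundary student.

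The main obstacle is the token-counting step. We must show that when no variable is integral, some capacity or committee constraint has few fractional variables, while student constraints are kept exactly. Our first attempt assigns each fractional variable one token, split evenly among its at most $|K_h|+2$ tight constraints. We then argue that if every school-side constraint had more than $2|K_h|+2$ variables, the total demand would exceed the number of variables, contradicting the rank condition, much as in the degree-bounded bipartite matching analysis.
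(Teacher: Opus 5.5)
Your plan is the paper's. You freeze the integral entries, keep the student constraints exactly, and use only the trivial load bound: each fractional $w_{h,i}$ lies in the capacity constraint of $h$ and in at most $|K_h|$ committee constraints. Stability is not needed, and the paper does not use it either. You then drop a school-side constraint once it has at most $2(|K_h|+1)$ fractional variables, and close with token counting.

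The step that fails as written is exactly the one you flag. Suppose each fractional variable splits its token evenly over its (up to $|K_h|+2$) tight constraints. Take a tight student constraint containing only two fractional variables $w_{h_1,i},w_{h_2,i}$, each also lying in many committee constraints. It collects only about $2/(|K_h|+2)<1$ token. So you cannot conclude that every tight constraint is paid for, and "fewer linearly independent tight constraints than fractional variables" does not follow. An even split would also only support a threshold of order $|K_h|+2$; the threshold $2(|K_h|+1)$ you wrote comes from a different split.

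The repair, which is what the paper does, is asymmetric. Each $w_{h,i}$ gives $\tfrac12$ to the constraint of student $i$, and $\tfrac{1}{2(|K_h|+1)}$ to the capacity constraint of $h$ and to each committee constraint of $K_h$ containing $i$. That is at most one token per variable.
\begin{itemize}
\item A tight student constraint whose variables are all strictly fractional contains at least two of them, so it receives at least one token. This covers the equality for a fully assigned student and a tight $\le 1$ constraint.
\item Every surviving school-side constraint has at least $2|K_h|+3$ variables, so it receives strictly more than one token.
\end{itemize}
If any school-side constraint survives, the tight constraints therefore number strictly fewer than the variables, contradicting the rank condition at a vertex. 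If none survives, only the disjoint student constraints and the box remain, and that polytope has integral vertices. The paper instead appeals to integrality of the bipartite matching polytope at this point.

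With this fix your bounds go through.
\begin{itemize}
\item Capacity: the remaining capacity is an integer, so dropping at $\le 2(|K_h|+1)$ fractional variables costs at most $2|K_h|+1$.
\item Committee: the rounded mass on $U^+_k$ differs from $\sum_{i\in U^+_k} z_{h,i}\in[\alpha_k-1,\alpha_k]$ by strictly less than $2|K_h|+2$. Hence it is an integer in $[\alpha_k-2|K_h|-2,\ \alpha_k+2|K_h|+1]$. Adding the rounded boundary student gives $|\alpha'_k-\alpha_k|\le 2|K_h|+2$.
\end{itemize}
Keeping the $\le 1$ constraints for students who are not fully assigned is correct, and in fact necessary for the output to be a matching. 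The paper's LP omits them, and under the asymmetric split they are funded like the other student constraints.
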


The proof follows a standard iterative rounding argument; for completeness, see Appendix~\ref{app:round_matching_violation}.

Finally, we prove Theorem~\ref{thm:mainmatch} as follows.
\begin{itemize}
    \item Construct an MEO with appropriate parameters; existence follows from Theorem~\ref{thm:meo_existence}.
    \item By Theorem~\ref{thm:matching_exists}, the MEO induces a fractional stable matching.
    \item Apply Lemmas~\ref{lemma:round_matching} and~\ref{lemma:round_matching_violation} to obtain an integral stable matching with the desired approximation guarantees.
\end{itemize}


\section{Conclusion}

We study a stable matching problem in which institutions evaluate applicants using multiple rankings. To model institutional decision making, we introduce a social choice notion of \emph{acceptability} at each institution. This approach departs from standard models, which typically rely on exogenous tie-breaking rules or reserve policies. Instead, we internalize the aggregation of rankings into the institution’s own decision process.
By doing so, our framework bridges two strands of the literature: social choice and stable matching. It allows institutional preferences to be derived endogenously from underlying rankings, rather than imposed externally, and thus provides a more faithful representation of collective decision making in many real-world matching markets. 

From a technical perspective, our main methodological contribution is the use of equilibrium relaxation based on ordinal, rather than cardinal, preferences. A distinctive feature of our approach is the encoding of ordinal preferences through random demand, which enables a convexification of the problem while preserving the underlying preference structure. This technique provides a new way to combine ordinal information with equilibrium-based methods.

We believe this approach is broadly applicable beyond the setting studied here. In particular, the use of random demand to convexify ordinal preference problems may be useful in other allocation, matching, and market design contexts. Exploring these applications, as well as strengthening approximation guarantees and extending the model to richer institutional constraints, are promising directions for future work.

An important limitation of our work concerns computation. While our results establish existence and approximation guarantees, computing the corresponding equilibria and stable matchings may be computationally demanding in large markets or under complex institutional constraints. Developing efficient algorithms, identifying tractable special cases, and studying the computational complexity of our solution concepts are natural directions for future research.

\bibliographystyle{plainnat}
\bibliography{sample}

@article{abdulkadiroglu2003school,
  title   = {School Choice: A Mechanism Design Approach},
  author  = {Abdulkadiro{\u{g}}lu, Atila and S{\"o}nmez, Tayfun},
  journal = {American Economic Review},
  volume  = {93},
  number  = {3},
  pages   = {729--747},
  year    = {2003},
  
}

@article{ashlagi2019assigning,
  title={Assigning more students to their top choices: A comparison of tie-breaking rules},
  author={Ashlagi, Itai and Nikzad, Afshin and Romm, Assaf},
  journal={Games and Economic Behavior},
  volume={115},
  pages={167--187},
  year={2019},
  publisher={Elsevier},
  doi={10.1016/j.geb.2019.02.015}
}

@article{nguyen2021stability,
  title={Stability in Matching Markets with Complex Constraints},
  author={Nguyen, Hai and Nguyen, Th{\`a}nh and Teytelboym, Alexander},
  journal={Management Science},
  volume={67},
  number={12},
  pages={7438--7454},
  year={2021},
  publisher={INFORMS},
  doi={10.1287/mnsc.2020.3869}
}

@article{abdulkadiroglu2009strategy,
  title={Strategy-proofness versus efficiency in matching with indifferences: Redesigning the NYC high school match},
  author={Abdulkadiro{\u{g}}lu, Atila and Pathak, Parag A. and Roth, Alvin E.},
  journal={American Economic Review},
  volume={99},
  number={5},
  pages={1954--1978},
  year={2009}
}

@article{erdil2008matter,
  title={What's the matter with tie-breaking? Improving efficiency in school choice},
  author={Erdil, Aytek and Ergin, Haluk},
  journal={American Economic Review},
  volume={98},
  number={3},
  pages={669--689},
  year={2008},
  doi={10.1257/aer.98.3.669}
}

@article{hafalir2013effective,
  title={Effective affirmative action in school choice},
  author={Hafalir, Isa E. and Yenmez, M. Bumin and Yildirim, Muhammed Ali},
  journal={Theoretical Economics},
  volume={8},
  number={2},
  pages={325--363},
  year={2013}
}

@article{ehlers2014school,
  title={School choice with controlled choice constraints: Hard bounds versus soft bounds},
  author={Ehlers, Lars and Hafalir, Isa E. and Yenmez, M. Bumin},
  journal={Journal of Economic Theory},
  volume={153},
  pages={648--683},
  year={2014}
}

@article{sonmez2022affirmative,
  title={Affirmative action in India via vertical, horizontal, and overlapping reservations},
  author={S{\"o}nmez, Tayfun and Yenmez, M. Bumin},
  journal={Econometrica},
  volume={90},
  number={3},
  pages={1143--1176},
  year={2022},
  doi={10.3982/ECTA17788}
}

@incollection{nguyen2023complementarities,
  title       = {Complementarities and Externalities},
  author      = {Nguyen, Thanh and Vohra, Rakesh},
  booktitle   = {Online and Matching-Based Market Design},
  editor      = {Echenique, Federico and Immorlica, Nicole and Vazirani, Vijay V.},
  publisher   = {Cambridge University Press},
  year        = {2023},
  pages       = {323--342},
  isbn        = {9781108831994},
  doi         = {10.1017/9781108937535.015}
}

@inproceedings{chen2018stable,
  author       = {Jiehua Chen and Rolf Niedermeier and Piotr Skowron},
  title        = {Stable Marriage with Multi‑Modal Preferences},
  booktitle    = {Proceedings of the 2018 ACM Conference on Economics and Computation (EC 2018)},
  series       = {ACM EC},
  pages        = {269--286},
  year         = {2018},
  month        = {June}
}

@article{boehmer2025proportional,
  title        = {Proportional representation in matching markets: selecting multiple matchings under dichotomous preferences},
  author       = {Boehmer, Niclas and Brill, Markus and Schmidt-Kraepelin, Ulrike},
  journal      = {Social Choice and Welfare},
  volume       = {64},
  number       = {1-2},
  pages        = {179--220},
  year         = {2023},
  publisher    = {Springer},
  doi          = {10.1007/s00355-023-01453-7},
}

@inproceedings{chen2021fractional,
  title={Fractional Matchings under Preferences: Stability and Optimality},
  author={Chen, Jiehua and Roy, Sanjukta and Sorge, Manuel},
  booktitle={Proceedings of the Thirtieth International Joint Conference on Artificial Intelligence (IJCAI)},
  pages={89--95},
  year={2021}
}

@inproceedings{wen2022position,
  title={Position-Based Matching with Multi-Modal Preferences},
  author={Wen, Yinghui and Zhou, Aizhong and Guo, Jiong},
  booktitle={Proceedings of the International Conference on Autonomous Agents and Multi-Agent Systems (AAMAS)},
  year={2022}
}

@article{masatlioglu2005reference,
  title={Reference-Dependent Preferences},
  author={Masatlioglu, Yusufcan and Ok, Efe A.},
  journal={The Quarterly Journal of Economics},
  volume={120},
  number={4},
  pages={1415--1448},
  year={2005},
  publisher={Oxford University Press}
}

@article{bordalo2013salience,
  title={Salience Theory of Choice under Risk},
  author={Bordalo, Pedro and Gennaioli, Nicola and Shleifer, Andrei},
  journal={The Journal of Political Economy},
  volume={121},
  number={5},
  pages={803--843},
  year={2013},
  publisher={University of Chicago Press}
}

@article{koszegi2006model,
  title={A Model of Reference-Dependent Preferences},
  author={K{\"o}szegi, Botond and Rabin, Matthew},
  journal={The Quarterly Journal of Economics},
  volume={121},
  number={4},
  pages={1133--1165},
  year={2006},
  publisher={Oxford University Press}
}

@article{kiraly2012degree,
  title={Degree bounded matroids and submodular flows},
  author={Kir{\'a}ly, Tam{\'a}s and Lau, Lap Chi and Singh, Mohit},
  journal={Combinatorica},
  volume={32},
  number={6},
  pages={703--720},
  year={2012},
  publisher={Springer}
}

@article{matching2018,
  title={Near-Feasible Stable Matching with Couples},
  author={Nguyen, Thanh and Vohra, Rakesh},
  journal={American Economic Review},
  volume={106},
  number={11},
  pages={3154-3169},
  year={2018},
}

@article{allocation2021,
  title={Allocation with Weak Priorities and General Constraints},
  author={Lin, Young-San and Nguyen, Hai and Nguyen, Thanh and Altinkemer,Kemal},
  journal={Operation Research },
  volume={70},
  number={5},
  pages={2597-3033},
  year={2022},

}

@article{gale_shapley_1962,
  title        = {College Admissions and the Stability of Marriage},
  author       = {Gale, David and Shapley, Lloyd S.},
  journal      = {The American Mathematical Monthly},
  volume       = {69},
  number       = {1},
  pages        = {9--15},
  year         = {1962},
  doi          = {10.1080/00029890.1962.11989827},
  publisher    = {Mathematical Association of America},
 
}

@article{hatfield_milgrom_2005,
  title        = {Matching with Contracts},
  author       = {Hatfield, John William and Milgrom, Paul R.},
  journal      = {American Economic Review},
  volume       = {95},
  number       = {4},
  pages        = {913--935},
  year         = {2005},
  doi          = {10.1257/0002828054825466},
  
}

@article{school_japan,
  author = {Kitahara, Minoru and Okumura, Yasunori},
  title  = {School Choice with Multiple Priorities},
  year   = {2024},
  journal = {arXiv preprint arXiv:2308.04780}, 
}

@inproceedings{voting_us_26,
  title={A Few Good Choices},
  author={Song, Haoyu and Nguyen, Thanh and Lin, Young-San},
  booktitle={Proceedings of the 2026 Annual ACM-SIAM Symposium on Discrete Algorithms (SODA)},
  pages={4861-4874},
  year={2025}
}

@article{nguyen-song,
title={Approximate core of Participatory Budgeting},
  author={Haoyu Song and Thanh Nguyen},
  journal={Manuscript},
  year={2024}
}

@article{nguyen2016assignment,
  title={Assignment problems with complementarities},
  author={Nguyen, Thanh and Peivandi, Ahmad and Vohra, Rakesh},
  journal={Journal of Economic Theory},
  volume={165},
  pages={209--241},
  year={2016},
  publisher={Elsevier}
}

@article{path_independence_81, 
title = {General theory of best variants choice: Some aspects}, 
author = {Aizerman, M. and Malishevski, A.}, 
journal = {IEEE Transactions on Authomatic Control}, 
volume = {26}, 
issue = {5}, 
pages = {1030-1040}, 
year = {1981},
publisher = {IEEE}
}

@book{book11iterative,
  title={Iterative Methods in Combinatorial Optimization},
  author={Lau, Lap Chi and Ravi, R and Singh, Mohit},
  year={2011},
  publisher={Cambridge University Press}
}

@article{choice_and_matching, 
title = {Choice and Matching}, 
author = {Chamber, Christoper P. and Yenmez, M.Bumin}, 
journal = {American Economic Journal:Microeconomics}, 
volume = {9}, 
issue = {3}, 
pages = {126-147}, 
year = {2017}
}

@article{chamberlin1983representative,
  title={Representative Deliberations and Representative Decisions: Proportional Representation and the Borda Rule},
  author={Chamberlin, J. R. and Courant, P. N.},
  journal={American Political Science Review},
  volume={77},
  number={3},
  pages={718--733},
  year={1983}
}

@article{monroe1995fully,
 author = {Burt L. Monroe},
 journal = {The American Political Science Review},
 number = {4},
 pages = {925--940},
 publisher = {[American Political Science Association, Cambridge University Press]},
 title = {Fully Proportional Representation},
 urldate = {2026-03-04},
 volume = {89},
 year = {1995}
}

@article{elkind2015condorcet,
  title={Condorcet winning sets},
  author={Elkind, E. and Lang, J. and Saffidine, A.},
  journal={Social Choice and Welfare},
  volume={45},
  pages={1--27},
  year={2015}
}

@inproceedings{aziz2017justified,
  title={Justified representation in approval-based committee voting},
  author={Aziz, H. and Brill, M. and Conitzer, V. and Elkind, E. and Freeman, R. and Walsh, T.},
  booktitle={Proceedings of the 31st AAAI Conference on Artificial Intelligence (AAAI)},
  pages={784--790},
  year={2017}
}

@inproceedings{CharikarLassotaRamakrishnanVettaWang2025,
  author       = {Moses Charikar and Alexandra Lassota and Prasanna Ramakrishnan and Adrian Vetta and Kangning Wang},
  title        = {Six Candidates Suffice to Win a Voter Majority},
  booktitle    = {Proceedings of the 57th Annual ACM Symposium on Theory of Computing (STOC)},
  year         = {2025},
  pages        = {1590--1601},
  publisher    = {Association for Computing Machinery},
  doi          = {10.1145/3717823.3718235},
  
}

\newpage 
\begin{center}
    APPENDIX
\end{center}
\appendix

\section{Rounding Lemmas}\label{app:rounding}
\subsection{Proof of Lemma~\ref{lemma:round}}
First, we show the following.
\begin{claim}\label{cor:support_inside}
For any student $i \in C^*$, her degree of support with respect to $C^*$ and $\vec{\alpha}'$ is at least her degree of \emph{weak} support with respect to $\z$ and $\vec{\alpha}$, that is,
\[
\mathrm{Support}_h^{\vec{\alpha}'}(i, C^*) \;\ge\; \mathrm{w\text{-}Support}_h^{\vec{\alpha}}(i, \z).
\]
\end{claim}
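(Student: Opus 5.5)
The claim is proved member by member. Fix $i \in C^*$ and a committee member $k$ who weakly $\alpha_k$-approves $i$ with respect to $\z$, i.e.\ $i \in U_k(\z,\vec{\alpha})$. It suffices to show that $k$ also approves $i$ with respect to $C^*$ and $\alpha'_k$, i.e.\ $i \succeq_k C^*_{\alpha'_k}$. Summing over $k$ then gives the inequality, since every member counted in $\mathrm{w\text{-}Support}_h^{\vec{\alpha}}(i,\z)$ is also counted in $\mathrm{Support}_h^{\vec{\alpha}'}(i,C^*)$.

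The key step is to identify $C^*_{\alpha'_k}$ exactly. By definition,
\[
\alpha'_k = \sum_{i' \in U_k(\z,\vec{\alpha})} z'_{i'} = \bigl|C^* \cap U_k(\z,\vec{\alpha})\bigr|,
\]
and $U_k(\z,\vec{\alpha})$ is an upper set of $\succ_k$, namely all students weakly above $i_{k,\z}^{\vec{\alpha}}$. So the top $\alpha'_k$ members of $C^*$ under $\succ_k$ are precisely $C^* \cap U_k(\z,\vec{\alpha})$. Consequently $C^*_{\alpha'_k}$ is the $\succ_k$-least element of $C^* \cap U_k(\z,\vec{\alpha})$.

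Since $i \in C^*$ and $i \in U_k(\z,\vec{\alpha})$, the student $i$ lies in this intersection. Hence $i \succeq_k C^*_{\alpha'_k}$, and $k$ approves $i$ with respect to $C^*$ and $\alpha'_k$.

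Two degenerate cases need checking.
\begin{itemize}
\item If $\alpha'_k = 0$, the intersection is empty, which is impossible because $i$ lies in it.
\item If $C^*_{\alpha'_k} = \emptyset$ because $|C^*| < \alpha'_k$, this cannot occur, since $\alpha'_k \le |C^*|$ by construction.
\end{itemize}
We also use that $\lVert \z \rVert_1 = c_h$, so $i_{k,\z}^{\vec{\alpha}} \neq \emptyset$ and $U_k$ is a genuine upper set. The only subtle point is that $\alpha'_k$ must be an integer in $\{1,\dots,c_h\}$ to serve as a valid rank parameter. Membership of $i$ gives $\alpha'_k \ge 1$, and $|C^*| = c_h$ gives $\alpha'_k \le c_h$, so this holds.
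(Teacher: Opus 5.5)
Your proof is correct and is essentially the paper's argument. Both rest on $U_k(\z,\vec{\alpha})$ being an upper set of $\succ_k$ with $|C^*\cap U_k(\z,\vec{\alpha})| = \alpha'_k$; the paper phrases this as a bound on the number of members of $C^*$ weakly above $i$, while you identify $C^*_{\alpha'_k}$ directly as the $\succ_k$-least element of that intersection and add the degenerate-case checks ($1\le\alpha'_k\le |C^*|$) that the paper leaves implicit.
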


\begin{proof}    

To prove this, we show that for any student $i \in C^*$, if a committee member $k$ $\alpha_k$-weakly approves $i$ with respect to $\z$, then $k$ also $\alpha'_k$-approves $i$ with respect to $C^*$.

 For any $i'\in C^*$ with $i'\succeq_k i$, we show that $i'\in U_k(\z,\vec{\alpha})$. To start with, having $i'\in C^*$ implies $z_{i'}>0$. Then, given that $i'\succeq_k i$ and $i\succeq_k i_{k,\z}^{\vec{\alpha}}$, it follows from transitivity that $i'\succeq_k i_{k,\z}^{\vec{\alpha}}$ and therefore we have $i'\in U_k(\z,\vec{\alpha})$. 

 Now, we can bound the total number of students whom $k$ weakly-prefers to $i$ as:
 $$ |i'\in C^*:i'\succeq_k i| = \sum_{i'\in C^*:i'\succeq_k i}z'_{i'} \leq \sum_{i'\in U_k(\z,\vec{\alpha})}z'_{i'} = \alpha'_k,$$
 This implies that $k$ $\alpha_k'$-approves $i$ with respect to $C^*$ as desired. 
\end{proof}

Next, we consider the set of students who are left out of the final allocation and derive results about their level of support. 
\begin{claim}\label{cor:support_outside}
For any student $i\notin C^*$, it holds that her degree of support with respect to $C^*$ and $\vec{\alpha}'$ is at most her degree of \textit{strong} support with respect to $\z$ and $\vec{\alpha}$ i.e.
$$\mathrm{Support}_h^{\vec{\alpha}'}(i,C^*)\leq \mathrm{s\text{-}Support}_h^{\vec{\alpha}}(i,\z).$$ 
\end{claim}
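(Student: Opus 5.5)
The plan mirrors Claim~\ref{cor:support_inside}, run in the opposite direction. Fix $i \notin C^*$, so $z'_i = 0$. I will show that if a committee member $k$ $\alpha'_k$-approves $i$ with respect to $C^*$, then $k$ strongly $\alpha_k$-approves $i$ with respect to $\z$, i.e.\ $i \in U^+_k(\z,\vec{\alpha})$. Counting such $k$ then gives the claimed inequality.

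Argue by contraposition. Suppose $i \notin U^+_k(\z,\vec{\alpha})$, so that $i_{k,\z}^{\vec{\alpha}} \succeq_k i$. I will show that there are at least $\alpha'_k$ students of $C^*$ ranked strictly above $i$ by $k$. This forces $C^*_{\alpha'_k} \succ_k i$, so $k$ does not $\alpha'_k$-approve $i$.

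The count comes from $U_k(\z,\vec{\alpha})$. Every $i' \in U_k(\z,\vec{\alpha})$ satisfies $i' \succeq_k i_{k,\z}^{\vec{\alpha}} \succeq_k i$. Hence either $i' \succ_k i$, or $i' = i$, which happens only when $i = i_{k,\z}^{\vec{\alpha}}$. Since $z'_i = 0$, the student $i$ contributes nothing to the sum $\alpha'_k = \sum_{i' \in U_k(\z,\vec{\alpha})} z'_{i'}$. Therefore
\[
\alpha'_k = \bigl|\{i' \in C^* \cap U_k(\z,\vec{\alpha})\}\bigr| \le \bigl|\{i' \in C^* : i' \succ_k i\}\bigr|.
\]
So $k$'s $\alpha'_k$-th favorite student in $C^*$ lies strictly above $i$, and $i \not\succeq_k C^*_{\alpha'_k}$.

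Two details need care. First, the edge case $\alpha'_k = 0$, or $C^*_{\alpha'_k} = \emptyset$, has to be handled by convention. If $\alpha'_k = 0$, the benchmark is undefined or empty and approval should be vacuous, so the statement needs a convention here. I would treat $\alpha'_k \ge 1$ as the meaningful case, noting that $|C^*| = c_h \ge \alpha'_k$, so the benchmark exists. Second, the boundary case $i = i_{k,\z}^{\vec{\alpha}}$ is exactly where weak and strong support differ. It is absorbed because $z'_i = 0$, which is the main subtlety. I expect no genuine obstacle beyond these bookkeeping points.
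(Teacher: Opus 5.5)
Your argument is correct and is essentially the paper's. Both rest on the fact that $C^*\cap U_k(\z,\vec{\alpha})$ has exactly $\alpha'_k$ members, all lying strictly above $i$ unless $i\succ_k i^{\vec{\alpha}}_{k,\z}$, with the tie $i=i^{\vec{\alpha}}_{k,\z}$ excluded by $z'_i=0$; the paper argues by contradiction through the lowest-ranked member $i^*$ of that set, where you use a contrapositive.

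On your edge case: the paper's $i^*$ also silently needs $\alpha'_k\ge 1$. The convention that makes both arguments and the claim work is that $k$ approves $i$ iff fewer than $\alpha'_k$ members of $C^*$ lie strictly above $i$, so nobody is approved when $\alpha'_k=0$. Reading $C^*_0$ as $\emptyset$ would instead make $k$ approve everyone, and the claim can then fail.
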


\begin{proof} 
To prove this, we show that
for any student $i\notin C^*$, if a committee member $k$ $\alpha'_k$-approves $i$ with respect to $C^*$, then $k$ must also strongly $\alpha_k$-approves $i$ with respect to $\z$.

Given that $k$ $\alpha'_k$-approves $i$ with respect to $C^*$, the total number of students in $C^*$ ranked \textit{strictly above} $i$ is at most $\alpha'_k - 1$. Then, let $i^*$ be voter $k$'s lowest ranked student in $C^*\cap U_k(\z,\vec{\alpha})$. It necessarily follows that $i\nprec_k i^*$, because having $i\prec_k i^*$ would imply that the entire set $U_k(\z,\vec{\alpha})\cap C^*$ is \textit{strictly above} $i$ and the total number of students strictly above $i$ is at least $\alpha_k'$, yielding a contradiction. Furthermore, since $i\notin C^*$, we have $i\neq i^*$. So, it must hold true that $i\succ_k i^*.$

Now, we know $i^*\succeq_k i^{\vec{\alpha}}_{k,\z}$ because $i^{\vec{\alpha}}_{k,\z}$ is by definition $k$'s lowest ranked student in $U_k(\z,\vec{\alpha})$. It follows from transitivity that $i\succ_k i_{k,\z}^{\vec{\alpha}}$ and therefore $k$ strongly $\alpha_k$-approves $i$ with respect to $\z$. 

\end{proof}

Thus, Claims~\ref{cor:support_inside} and~\ref{cor:support_outside}, together with the fact that $\z$ is an acceptable fractional set, imply the result.

\subsection{Proof of Lemma~\ref{lemma:iterround}}

\begin{proof}
We round $\z$ to an integral vector $\z'$ via iterative rounding. 
Fix all variables $z_i \in \{0,1\}$ by setting $z'_i := z_i$, and restrict attention to the remaining variables; hence, without loss of generality, $0<z_i<1$ for all $i\in S$.

Let
$a_k := \lfloor \sum_{i \in U^+_{k}(\z,\vec{\alpha})} z_i \rfloor$ and
$b_k := \lceil \sum_{i \in U^+_{k}(\z,\vec{\alpha})} z_i \rceil$.
Consider the following linear program:
\begin{align*}
\sum_{i \in S} x_i &= c_h && &\text{(capacity),} \\
a_k \;\le\; \sum_{i \in U^+_{k}(\z,\vec{\alpha})} x_i 
&\le\; b_k && \forall k \in K_h & \text{(committee demand),} \\
0 \;\le\; x_i &\le\; 1 && \forall i \in S & \text{(bounds).}
\end{align*}

Notice that, since $z$ is acceptable, for each $i \in S$ with $z_i < 1$, we have
$$\mathrm{s\text{-}Support}_h^{\vec{\alpha}}(i,\x) = \bigl|\{k\in K_h \mid i\in U_k^{+}(\x, \vec{\alpha})\}\bigr| \le \beta.$$
That is, $i$ is in at most $\beta$ number of sets $U^+_{k}(\z,\vec{\alpha})$, we can apply 
\cite{kiraly2012degree}: there exists an $0-1$ vector $\z'$ with
$\sum_{i\in S} z'_i=c_h$ and
\[
a_k-2\beta+1 \le \sum_{i \in U^+_{k}(\z,\vec{\alpha})} z'_i \le b_k+2\beta-1
\quad \forall k\in K_h .
\]

Using \eqref{eq:D2}, $a_k=\alpha_k-1$ and $b_k=\alpha_k$, hence
\[
\alpha_k-2\beta \le \sum_{i \in U^+_{k}(\z,\vec{\alpha})} z'_i
\le \alpha_k+2\beta-1 .
\]

Finally, since
$\sum_{i \in U_k(\z,\vec{\alpha})} z'_i -1
\le \sum_{i \in U^+_{k}(\z,\vec{\alpha})} z'_i
\le \sum_{i \in U_k(\z,\vec{\alpha})} z'_i$,
we obtain
\[
\alpha_k-2\beta \le \sum_{i \in U_k(\z,\vec{\alpha})} z'_i
\le \alpha_k+2\beta \qquad \forall k\in K_h .
\]
This shows that 
\[
\alpha_k-2\beta \le \alpha_k'
\le \alpha_k+2\beta \qquad \forall k\in K_h .
\]
\end{proof}

\section{Equilibrium Existence}
\subsection{Proof of Theorem~\ref{thm:LEO_existence}}
\begin{proof} [Proof of Theorem~\ref{thm:LEO_existence}] We will give the fixed point and a sketch of the proof here. The proof is virtually the same as in \citep{nguyen-song,voting_us_26}.   

Let us denote $$\Omega := \prod_{k\in K_h} [0,1]^{n+1}\times [0,1]^n\times  \prod_{k\in K_h} [0,1]^{n+1}.$$

We consider the following set-valued mapping $\mathcal{F}:\Omega \rightrightarrows 2^{\Omega}$ with
$\mathcal{F}(\y,\z,\prices) = (\y',\z',\prices') $ s.t.

\begin{subequations}
\begin{align}
y'_{k,i}&= \Pr[\mathcal{D}_{k}(\prices_{k}, \randbud{}) =i] & \forall i \in S\cup\{\emptyset\}, k\in K_h\label{eq:fp-y}\\
z_h' &= \mathcal{D}_h(\{\prices_k\}_{k\in K_h}) \label{eq:fp-z} \\ 
p'_{k,i} &= 
\begin{cases}
\max\left\{\min\left\{1,p_{k,i}+(\alpha_{k}\cdot y_{k,i}-z_{i})\right\} ,0 \right\} \text{  if } i \neq \emptyset \\
0 \text{  if } i = \emptyset
\end{cases} & \forall i \in S\cup\{\emptyset\}, k\in K_h \label{eq:fp-pk}
\end{align}
\end{subequations}

We can show that $\mathcal{F}$ satisfies the pre-requisites of Kakutani's fixed point theorem and therefore a fixed point $(\y,\z,\prices)$ exists. The detailed argument can be found in \citep{nguyen-song,voting_us_26}. 

We now prove that the fixed point meets the conditions of a matching equilibrium. Conditions (\ref{LEO-condition1}) and (\ref{LEO-condition2}) are satisfied by definition. We are left to prove that the fixed point meets condition (\ref{LEO-condition3}). 

To show $\alpha_k\cdot y_{k,i}\leq z_{i}$ as in the first part of condition (\ref{LEO-condition3}), we observe that if $\alpha_k\cdot y_{k,i}>z_{i}$ at a fixed point, it necessarily follows that $p_{k,i} = 1$ because otherwise we would have $p'_{k,i}>p_{k,i}$. For any continuous $\randbud{}$ supported on $[0,1]$, it holds that $\Pr_{b\sim \randbud{}}(b = 1) = 0$ and therefore $y_{k,i} = \Pr(\mathcal{D}_k({\prices_k,\randbud{}})= i)=0$, contradicting to the assumption of $\alpha_k\cdot y_{k,i}>z_{i}$. For the second part, if $\alpha_k\cdot y_{k,i}<z_{i}$, it necessarily follows that $p_{k,i} = 0$ because otherwise we would have $p'_{k,i}<p_{k,i}$. This completes the proof. 
\end{proof}

\subsection{MEO existence} 

We now show MEO exists when the cumulative function of the budget distribution is continuous. 
\begin{proof}[Proof of Theorem \ref{thm:meo_existence}]
It suffices to give the corresponding mapping. The existence of the fixed point follows in the same way as \citep{nguyen-song,voting_us_26}. The proof that the fixed point satisfies the desirable properties work in the same way as in the outline given in Theorem \ref{thm:LEO_existence}. 

Let us denote $$\Omega := \prod_{i=1}^n [0,1]^{m+1}\times \prod_{k\in K} [0,1]^{n+1}\times \prod_{i=1}^m [0,1]^n\times \prod_{i=1}^n [0,1]^{m+1}\times\prod_{k\in K} [\delta,1]^{n+1}.$$

We consider the following set-valued mapping $\mathcal{F}:\Omega \rightrightarrows 2^{\Omega}$ with
$\mathcal{F}(\x,\y,\z,\bold{q},\prices) = (\x',\y',\z',\bold{q}',\prices') $ s.t.

\begin{subequations}
\begin{align}
x'_{i,h} &= \Pr[\mathcal{D}_{i}(\prices_{i}, \randbud{}) =h] & \forall i \in S, h\in \Hos\cup\{\emptyset\}, \label{eq:fp-x} \\
y'_{k,i}&= \Pr[\mathcal{D}_{k}(\prices_{k}, \randbud{}) =i] & \forall i \in S\cup\{\emptyset\}, k\in K\\
z_h' &= \mathcal{D}_h(\{\prices_k\}_{k\in K_h}, \{\bold{q}_i\}_{i\in D})&\forall h\in H \label{eq:fp-z} \\ 
q'_{i,h} &= 
\begin{cases}
\max\left\{\min\left\{1,q_{i,h}+(x_{i,h}-z_{h,i})\right\} ,0 \right\} \text{  if } h \neq \emptyset \\
0 \text{  if } h = \emptyset
\end{cases} & \forall i \in S, h \in \Hos\cup \{\emptyset\}\label{eq:fp-pi}\\
p'_{k,i} &= 
\begin{cases}
\max\left\{\min\left\{1,p_{k,i}+(\alpha_{k}\cdot y_{k,i}-z_{h,i})\right\} ,\delta \right\} \text{  if } i \neq \emptyset \\
\delta \text{  if } i = \emptyset
\end{cases} & \forall i \in S\cup\{\emptyset\}, h \in \Hos, k\in K_h \label{eq:fp-pk}
\end{align}
\end{subequations}

\end{proof}

\section{ Proof of Theorem~\ref{thm:exist}}\label{app:LEO-accept}

For notation brevity, we use $\mathbf{U}[a, b]$ to denote the uniform distribution on $[a,b]$.

\begin{claim}\label{clm:prices_LEO}
Let $h$ be a school with capacity $c_h$, a set of committee members $K_h$, and a ranking vector $\vec{\alpha} \in [0,c_h]^{|K_h|}$. Given $\varepsilon \in (0,1)$, let $(\y, \z, \prices)$ be the corresponding LEO with $\randbud{} = \mathbf{U}[1-\varepsilon, 1]$, and define
\[
p^* = \min\Big\{\sum_{k \in K_h} p_{k,i} : i \in S, z_i > 0\Big\}.
\]
Then the following hold:
\begin{enumerate}
    \item $p^* \le \frac{\sum_{k \in K_h} \alpha_k}{c_h}$\label{claim-price-condition1};
    \item for each $i$ with $z_i < 1$, $\sum_{k \in K_h} p_{k,i} \le p^*$\label{claim-price-condition2}.
\end{enumerate}
\end{claim}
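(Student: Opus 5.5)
Part (2) comes from the optimality of the central allocation, and part (1) from a revenue-versus-payment comparison.

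\textbf{Part (2).} Since $\z \in \mathcal{D}_h(\{\prices_k\}_{k\in K_h})$, the vector $\z$ maximizes the linear objective $\sum_i P_i x_i$ over $\{\x\in[0,1]^n : \sum_i x_i = c_h\}$, where $P_i := \sum_{k\in K_h} p_{k,i}$. The usual exchange argument gives the claim. Suppose some $i$ with $z_i<1$ had $P_i > p^*$. Let $j$ attain $p^*$, so $z_j>0$ and $P_j = p^*$; note $j \neq i$. Shifting a small mass $\eta>0$ from $j$ to $i$ keeps feasibility and raises the objective by $\eta(P_i - P_j)>0$. This contradicts optimality, so $P_i \le p^*$ whenever $z_i<1$.

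\textbf{Part (1), bounding revenue by payments.} The revenue of $\z$ is $R := \sum_i P_i z_i$. By definition of $p^*$, every $i$ with $z_i>0$ has $P_i \ge p^*$, so $R \ge p^* \sum_i z_i = p^* c_h$. Next, use LEO condition~(\ref{LEO-condition3}). For each $k$ and $i$, either $p_{k,i}=0$ or $z_i = \alpha_k y_{k,i}$. In both cases $p_{k,i} z_i = \alpha_k\, p_{k,i}\, y_{k,i}$, since when $p_{k,i}=0$ both sides vanish. Hence
\[
R = \sum_{k\in K_h} \alpha_k \sum_{i\in S} p_{k,i}\, y_{k,i}.
\]
The inner sum is $k$'s expected payment under the random demand, because $y_{k,i}$ is the probability that $k$ demands $i$. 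The demanded option always has price at most the realized budget $b \le 1$, and $p_{k,\emptyset}=0$. So $\sum_{i\in S} p_{k,i} y_{k,i} \le \mathbb{E}[b]\le 1$. Combining, $p^* c_h \le R \le \sum_k \alpha_k$, which gives (1).

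\textbf{Main obstacle.} The only delicate point is turning the inequality in condition~(\ref{LEO-condition3}) into an exact identity for revenue. It must be applied termwise, and it relies on the complementary-slackness form (strict inequality only at zero price). The payment bound $p_{k,i}\le b$ for the chosen option follows immediately from the definition of $\mathcal{D}_k$. If desired, one can sharpen $\mathbb{E}[b]\le 1$ to $\mathbb{E}[b] = 1-\varepsilon/2$, but the claim does not need this.
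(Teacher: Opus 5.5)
Your proof is correct and follows the same route as the paper. Part (2) uses the exchange argument, and part (1) uses the revenue sandwich $p^* c_h \le R(\z) \le \sum_{k} \alpha_k$, obtained from condition~(3) of the LEO and the expected-payment bound. The only differences are cosmetic: you argue part (1) directly rather than by contradiction, and you spell out the termwise identity $p_{k,i} z_i = \alpha_k\, p_{k,i}\, y_{k,i}$, which follows from the complementary-slackness form of condition~(3) and which the paper uses implicitly.
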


\begin{proof}
We first prove condition (\ref{claim-price-condition1}) by contradiction. Suppose that $p^* > \frac{\sum_{k \in K_h} \alpha_k}{c_h}$. Let $R(\x)$ denote the total revenue collected from a fractional allocation $\x$.  

By condition (\ref{LEO-condition3}) of LEO, the revenue collected from committee member $k$ is at most $\alpha_k$ times the expected budget. That is,
\begin{equation}\label{eq:individual_revenue_bound}
\sum_{i \in S} p_{k,i} z_i = \sum_{i \in S} p_{k,i} (\alpha_k \cdot y_{k,i}) = \alpha_k \sum_{i \in S} p_{k,i} y_{k,i} \le \alpha_k \cdot \mathbb{E}_{b \sim \randbud{}}[b] \le \alpha_k.
\end{equation}

Summing over all committee members, the total revenue is bounded above:
\begin{equation}\label{eq:revenue_bound_upper}
R(\z) = \sum_{k \in K_h} \sum_{i \in S} p_{k,i} z_i \le \sum_{k \in K_h} \alpha_k.
\end{equation}

On the other hand, by definition of $p^*$, we have
\begin{equation}\label{eq:revenue_bound_lower}
R(\z) = \sum_{i \in S} \Big(\sum_{k \in K_h} p_{k,i}\Big) z_i \ge \sum_{i \in S, z_i>0} p^* z_i \ge p^* \cdot c_h > \frac{\sum_{k \in K_h} \alpha_k}{c_h} \cdot c_h = \sum_{k \in K_h} \alpha_k,
\end{equation}
which contradicts (\ref{eq:revenue_bound_upper}). Hence $p^* \le \frac{\sum_{k \in K_h} \alpha_k}{c_h}$.

For condition (\ref{claim-price-condition2}), suppose by contradiction that there exists a student $\hat{i}$ with $z_{\hat{i}} < 1$ and $\sum_{k \in K_h} p_{k,\hat{i}} > p^*$. Let $i^*$ be a student with $z_{i^*} > 0$ achieving the minimal cumulative price $p^*$. Define 
\[
\delta = \min(1 - z_{\hat{i}}, z_{i^*}),
\]
and construct a new allocation $\hat{\z}$ by
\[
\hat{z}_i =
\begin{cases}
z_i & i \notin \{i^*, \hat{i}\} \\
z_{i} + \delta & i = \hat{i} \\
z_{i} - \delta & i = i^*.
\end{cases}
\]

The change in revenue is
\[
R(\hat{\z}) - R(\z) = \left(\sum_{k \in K_h} p_{k,\hat{i}} - p^*\right) \delta > 0,
\]
which contradicts the revenue-maximizing property of $\z$. Therefore, $\sum_{k \in K_h} p_{k,i} \le p^*$ for all $i$ with $z_i < 1$, completing the proof.
\end{proof}

Our next step is to derive the relationship between cumulative price and the degree of fractional support at a LEO. For this purpose, we need the following technical lemma showing that each committee's $\alpha_k$-rank student is also her favorite student with a price of at most $1-\varepsilon$. The proof relies on the relationship between individual random demand and the central allocation at a LEO as seen in Condition (\ref{LEO-condition3}) of Definition \ref{def:LEO}. 

\begin{claim}\label{lemma:boundary_student}
Let $h$ be a school with a capacity $c_h$, a set of committee members $K_h$, and a ranking vector $\vec{\alpha}\in [1,c_h]^{|K_h|}$. Given $\varepsilon\in(0,1)$, let $(\y,\z,\prices)$ be the corresponding LEO with $\randbud{}=\mathbf{U}[1-\varepsilon,1]$. Then, for each committee member $k\in K_h$, her $\alpha_k$-rank student with respect to $\z$ is also her favorite student with a price of at most $1-\varepsilon$  i.e. 
$$i_{k,\z}^{\vec{\alpha}} = \max_{\succ_k}\{i\in S:p_{k,i}\leq 1-\varepsilon\}.$$
\end{claim}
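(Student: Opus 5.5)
The plan is to set $j:=\max_{\succ_k}\{i\in S: p_{k,i}\le 1-\varepsilon\}$ and check directly, using Definition~\ref{def:boundary_student}, that $j$ is the $\alpha_k$-rank student. This means showing three things: $\sum_{i\in S}z_i=c_h$, so the $\emptyset$ case does not occur; $\sum_{i\succeq_k j} z_i\ge\alpha_k$; and $\sum_{i\succ_k j} z_i<\alpha_k$. The last two together say that $j$ is the most preferred student at which the cumulative mass reaches $\alpha_k$. The first holds because $\z\in\mathcal{D}_h$ forces $\lVert\z\rVert_1=c_h$ (condition~\eqref{LEO-condition2}).

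\emph{Step 1: $j$ exists.} Suppose instead that every student has $p_{k,i}>1-\varepsilon$. Every such price is strictly positive, so condition~\eqref{LEO-condition3} holds with equality, $z_i=\alpha_k y_{k,i}$, for all $i\in S$. Let $\underline p:=\min_{i\in S}p_{k,i}>1-\varepsilon$. For budgets $b\in[1-\varepsilon,\underline p)$ the only affordable option is $\emptyset$, whose price is $0$. This interval has positive probability under $\mathbf U[1-\varepsilon,1]$, so $y_{k,\emptyset}>0$. Hence
\[
c_h=\sum_{i\in S} z_i=\alpha_k(1-y_{k,\emptyset})<\alpha_k\le c_h,
\]
a contradiction.

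\emph{Step 2: the demand is concentrated on $\{i: i\succeq_k j\}$.} Every budget satisfies $b\ge 1-\varepsilon\ge p_{k,j}$, so $j$ is always affordable. Therefore $k$ never demands $\emptyset$ or any student ranked below $j$, which gives $\sum_{i\succeq_k j}y_{k,i}=1$.

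\emph{Step 3: mass above $j$.} Every $i\succ_k j$ has $p_{k,i}>1-\varepsilon>0$ by the maximality of $j$, so condition~\eqref{LEO-condition3} is tight for these students: $z_i=\alpha_k y_{k,i}$. Summing over them gives $\sum_{i\succ_k j}z_i=\alpha_k(1-y_{k,j})$. For $j$ itself, condition~\eqref{LEO-condition3} gives $z_j\ge\alpha_k y_{k,j}$. Adding the two yields $\sum_{i\succeq_k j}z_i\ge\alpha_k$.

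\emph{Step 4: strict inequality, $y_{k,j}>0$.} This is the step that needs care. Let $\bar p:=\min\{p_{k,i}: i\succ_k j\}$, taking $\bar p:=1$ if no student is ranked above $j$. Since $S$ is finite and each of these prices exceeds $1-\varepsilon$, we have $\bar p>1-\varepsilon$. For every $b\in[1-\varepsilon,\bar p)$ no student above $j$ is affordable but $j$ is, so $k$ demands $j$. The uniform budget puts positive mass on this interval, so $y_{k,j}>0$. Consequently $\sum_{i\succ_k j}z_i=\alpha_k(1-y_{k,j})<\alpha_k$.

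Steps 3 and 4 together show that $j$ is exactly the most preferred student whose upper set carries mass at least $\alpha_k$, that is, $j=i^{\vec\alpha}_{k,\z}$.
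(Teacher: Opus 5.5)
Your proof is correct and follows the paper's approach. It uses the same two facts as the paper's two contradiction cases: $j$ is always affordable, so its weak upper set carries $\z$-mass at least $\alpha_k$; and condition~(3) is tight above $j$ with $y_{k,j}>0$, so the mass strictly above $j$ is below $\alpha_k$. You simply verify these directly. Your Step~1, showing that some student has price at most $1-\varepsilon$ (and taking $\bar p=1$ when nobody is ranked above $j$), fills in edge cases that the paper's proof takes for granted.
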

\begin{proof}
To ease the burden of notation, we will write $i^*:= \max_{\succ_k}\{i\in S:p_{k,i}\leq 1-\varepsilon\}$. Suppose by contradiction that $i_{k,\z}^{\vec{\alpha}}\neq i^*$. Since we assume that each committee member has a strict order, we break into two cases and show that each case leads to a contradiction.  
\begin{itemize}
\item [\textbf{Case 1:}]$i_{k,\z}^{\vec{\alpha}}\prec_k i^*$. In this case, we have $\sum_{i\succeq i^*}y_{k,i} = 1$. Then, by condition (\ref{LEO-condition3}) of the definition of LEO, we can derive  
$$ \sum_{i\succeq i^*}z_i \geq \alpha_k\cdot \sum_{i\succeq i^*}y_{k,i} = \alpha_k.$$
As a result, $i_{k,\z}^{\vec{\alpha}}$ would not be $k$'s favorite student with total weight above being at least $\alpha_k$, contradicting to the definition of $\alpha_k$-rank student. 
\end{itemize}

\item [\textbf{Case 2:}]$i_{k,\z}^{\vec{\alpha}}\succ_k i^*$.  
By assumption, $p_{k,i}>1-\varepsilon$ for any $i\succ_k i^*$. In particular, it implies that there does not exist any $i'\succ_k i^*$ with $\alpha_k\cdot y_{k,i}<z_{i}$: if such an $i'$ exists, it necessarily follows that $p_{k,i'} = 0\leq 1-\varepsilon$, yielding a contradiction. Therefore, we can obtain $\alpha_k\cdot y_{k,i} = z_{i}$ for any $i\succ_k i^*$. 

Now, we can see that $y_{k,i^*}>0$ because, given the discussion above, $i^*$ is $k$'s favorite affordable student under the budget $b\in \big[1-\varepsilon, \min\{p_{k,i}:i\succ_k i^*\}\big)$. 
It then follows that the total weight above $i_{k,\z} ^{\vec{\alpha}}$ in $k$'s random demand is \textit{strictly} smaller than 1 i.e.: 
$$ \sum_{i\succeq i_{k,\z}^{\vec{\alpha}}}y_{k,i}\leq \sum_{i\neq i^*}y_{k,i} = 1 - y_{k,i^*}<1.$$
We can then derive a \textit{strict} upper bound on the total weight above $i_{k,\vec{z}}^{\vec{\alpha}}$ in $\z$:  
$$ \sum_{i\succeq i_{k,\z}^{\vec{\alpha}}}z_{i} =\alpha_k\cdot \sum_{i\succeq i_{k,\z}^{\vec{\alpha}}}y_{k,i}<\alpha_k.$$
Now, because $|\z|_1 = c_h\geq \alpha_k$, it follows $i_{k,\z}\neq \emptyset$. Then, it holds by definition that $\sum_{i\succeq i_{k,\z}^{\vec{\alpha}}}z_{i}$ is \textit{at least} $\alpha_k$, which contradicts to the derivation above and thus completes the proof. 
\end{proof}
With Claim \ref{lemma:boundary_student}, we are able to translate the results on cumulative price to results on the level of support. In particular, the following claim shows that a LEO with $\randbud{} = \mathbf{U}[1-\varepsilon, 1 ]$ can be seen as an ``approximately" acceptable fractional set, where the condition of no blocking is violated by a factor of at most $\frac{1}{1-\varepsilon}$.

\begin{claim}\label{clm:LEO_approx}
Let $h$ be a school with a capacity $c_h$, a set of committee members $K_h$, and a ranking vector $\vec{\alpha}\in [0,c_h]^{|K_h|}$. Given $\varepsilon\in(0,1)$, let $(\y,\z,\prices)$ be the corresponding LEO with $\randbud{}=\mathbf{U}[1-\varepsilon,1]$. Then, there exists a $\beta\leq \frac{\sum_{k\in K_h}\alpha_k}{c_h}$ s.t. the following statements are true for $\z$: 
\begin{enumerate}
\item $|\z|_1 = c_h$; \label{LEO-approx-condition1}
\item for each $i$ with $z_i>0$, $\mathrm{w\text{-}Support}_h^{\vec{\alpha}}(i,\z)\geq \beta$; \label{LEO-approx-condition2} 
\item for each $i$ with $z_i<1$, $\mathrm{s\text{-}Support}_h^{\vec{\alpha}}(i,\z)\leq \lfloor \frac{\beta}{1-\varepsilon}\rfloor.$\label{LEO-approx-condition3}
\end{enumerate}

\end{claim}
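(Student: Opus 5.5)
The plan is to take $\beta := p^*$ from Claim~\ref{clm:prices_LEO}, so that $\beta \le \frac{\sum_{k\in K_h}\alpha_k}{c_h}$ holds by part~(\ref{claim-price-condition1}) of that claim. Condition~(\ref{LEO-approx-condition1}) is immediate: $\z \in \mathcal{D}_h(\{\prices_k\})$ and the central agent's program imposes $\sum_i z_i = c_h$. The remaining two conditions will both come from translating personalized prices into approval relations via Claim~\ref{lemma:boundary_student}, which identifies $i^{\vec{\alpha}}_{k,\z}$ as $k$'s favorite student with $p_{k,i}\le 1-\varepsilon$.

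For condition~(\ref{LEO-approx-condition2}), fix $i$ with $z_i>0$. By definition of $p^*$ we have $\sum_{k} p_{k,i} \ge p^* = \beta$. If $k$ does not weakly approve $i$, then $i \prec_k i^{\vec{\alpha}}_{k,\z}$. I would then argue that $p_{k,i}=0$. Since $i$ lies strictly below $k$'s favorite affordable student at every budget in $[1-\varepsilon,1]$ (all such budgets afford $i^{\vec{\alpha}}_{k,\z}$, whose price is at most $1-\varepsilon$), we get $y_{k,i}=0$. Then $\alpha_k y_{k,i} = 0 < z_i$, and condition~(\ref{LEO-condition3}) of LEO forces $p_{k,i}=0$. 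Prices lie in $[0,1]$, so
\[
\beta \le \sum_k p_{k,i} \le \bigl|\{k : i \succeq_k i^{\vec{\alpha}}_{k,\z}\}\bigr| = \mathrm{w\text{-}Support}_h^{\vec{\alpha}}(i,\z).
\]

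For condition~(\ref{LEO-approx-condition3}), fix $i$ with $z_i<1$. Part~(\ref{claim-price-condition2}) of Claim~\ref{clm:prices_LEO} gives $\sum_k p_{k,i} \le p^* = \beta$. If $k$ strongly approves $i$, i.e.\ $i \succ_k i^{\vec{\alpha}}_{k,\z}$, then by the maximality in Claim~\ref{lemma:boundary_student} we have $p_{k,i} > 1-\varepsilon$. Hence
\[
(1-\varepsilon)\,\mathrm{s\text{-}Support}_h^{\vec{\alpha}}(i,\z) \le \sum_k p_{k,i} \le \beta ,
\]
and integrality of the support count yields the floor $\lfloor \beta/(1-\varepsilon)\rfloor$.

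The main obstacle is the step $p_{k,i}=0$ in condition~(\ref{LEO-approx-condition2}), specifically confirming $y_{k,i}=0$ for students below the boundary student. This requires that every budget realization in $[1-\varepsilon,1]$ affords $i^{\vec{\alpha}}_{k,\z}$, which holds because its price is at most $1-\varepsilon$. One edge case must be handled: $i^{\vec{\alpha}}_{k,\z}\neq\emptyset$. This follows from $|\z|_1=c_h\ge\alpha_k$, exactly as in Claim~\ref{lemma:boundary_student}.
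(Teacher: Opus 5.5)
Your proposal is correct and follows the paper's proof essentially step for step. You set $\beta = p^*$ from Claim~\ref{clm:prices_LEO}; you use Claim~\ref{lemma:boundary_student} to get $y_{k,i}=0$, and hence $p_{k,i}=0$ via condition~(\ref{LEO-condition3}) of the LEO, for members who do not weakly approve a student with $z_i>0$; and you combine $p_{k,i}>1-\varepsilon$ for strong approvers with $\sum_k p_{k,i}\le p^*$ when $z_i<1$, exactly as the paper does, while also spelling out details the paper leaves implicit (that every budget in $[1-\varepsilon,1]$ affords $i^{\vec{\alpha}}_{k,\z}$, and that $i^{\vec{\alpha}}_{k,\z}\neq\emptyset$).
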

\begin{proof}
Taking $p^* = \min\{\sum_{k\in K_h}p_{k,i}:i\in S,z_i>0\}$ from Claim \ref{clm:prices_LEO}, we set $\beta = p^*\leq \frac{\sum_{k\in K_h}\alpha_k}{c_h}$. We can see that condition (\ref{LEO-approx-condition1}) is trivially satisfied by the definition of LEO. We are now left to show conditions (\ref{LEO-approx-condition2}) and (\ref{LEO-approx-condition3}). 

Now, we show (\ref{LEO-approx-condition2}). Pick an arbitrary student $i'$ with $z_{i'}>0$. Then, we start by observing that $p_{k,i'}>0$ for a committee member $k$ only if $k$ weakly supports $i'$. To see this, suppose a committee member $k'$ does not weakly support $i'$ and therefore $i'\prec_{k'} i_{k',\z}^{\vec{\alpha}}$. Because the price of $i_{k,\z}^{\vec{\alpha}}$ for $k$ is at most $1-\varepsilon$ by Claim \ref{lemma:boundary_student}, the random demand of $k'$ will never pick $i'$ and therefore $y_{k',i'} = 0$. It then follows that $\alpha_{k'}y_{k',i'}<z_{i'}$ and therefore $p_{k',i'} =0$. 

Based on the observation above, Claim \ref{clm:prices_LEO}, and the fact that personalized prices are at most 1, we can derive 
$$|k:i'\succeq_k i_{k,\z}^{\vec{\alpha}}|\geq \sum_{k:i'\succeq_k i_{k,\z}^{\vec{\alpha}}}p_{k,i'}\geq p^*,$$
from which we obtain $\mathrm{w\text{-}Support}_h^{\vec{\alpha}}(i',\z)\geq p^*=\beta$ as desired. 

Next, we show $(\ref{LEO-approx-condition3})$. Pick an arbitrary student $\hat{i}$ with $z_{\hat{i}}<1$. We observe that for any committee member $k$ who strongly supports $\hat{i}$, it holds $p_{k,\hat{i}}>1-\varepsilon$ because, by Claim \ref{lemma:boundary_student}, $i_{k,\z}^{\vec{\alpha}}$ is $k$'s favorite student with a price of at most $1-\varepsilon$. Given that the cumulative price for $\hat{i}$ is at most $p^*$ by Claim \ref{clm:prices_LEO}, we can derive 
$$(1-\varepsilon)\cdot |k:\hat{i}\succ_k i_{k,\z}^{\vec{\alpha}}|\leq \sum_{k:\hat{i}\succ_k i_{k,\z}^{\vec{\alpha}}}p_{k,\hat{i}}\leq \sum_{k\in K_h}p_{k,\hat{i}}\leq p^*,$$
from which we obtain and $|k:\hat{i}\succ_k i_{k,\z}^{\vec{\alpha}}|= \mathrm{s\text{-}Support}_h^{\vec{\alpha}}(i,\z)\leq \lfloor \frac{p^*}{1-\varepsilon}\rfloor$ as desired. 
\end{proof}

Finally, Theorem~\ref{thm:exist} follows from Claim~\ref{clm:LEO_approx} by choosing $\varepsilon>0$ sufficiently small. Specifically, we can pick $\varepsilon$ so that, for all 
\(\beta \le \frac{\sum_{k\in K_h} \alpha_k}{c_h}\),  
\[
\frac{\beta}{1-\varepsilon} - \beta < 1 \quad 
\]  
It then follows that  
\[
 \left\lfloor \frac{\beta}{1-\varepsilon} \right\rfloor \le \lceil \beta \rceil \le  \left\lceil \frac{\sum_{k\in K_h} \alpha_k}{c_h} \right\rceil.
\]  

Let $\beta$ be the parameter in Claim~\ref{clm:LEO_approx}, and let $(\y,\z,\prices)$ be the corresponding LEO. Define  
\[
\beta^* := \left\lfloor \frac{\beta}{1-\varepsilon} \right\rfloor.
\]  
We then have that $\z$ is an acceptable fractional set with respect to $\beta^* \le \left\lceil \frac{\sum_{k\in K_h} \alpha_k}{c_h} \right\rceil$:

\begin{itemize}
    \item Condition~(\ref{LEO-approx-condition1}) of Claim~\ref{clm:LEO_approx} guarantees non-wastefulness.
    \item Condition~(\ref{LEO-approx-condition2}) implies 
    \[
    \mathrm{w\text{-}Support}_h^{\vec{\alpha}}(i,\z) \ge \beta \quad \Rightarrow \quad 
    \mathrm{w\text{-}Support}_h^{\vec{\alpha}}(i,\z) \ge \lceil \beta \rceil \ge \beta^*.
    \]
    \item Condition~(\ref{LEO-approx-condition3}) ensures 
    \[
    \mathrm{w\text{-}Support}_h^{\vec{\alpha}}(i,\z) \le \beta^*.
    \]
\end{itemize}

Hence, all requirements for an acceptable fractional set are satisfied by $\z$.


\section{Proof of Theorem~\ref{thm:matching_exists}}\label{app:thm:matching_exists}

We first give two technical lemmas. The first one shows the total revenue of a school is upper bonded by the total budget of its committee members plus an amount resulting from the lower bound $\delta$ of each committee member's pricing. 

For notation brevity, we use $\mathbf{U}[a, b]$ to denote the uniform distribution on $[a,b]$.

\begin{lemma}\label{lemma:revenue_with_delta}
Let $(\Hos, S, \{c_h\}_{h\in \Hos},\Vec{\alpha}:=\{\alpha_{k}\}_{k\in K})$ be an instance of the matching problem. Let $(\x,\y,\z,\prices, \bold{q})$ be a MEO under $\delta>0$ and $\randbud{}= \mathbf{U}[1-\varepsilon, 1]$ with $\varepsilon>0$. Then, given a school $h\in \Hos$, its total utility $U(\z_h):=\sum_{i \in S}u_{h,i} z_{h,i}$ is upper bounded by the total revenue
$$R(\z_h):= \sum_{k\in K_h}\sum_{i\in S}p_{k,i}\cdot z_{h,i} \leq \big(\sum_{k\in K_h}\alpha_k\big)+ \delta\cdot c_h\cdot |K_h|.$$
\end{lemma}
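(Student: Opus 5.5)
We prove both parts of Lemma~\ref{lemma:revenue_with_delta}: first that utility is at most revenue, then the bound on revenue.

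\emph{Utility versus revenue.} Since $q_{i,h}\in[0,1]$, we have
\[
u_{h,i}=q_{i,h}\sum_{k\in K_h}p_{k,i}\le \sum_{k\in K_h}p_{k,i},
\]
and $z_{h,i}\ge 0$. Summing over $i$ gives $U(\z_h)\le R(\z_h)$ directly.

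\emph{Bounding revenue.} We bound $R(\z_h)$ one committee member at a time, following \eqref{eq:individual_revenue_bound}, and absorb the new $\delta$-slack from condition~(\ref{meo-condition5}) of the MEO. Fix $k\in K_h$ and split the students into two sets.
\begin{itemize}
\item $T_k:=\{i: \alpha_k y_{k,i}=z_{h,i}\}$, the students where condition~(\ref{meo-condition5}) holds with equality.
\item The complement, where $\alpha_k y_{k,i}<z_{h,i}$. For these students the MEO forces $p_{k,i}=\delta$.
\end{itemize}
Then
\[
\sum_{i\in S}p_{k,i}z_{h,i}=\alpha_k\sum_{i\in T_k}p_{k,i}y_{k,i}+\delta\sum_{i\notin T_k}z_{h,i}.
\]
The second sum is controlled by capacity: $\sum_{i\notin T_k}z_{h,i}\le \sum_i z_{h,i}=c_h$, since $\z_h$ is feasible for $h$'s program. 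For the first sum, $y_{k,\cdot}$ is the distribution of $k$'s random demand. Whenever $k$ demands a student $i$ under budget realization $b$, affordability gives $p_{k,i}\le b\le 1$. Hence
\[
\sum_{i\in S}p_{k,i}y_{k,i}\le \mathbb{E}_{b\sim\randbud{}}[b]\le 1.
\]
The outside option contributes nonnegatively to this expectation and we simply drop it.

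\emph{Summing.} Each member therefore contributes at most $\alpha_k+\delta c_h$. Summing over $k\in K_h$ gives
\[
R(\z_h)\le \sum_{k\in K_h}\alpha_k+\delta\, c_h\,|K_h|,
\]
as claimed.

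The step needing care is the case split. The equality case must be paired with the expected-budget bound, and the slack case with the price floor $\delta$. One should also check that $z_{h,i}$ is indexed consistently, since for $k\in K_h$ only school $h$'s allocation enters. No further obstacle is expected.
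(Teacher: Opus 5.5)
Your proof is correct and is essentially the paper's argument. You use the same per-member split into the equality set and the slack set, where condition~(\ref{meo-condition5}) forces $p_{k,i}=\delta$; you bound the first part by $\alpha_k$ times the expected budget and the second by $\delta c_h$, sum over $K_h$, and use $q_{i,h}\le 1$ to get $U(\z_h)\le R(\z_h)$.
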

\begin{proof}

Fix a voter $k\in K_h$, we decompose the set of students as $S_{eq}:=\{i\in S:\alpha_k\cdot y_{k,i} = z_{h,i}\}$ and $S_< := \{i\in S:\alpha_k\cdot y_{k,i}<z_{h,i}\}$. Based on the decomposition and the fact that a $k$'s spending can never exceed her expected budget, we can upper bound the total revenue collected from $k$ as: 
\begin{align*}
\sum_{i\in S}p_{k,i}\cdot z_{h,i} &=  \sum_{i\in S_{eq}}p_{k,i}\cdot z_{h,i} + \sum_{i\in S_{<}}p_{k,i}\cdot z_{h,i} = \alpha_k\cdot \big(\sum_{i\in S_{eq}}p_{k,i}\cdot y_{k,i}\big) + \delta\cdot \sum_{i\in S_<} z_{h,i}\\
&\leq \alpha_k\cdot \mathbb{E}_{b \sim \randbud{}}[b]+\delta\cdot c_h =\alpha_k + \delta\cdot c_h.
\end{align*}

Then, we can bound the total revenue by summing the bounds over the voters in $K_h$: 
$$ R(\z_h) = \sum_{k\in K_h}\big( \sum_{i\in S}p_{k,i}\cdot z_{h,i}\big)\leq \sum_{k\in K_h}(\alpha_k + \delta\cdot c_h)=\big(\sum_{k\in K_h}\alpha_k\big)+ \delta\cdot c_h\cdot |K_h|.$$

Furthermore, we know that the price $q_{i,h}$ of any school $h$ for any student $i$ is at most 1. Therefore, the utility of any student for $h$ is at most the cumulative price of $i$. As a result, we have $U(\z_h)\leq R(\z_h)$ and the bound follows. 
\end{proof}

Similar to the case of $\alpha_k$-rank student, we also give an equivalent characterization of the boundary school based on its price. 

\begin{lemma}\label{lemma:boundary_school_price}
Let $(\Hos, S, \{c_h\}_{h\in \Hos},\Vec{\alpha}:=\{\alpha_{k}\}_{h\in \Hos, k\in K_h})$ be an instance of the matching problem with $\sum_{h\in \Hos}c_h\leq |S|$. Let $(\x,\y,\z,\prices, \bold{q})$ be a MEO under $\delta>0$ and $\randbud{}=\mathbf{U}[1-\varepsilon, 1]$ with $\varepsilon>0$. Then, given a student $s\in S$, her boundary school with respect to $\z$ is also her most favorite school with a price of at most $1-\varepsilon$ i.e.: 
$$h_{i,\z} = \max_{\succ_i}\{h\in \Hos\cup \{\emptyset\}:q_{i,h}\leq 1-\varepsilon\}.$$
\end{lemma}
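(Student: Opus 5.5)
The plan mirrors Claim~\ref{lemma:boundary_student}, with student $i$'s random demand $\x_i$ playing the role of $\y_k$ and the schools' allocations $z_{h,i}$ playing the role of $\z$. Here $\alpha$ is replaced by $1$, and the ``capacity'' is the unit mass of student $i$.

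By Lemma~\ref{lemma:equal_x_z}, which applies since $\sum_h c_h\le |S|$, we have $z_{h,i}=x_{i,h}$ for every $h\in\Hos$. Hence
\[
\sum_{h\in\Hos} z_{h,i}=1-x_{i,\emptyset},
\]
and the boundary school $h_{i,\z}$ is determined entirely by the lottery $\x_i$. It is $\emptyset$ when $x_{i,\emptyset}>0$. Otherwise it is the $\succ_i$-least school in the support of $\x_i$.

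Write $h^*:=\max_{\succ_i}\{h\in\Hos\cup\{\emptyset\}:q_{i,h}\le 1-\varepsilon\}$. This is well defined because $q_{i,\emptyset}=0$. The budget $b\sim\mathbf{U}[1-\varepsilon,1]$ always satisfies $b\ge 1-\varepsilon$, so $h^*$ is always affordable. Thus the demand never selects anything $\prec_i h^*$, and
\[
\sum_{h\succeq_i h^*}x_{i,h}=1.
\]
It remains to show that $x_{i,h^*}>0$. Every $h\succ_i h^*$ has $q_{i,h}>1-\varepsilon$. Let $\underline q:=\min\{q_{i,h}:h\succ_i h^*\}$, with $\underline q:=1$ if this set is empty. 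For every budget $b\in[1-\varepsilon,\underline q)$, the most preferred affordable option is exactly $h^*$. Since $\underline q>1-\varepsilon$ and the budget distribution has a density, this interval has positive probability, so $x_{i,h^*}>0$.

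Now split into two cases.

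\textbf{Case $h^*=\emptyset$.} Then $x_{i,\emptyset}>0$, so $\sum_h z_{h,i}<1$ and $h_{i,\z}=\emptyset=h^*$.

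\textbf{Case $h^*\in\Hos$.} Then $x_{i,\emptyset}=0$, because $\emptyset\prec_i h^*$ and nothing below $h^*$ is ever demanded. So $\sum_h z_{h,i}=1$. The schools with $z_{h,i}=x_{i,h}>0$ all satisfy $h\succeq_i h^*$, and $h^*$ itself has positive mass. Hence the $\succ_i$-least school with positive assignment is $h^*$, that is, $h_{i,\z}=h^*$.

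The only mildly delicate step is confirming that $x_{i,h^*}>0$. It depends on the strict inequality $q_{i,h}>1-\varepsilon$ for all better schools, together with the continuity of the uniform budget; the finiteness of $\Hos$ then makes $\underline q$ a genuine minimum strictly above $1-\varepsilon$.

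Unlike Claim~\ref{lemma:boundary_student}, the equilibrium slack condition~(\ref{meo-condition4}) is not needed here. Lemma~\ref{lemma:equal_x_z} already gives exact equality $z_{h,i}=x_{i,h}$, which makes the argument simpler.
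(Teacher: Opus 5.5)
Your proof is correct and follows the same route as the paper's: use Lemma~\ref{lemma:equal_x_z} to identify $z_{h,i}$ with $x_{i,h}$, then split on whether $h^*=\emptyset$. You also justify two steps the paper only asserts. First, you show $x_{i,h^*}>0$ via the positive-probability budget interval $[1-\varepsilon,\underline q)$. Second, you show no demand falls strictly below $h^*$, which also derives $\sum_h z_{h,i}=1$ in the second case rather than assuming it.
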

\begin{proof}

Now we can assume $\sum_{h\in \Hos}z_{h,i} = 1$. Let us use $h^*$ to denote $i$'s most preferred school with a price of at most $1-\varepsilon$. 
\begin{itemize}
\item [\textbf{Case 1:}] $h^* = \emptyset$. This implies $\sum_{h\in \Hos}x_{i,h}<1$ and therefore $\sum_{h\in \Hos}z_{h,i}<1$ by Lemma \ref{lemma:equal_x_z}. Then, by the definition of boundary hospital, we have $h_{i,\z} = \emptyset = h^*$ as desired. 
\item [\textbf{Case 2:}] $h^*\neq \emptyset$. In this case, we will have $x_{i,h^*}>0$ and therefore $z_{h,i^*}>0$ by Lemma \ref{lemma:equal_x_z}. Then, $h^*\succeq_i h_{i,\z}$ by the definition of boundary school. And $h^*\succ h_{i,\z}$ cannot be true because that would imply $x_{i,h_{i,\z}} = z_{h_{i,\z},i} = 0$. As a result, $h_{i,\z}=h^*$ must hold as desired. 
\end{itemize}
\end{proof}

With Lemma \ref{lemma:revenue_with_delta}, we derive the following bounds on utility levels, which can be seen as a counterpart to Claim \ref{clm:prices_LEO}. 

\begin{claim}\label{claim:bound_utility}
Let $(\Hos, S, \{c_h\}_{h\in \Hos},\Vec{\alpha}:=\{\alpha_{k}\}_{h\in \Hos, k\in K_h})$ be an instance of the matching problem with $\sum_{h\in \Hos}c_h\leq |S|$. Let $(\x,\y,\z,\prices, \bold{q})$ be a MEO under $\delta>0$ and $\randbud{}=\mathbf{U}[1-\varepsilon, 1]$ with $\varepsilon>0$. Fix a school $h$, let $u^*_h$ be the smallest utility achieved by any student on the support of $\z_h$ i.e. $u^*_h:=\min_{i\in S}\{u_{h,i}:z_{h,i}>0\}$. Then, the following statements are true: 
\begin{enumerate}
\item $u^*_h\leq \frac{\big(\sum_{k\in K_h}\alpha_k\big)}{c_h} + \delta\cdot |K_h|$; \label{utility-bound-condition1}
\item for any student $i$ with $z_{h,i}<1$, $u_{h,i}\leq u_h^*$.\label{utility-bound-condition2}
\end{enumerate}
\end{claim}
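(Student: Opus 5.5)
The plan mirrors the proof of Claim~\ref{clm:prices_LEO}, with revenue replaced by the school's utility $U(\z_h)=\sum_{i\in S}u_{h,i}z_{h,i}$ and the revenue bound supplied by Lemma~\ref{lemma:revenue_with_delta}.

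\textbf{Part (\ref{utility-bound-condition1}).} I will argue by contradiction. Suppose $u^*_h > \frac{\sum_{k\in K_h}\alpha_k}{c_h} + \delta|K_h|$. By condition~(\ref{meo-condition3}) of MEO, $\z_h$ lies in $\mathcal{D}_h$, so $\sum_{i\in S} z_{h,i}=c_h$. Every $i$ with $z_{h,i}>0$ has $u_{h,i}\ge u^*_h$ by the definition of $u^*_h$. Hence
\[
U(\z_h)=\sum_{i:\,z_{h,i}>0}u_{h,i}z_{h,i}\ge u^*_h\, c_h > \sum_{k\in K_h}\alpha_k + \delta\, c_h|K_h|.
\]
This contradicts Lemma~\ref{lemma:revenue_with_delta}, which gives $U(\z_h)\le R(\z_h)\le \sum_{k\in K_h}\alpha_k+\delta c_h|K_h|$.

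\textbf{Part (\ref{utility-bound-condition2}).} Again by contradiction, suppose some $\hat i$ has $z_{h,\hat i}<1$ and $u_{h,\hat i}>u^*_h$. Let $i^*$ attain the minimum, so $z_{h,i^*}>0$ and $u_{h,i^*}=u^*_h$; note $i^*\neq\hat i$. Set $t=\min\{1-z_{h,\hat i},\, z_{h,i^*}\}>0$, and form $\hat{\z}_h$ from $\z_h$ by moving $t$ units of mass from $i^*$ to $\hat i$. This keeps the total at $c_h$ and all coordinates in $[0,1]$, so $\hat{\z}_h$ is feasible for the program defining $\mathcal{D}_h$. Its utility changes by $(u_{h,\hat i}-u^*_h)\,t>0$, which contradicts the optimality of $\z_h$ under condition~(\ref{meo-condition3}).

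\textbf{Main point to check.} The only step requiring care is that the utilities $u_{h,i}=q_{i,h}\sum_{k}p_{k,i}$ are exactly the objective coefficients in $\mathcal{D}_h$. They are, so the exchange argument goes through verbatim. The nonnegativity of $\delta$ and the price bound $q_{i,h}\le 1$ are already absorbed into Lemma~\ref{lemma:revenue_with_delta}, and no other obstacle is expected.
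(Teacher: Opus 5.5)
Your proposal is correct and matches the paper's proof: part (1) lower-bounds $U(\z_h)\ge u_h^*c_h$ and contradicts Lemma~\ref{lemma:revenue_with_delta}, and part (2) moves mass from a minimizer $i^*$ to $\hat i$, which contradicts the optimality of $\z_h$ in $\mathcal{D}_h$. Your explicit step size $t=\min\{1-z_{h,\hat i},z_{h,i^*}\}$ and the check $i^*\neq\hat i$ spell out details the paper leaves implicit.
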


\begin{proof}
The proof of Claim \ref{claim:bound_utility} is similar to the proof of Claim \ref{clm:prices_LEO}. To show (\ref{utility-bound-condition1}), we assume instead $u^*_h> \frac{\big(\sum_{k\in K_h}\alpha_k\big)+|K_h|\cdot c_h\cdot \delta}{c_h}$. Then, by the definition of $u_h^*$, it holds that $u_{h,j}>\frac{\big(\sum_{k\in K_h}\alpha_k\big)+|K_h|\cdot c_h\cdot \delta}{c_h}$ for any $j$ with $z_{h,j}>0$. As a result, the total utility from $\z_h$ becomes: 
$$U(\z_h) \ge \sum_{i \in S, z_i>0} u_h^* z_i \ge u_h^* \cdot c_h > \frac{\big(\sum_{k\in K_h}\alpha_k\big)+|K_h|\cdot c_h\cdot \delta}{c_h} \cdot c_h = \big(\sum_{k\in K_h}\alpha_k\big)+|K_h|\cdot c_h\cdot \delta,$$
which contradicts Lemma \ref{lemma:revenue_with_delta}.

To see (\ref{utility-bound-condition2}), we assume by contradiction there exists a student $i$ with $z_{h,i}<1$ and $u_{h,i}>u_h^*.$ Let $i^*$ be a student with $z_{h,i^*}>0$ and $u_{h,i^*} = u_h^*.$ Then, we can \textit{strictly} improve upon the utility of $\z_h$ by moving mass from $z_{h,i^*}$ to $z_{h,i}$, contradicting to the requirement that $\z_h$ is utility-maximizing.
\end{proof}

We now show that a MEO with $\delta>0$ and $\randbud{}=\mathbf{U}[1-\varepsilon,1]$ gives an approximately stable fractional matching in which individual rationality and no blocking are violated by small amounts. 
\begin{claim}\label{clm:MEO_approx}
Let $(\Hos, S, \{c_h\}_{h\in \Hos},\Vec{\alpha}:=\{\alpha_{k}\}_{h\in \Hos, k\in K_h})$ be an instance of the matching problem with $\sum_{h\in \Hos}c_h\leq |S|$. Let $(\x,\y,\z,\prices, \bold{q})$ be a MEO under $\delta>0$ and $\randbud{}=\mathbf{U}[1-\varepsilon, 1]$ with $\varepsilon>0$. Then, there exists a set of parameters $\{\beta_{h}\}_{h\in \Hos}$ with $\beta_h\leq \frac{\sum_{k\in K_h}\alpha_k}{c_h}+\delta\cdot |K_h|$ satisfying: 
\begin{enumerate}
\item $||\z_h||_1 = c_h$ for each school $h\in \Hos$; \label{MEO-approx-condition1}
\item $ \text{w-}\mathrm{Support}^{\vec{\alpha}}_h(i, \z) \ge \lceil \beta_h - \delta\cdot |K_h|\rceil $ for each school $h$ and student $i$ with $z_{h,i}>0$;\label{MEO-approx-condition2}
\item $\text{s-}\mathrm{Support}^{\vec{\alpha}}_h(i, \z) \le \lfloor\frac{\beta_h}{(1-\varepsilon)^2}\rfloor$ for each school $h$ and student $j$ with $z_{h,j}<1$ and $h\succ_j h_{j,\z}$. \label{MEO-approx-condition3}
\end{enumerate}
\end{claim}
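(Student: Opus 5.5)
We will mirror the proof of Claim~\ref{clm:LEO_approx}, replacing revenue by utility. For each school $h$, set $\beta_h := u^*_h$, where $u^*_h=\min_i\{u_{h,i}: z_{h,i}>0\}$ is as in Claim~\ref{claim:bound_utility}. Part (\ref{utility-bound-condition1}) of that claim then gives $\beta_h\le \sum_{k\in K_h}\alpha_k/c_h+\delta|K_h|$. Condition (\ref{MEO-approx-condition1}) is immediate. Indeed, $\z_h\in\mathcal{D}_h$ forces $\sum_i z_{h,i}=c_h$, and the assumption $\sum_h c_h\le|S|$ makes this feasible.

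The first ingredient is a committee-side analogue of Claim~\ref{lemma:boundary_student}. For $k\in K_h$, the $\alpha_k$-rank student with respect to $\z_h$ is $k$'s favourite student with $p_{k,i}\le 1-\varepsilon$. We will rerun the two-case argument of that claim using condition (\ref{meo-condition5}). The only change is that slack $\alpha_k y_{k,i}<z_{h,i}$ now forces $p_{k,i}=\delta$ instead of $0$. Since $\delta<1-\varepsilon$ for small parameters, Case~2 goes through unchanged. Case~1 uses $\|\z_h\|_1=c_h\ge\alpha_k$.

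For (\ref{MEO-approx-condition2}), take $i$ with $z_{h,i}>0$. If $k$ does not weakly approve $i$, then $i$ lies strictly below a student whose price is at most $1-\varepsilon$, so $y_{k,i}=0$ and hence $p_{k,i}=\delta$. Using $q_{i,h}\le 1$ and $p_{k,i}\le1$, we get
\[
\beta_h\le u_{h,i}\le\sum_{k} p_{k,i}\le \mathrm{w\text{-}Support}_h^{\vec\alpha}(i,\z)+\delta|K_h|.
\]
Since support is an integer, it is at least $\lceil\beta_h-\delta|K_h|\rceil$.

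For (\ref{MEO-approx-condition3}), take $j$ with $z_{h,j}<1$ and $h\succ_j h_{j,\z}$. By Lemma~\ref{lemma:boundary_school_price}, $h_{j,\z}$ is $j$'s favourite school priced at most $1-\varepsilon$, so $h\succ_j h_{j,\z}$ gives $q_{j,h}>1-\varepsilon$. This also covers the case $h_{j,\z}=\emptyset$, since $q_{j,\emptyset}=0$. Every $k$ that strongly approves $j$ has $p_{k,j}>1-\varepsilon$ by the characterization above. Part (\ref{utility-bound-condition2}) of Claim~\ref{claim:bound_utility} then gives
\[
(1-\varepsilon)^2\,\mathrm{s\text{-}Support}_h^{\vec\alpha}(j,\z)\le q_{j,h}\sum_k p_{k,j}=u_{h,j}\le u_h^*=\beta_h,
\]
and taking floors yields the bound.

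The main obstacle is the committee-side boundary characterization. Its Case~2 needs every price strictly above $1-\varepsilon$ to have no slack, which requires $\delta<1-\varepsilon$; we will impose this explicitly. Case~1 needs $\alpha_k\le c_h$, which holds because $\vec\alpha\in[0,c_h]^{|K_h|}$. We will also check that the student-side lemma is applied correctly when $j$ is only partially matched.
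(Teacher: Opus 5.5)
Your proof is correct and follows the paper's argument essentially step for step: you set $\beta_h:=u_h^*$, you split members into approvers (price at most $1$) and non-approvers (price $\delta$) for condition~(\ref{MEO-approx-condition2}), and for no blocking you derive $(1-\varepsilon)^2\,\mathrm{s\text{-}Support}_h^{\vec\alpha}(j,\z)\le u_{h,j}\le u_h^*$ from Lemma~\ref{lemma:boundary_school_price} and part~(\ref{utility-bound-condition2}) of Claim~\ref{claim:bound_utility}. You are more careful than the paper in one place: you re-derive the committee-side characterization of Claim~\ref{lemma:boundary_student} under condition~(\ref{meo-condition5}), where slack forces $p_{k,i}=\delta$ rather than $0$. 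The paper simply cites the LEO claim and leaves implicit the needed $\delta\le 1-\varepsilon$, and also $\alpha_k\ge 1$ (as in the model) for the strict inequality in the case where the $\alpha_k$-rank student lies above the price-$(1-\varepsilon)$ student. Both are harmless, since $\delta$ and $\varepsilon$ are later chosen small.
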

\begin{proof}
For each school $h$, we set $\beta_h := u_h^* =\min_{i\in S}\{u_{h,i}:z_{h,i}>0\}$ from Claim \ref{claim:bound_utility}. 


Then, we can see that condition (\ref{MEO-approx-condition1}) is trivially satisfied by the definition of MEO. We are now left to show conditions (\ref{MEO-approx-condition2}) and (\ref{MEO-approx-condition3}).

To prove (\ref{MEO-approx-condition2}), we pick an arbitrary student $i$ with $z_{h,i}>0$ and a school $h$. We use $K_{A}$ to denote the set of committee members who weakly approve $i$ and $K_{NA}$ to denote those who do not weakly approve $i$. We can use the same argument as in Lemma \ref{clm:LEO_approx} to show that the personalized price of $i$ for any committee $k\in K_{NA}$ is at most $\delta$. Then, with $q_{i,h}\leq 1$, the utility $q_{i,h}\cdot p_{k,i}$ of $i$ collected from $k\in K_{NA}$ is at most $\delta$. Furthermore, for any committee member $k\in K_{A}$, the utility $q_{i,h}\cdot p_{k,i}$ is at most $1$. Then, combining the discussion above with  $u_{h,i}\geq u_h^*$, we can derive 
$$|K_A| + \delta\cdot |K_{NA}|\geq \sum_{k \in K_A} q_{i,h}\cdot p_{k,i} + \sum_{k\in K_{NA}} q_{i,h}\cdot p_{k,i} = u_{h,i}\geq u_h^*,$$
from which we obtain $\text{w-}\mathrm{Support}^{\vec{\alpha}}_h(i, \z)=|K_A|\geq \lceil u_h^* - \delta\cdot |K_{NA}| \rceil \geq \lceil u_h^*-\delta\cdot |K_h|\rceil $. 


We now prove (\ref{MEO-approx-condition3}). Pick a student $j$ with $z_{h,j}<1$ and a school $h\succ_j h_{j,\z}$. By our characterization of boundary school in Lemma \ref{lemma:boundary_school_price}, we have $q_{j,h}>1-\varepsilon$. Then, we use $K_{h,j}$ to denote the set of voters who strongly approves $j$ with respect to $\z$. By the characterization of $\alpha_k$-rank student in Lemma \ref{lemma:boundary_student}, we know that $p_{k,j}>1-\varepsilon$ for any $k\in K_{h,j}$. Then, combining the discussion above and the result of $u_{h,j}\leq u_h^*$ as given Claim \ref{claim:bound_utility}, we can derive
$$(1-\varepsilon)^2\cdot |K_{h,j}|=\sum_{k\in K_{h,j}}(1-\varepsilon)^2\leq \sum_{k\in K_{h,j}}q_{j,h}\cdot p_{k,j}\leq \sum_{k\in K_h}q_{j,h}\cdot p_{k,j} = u_{h,j}\leq u_h^*,$$
from which we obtain $|K_{h,j}|\leq \lfloor\frac{u_h^*}{(1-\varepsilon)^2}\rfloor$ as desired. 
\end{proof}

The existence of stable fractional matching then follows from Claim \ref{clm:MEO_approx} by picking small enough $\varepsilon$ and $\delta$. 

\Xomit{
\begin{theorem}\label{thm:matching_exists}
Let $(\Hos, S, \{c_h\}_{h\in \Hos},\Vec{\alpha}:=\{\alpha_{k}\}_{k\in K})$ be an instance of the matching problem with $\sum_{h\in \Hos}c_h\leq |S|$. Then, there exist $\varepsilon, \delta>0$ s.t. the fractional matching $\z:=\{\z_h\}_{h\in \Hos}$ of the corresponding MEO is stable with respect to parameters $\{\beta_h\}_{h\in \Hos}$ satisfying $\beta_h\leq \lceil\frac{\sum_{k\in K_h}\alpha_k}{c_h }\rceil$.
\end{theorem}
}
\begin{proof}[Proof of Theorem~\ref{thm:matching_exists}]

Finally, Theorem~\ref{thm:matching_exists} follows from Claim~\ref{clm:MEO_approx} by choosing $\varepsilon,\delta>0$ sufficiently small. Let $M=\max_{h\in \Hos} |K_h|$. Then, we can pick $\varepsilon>0$ and $\delta>0$ so that,  
\[
\frac{M}{(1-\varepsilon)^2} - M \leq \frac{1}{10} \quad \text{and} \quad 
\delta\leq \frac{1}{10M} \quad   \text{and} \quad \frac{1}{(1-\varepsilon)^2}\leq 2.
\]  
Let $\{\beta_h\}_{h\in \Hos}$ be the set of parameters in Claim~\ref{clm:MEO_approx}, and let $(\x,\y,\z,\prices,\bold{q})$ be the corresponding MEO. Next, we define  
\[
\beta^*_h := \left\lfloor \frac{\beta_h}{(1-\varepsilon)^2} \right\rfloor.
\]  
We can compute, with our choices of parameters, that for each $h\in \Hos$,  
\begin{align*}
 \frac{\beta_h}{(1-\varepsilon)^2} - \frac{\sum_{k\in K_h}\alpha_k}{c_h }& \leq \frac{1}{(1-\varepsilon)^2}\cdot \big( \frac{\sum_{k\in K_h}\alpha_k}{c_h } + \delta\cdot |K_h|  \big)- \frac{\sum_{k\in K_h}\alpha_k}{c_h }   \\
&\leq \frac{\sum_{k\in K_h}\alpha_k}{c_h }\cdot (\frac{1}{(1-\varepsilon)^2}-1) + \frac{\delta\cdot |K_h|}{(1-\varepsilon)^2}\\
&\leq M\cdot (\frac{1}{(1-\varepsilon)^2}-1)+ \frac{\delta\cdot |K_h|}{(1-\varepsilon)^2}\\
&<1. 
 \end{align*}
It then follows $\beta^*_h = \left\lfloor \frac{\beta_h}{(1-\varepsilon)^2} \right\rfloor\leq \lceil\frac{\sum_{k\in K_h}\alpha_k}{c_h }\rceil$ as desired, and we are left to verify that $\z$ is stable with respect to $\beta^*$. 

\begin{itemize}
    \item Condition~(\ref{MEO-approx-condition1}) of Claim~\ref{clm:MEO_approx} guarantees that each school is always of full capacity. 
    \item Condition~(\ref{MEO-approx-condition2}) of Claim~\ref{clm:MEO_approx} guarantees individual rationality. For each $i$ with $z_{h,i}>0$, we have 
    $ \text{w-}\mathrm{Support}^{\vec{\alpha}}_h(i, \z) \ge \lceil \beta_h - \delta\cdot |K_h|\rceil $. Then, by our choice of $\delta$, we can compute 
    \begin{align*}
    \frac{\beta_h}{(1-\varepsilon)^2} - (\beta_h - \delta\cdot |K_h|)& \leq \beta_h\cdot (\frac{1}{(1-\varepsilon)^2}-1) +\delta\cdot |K_h| \\
    &\leq \big( \frac{\sum_{k\in K_h}\alpha_k}{c_h } + \delta\cdot |K_h|  \big)\cdot (\frac{1}{(1-\varepsilon)^2}-1) + \delta\cdot |K_h| \\
    &\leq \frac{\sum_{k\in K_h}\alpha_k}{c_h }\cdot (\frac{1}{(1-\varepsilon)^2}-1) + \frac{\delta\cdot |K_h|}{(1-\varepsilon)^2}\\
    &\leq M\cdot (\frac{1}{(1-\varepsilon)^2}-1)+ \frac{\delta\cdot |K_h|}{(1-\varepsilon)^2}\\
&<1. 
    \end{align*}
This implies $\lceil \beta_h-\delta\cdot |K_h|\rceil \geq \lfloor \frac{\beta_h}{(1-\varepsilon)^2}\rfloor$ and therefore 
  $ \text{w-}\mathrm{Support}^{\vec{\alpha}}_h(i, \z) \ge \beta_h^*$ as desired. 
  \item No blocking follows directly from Condition~(\ref{MEO-approx-condition3}) of Claim~\ref{clm:MEO_approx}.
\end{itemize}
\end{proof}

\section{Rounding Lemmas for Stable Matching}\label{app:roundmatch}

\subsection{Proof of Lemma~\ref{lemma:round_matching}}
\begin{proof}
We start by observing that Claims~\ref{cor:support_inside} and~\ref{cor:support_outside} still hold true in this setting. Then, individual rationality follows from Claim~\ref{cor:support_inside} as in the proof of Lemma \ref{lemma:round}. 

Then, we are left to show no blocking for $\z$ with respect to  $\{\alpha'_h\}_{h\in \Hos}$ and $\{\beta_h\}_{h\in \Hos}$. In particular, it suffices to show that if a school-student pair $(h,i)$ with $i\notin M_h$ blocks $M$, then $(h,i)$ must block $\z$. By the definition of blocking, we have $\mathrm{Support}^{\vec{\alpha}'}_h(i, M_h)>\beta_h$. Then, it follows from Claim \ref{cor:support_outside} that $\text{s-}\mathrm{Support}^{\vec{\alpha}}_h(i, \z)>\beta_h$. We then break into two cases based on $M_i$ and show that $h\succ_i h_{i,\z}$ both cases. 
\begin{itemize}
\item [\textbf{Case 1}:]$M_i = \emptyset$. This implies that the amount of weight with which $i$ is mapped to $\emptyset$ in $\z$ is \textit{strictly} positive, and therefore $h\succ_i h_{i,\z}$. 
\item [\textbf{Case 2}:] $M_i\neq \emptyset$. According to the assumption of this Lemma, we have $z_{i,M_i}>0$. Since $h_{i,\z}$ is by definition the least favorite school of $i$ with a strictly positive weight, it must hold $M_i\succeq_i h_{i,\z}$. It then follows from transitivity that $h\succ_i h_{i,\z}$.  
\end{itemize}
Consequently, given that there is no blocking pair for $\z$, there cannot be any blocking pair for $M$ and therefore $M$ is stable. 
\end{proof}

\subsection{Proof of Lemma \ref{lemma:round_matching_violation}}
\label{app:round_matching_violation}
\begin{proof}
We round the fractional solution $\{\z_h\}_{h\in\Hos}$ to an integral vector $\{\z'_h\}_{h\in\Hos}$ using iterative rounding. 
First, we fix all variables $z_{h,i}\in\{0,1\}$ by setting $z'_{h,i}:=z_{h,i}$ and restrict attention to the remaining variables. Hence, without loss of generality, we assume $0<z_{h,i}<1$ for all remaining $i\in S$ and $h\in\Hos$.

For each school $h\in\Hos$ and committee member $k\in K_h$, define
\[
a_k := \Big\lfloor \sum_{i\in U_k(\z,\vec{\alpha})} z_{h,i} \Big\rfloor,
\qquad
b_k := \Big\lceil \sum_{i\in U_k(\z,\vec{\alpha})} z_{h,i} \Big\rceil .
\]
Consider the following linear program:
\begin{align*}
\sum_{h\in\Hos} x_{h,i} &= 1, 
&& \forall i\in S \text{ with } \sum_{h\in\Hos} z_{h,i}=1 
&& \text{(student demand)},\\
\sum_{i\in S} x_{h,i} &= c_h, 
&& \forall h\in\Hos 
&& \text{(school capacity)},\\
a_k \le \sum_{i\in U_k(\z,\vec{\alpha})} x_{h,i} &\le b_k,
&& \forall h\in\Hos,\ k\in K_h 
&& \text{(committee demand)},\\
0 \le x_{h,i} &\le 1,
&& \forall h\in\Hos,\ i\in S 
&& \text{(bounds)}.
\end{align*}

At each step, we apply the standard iterative rounding procedure
\citep{book11iterative,allocation2021,nguyen2016assignment} to obtain an integral vector $\{\z'_h\}_{h\in\Hos}$. The resulting solution satisfies all student demand constraints exactly and violates each school capacity by at most $2|K_h|+1$ and committee demand constraint by at most $2|K_h|+2$. We briefly sketch the algorithm for completeness.

\begin{itemize}
\item Find an extreme point $\{\x_h\}_{h\in\Hos}$ of the current LP with at least one integral coordinate.
\item For each integral coordinate $x_{h,i}=0$, set $z'_{h,i}=0$ and remove the variable $x_{h,i}$ from all constraints.
\item For each integral coordinate $x_{h,i}=1$, do the following:
\begin{itemize}
\item set $z'_{h,i}=1$;
\item remove the student demand constraint corresponding to student $i$;
\item update the school capacity constraint for school $h$ and all committee demand constraints containing $x_{h,i}$;
\item remove the variable $x_{h,i}$ from all constraints.
\end{itemize}
\item If a school capacity constraint for some $h$ contains at most $2|K_h|+1$ remaining variables, delete the constraint.
\item If a committee demand constraint for some $(h,k)$ contains at most $2|K_h|+2$ remaining variables, delete the constraint.
\item Repeat until either (i) no variables remain, or (ii) no school capacity or committee demand constraints remain.
\end{itemize}

If the algorithm terminates, the resulting vector $\{\z'_h\}_{h\in\Hos}$ satisfies the requirements of Lemmas~\ref{lemma:round_matching_violation} and~\ref{lemma:round_matching}. Thus, it remains to show that the algorithm indeed terminates.

Assume, toward a contradiction, that the algorithm becomes stuck. Then the current solution is an extreme point in which all variables are strictly fractional, i.e., $0<x_{h,i}<1$ for every remaining variable, and no constraint can be deleted.

We prove that this situation cannot occur using a token-counting argument. The key observation is the following fact:

\begin{quote}
In an extreme point  of a in a linear program, the number of variables equals the number of linearly independent binding constraints.
\end{quote}

We show that, in our setting, the number of variables must strictly exceed the number of active constraints, yielding a contradiction.

Assign one token to each variable $x_{h,i}$ and distribute it as follows:
\begin{itemize}
\item give $\tfrac12$ token to the student demand constraint corresponding to student $i$;
\item give $\tfrac{1}{2(|K_h|+1)}$ token to the school capacity constraint corresponding to school $h$;
\item give $\tfrac{1}{2(|K_h|+1)}$ token to each committee demand constraint for $k\in K_h$ such that $i\in U_k(\z,\vec{\alpha})$.
\end{itemize}
At each iteration, every active student demand constraint receives at least one token unless only a single variable remains. In the latter case, that variable must be integral, and the algorithm can proceed.

Now fix a school $h$. If the school capacity constraint for $h$ is active, then it must contain at least $2(|K_h|+1)$ remaining variables and therefore receives at least one token. Similarly, for each active committee demand constraint associated with $h$, the constraint must contain more than $2(|K_h|+1)$ remaining variables and therefore receives more than one token. Hence, if any committee demand constraint remains active, the total number of variables strictly exceeds the number of active constraints, contradicting the assumption that the current solution is an extreme point.

Finally, when all committee demand constraints have been removed, the remaining LP corresponds to a matching polytope, whose extreme points are integral. This completes the proof.

\end{proof}

\end{document}